\newcommand{\BlackBoxes}{\global\overfullrule5pt}
\newcommand{\R}{\mathbb{R}}
\newcommand{\N}{\mathbb{N}}
\newcommand{\Eop}{\mathbb{E}}
\newcommand{\Pop}{\mathbb{P}}
\newcommand{\FC}{\mathcal{F}}
\newcommand{\h}{{\text H}}
\newtheorem{theorem}{Theorem}
\newtheorem{lemma}[theorem]{Lemma}
\theoremstyle{definition}
\newtheorem{remark}[theorem]{Remark}
\numberwithin{equation}{section} \numberwithin{theorem}{section}
\def\0{\kern0pt\-\nobreak\hskip0pt\relax}
 \def\@serieslogo{%
 \vbox to\headheight{%
 \parindent\z@ \fontsize{6}{7\p@}\selectfont
 \vss}}}
\def\makeoverbar#1#2#3#4#5#6#7{%
 \setbox0=\hbox{$\m@th#2\mkern#5mu{{}#3{}}\mkern#6mu$}%
 \setbox1=\null \dimen@=#4\fontdimen8#13 \dimen@=3.5\dimen@
 \advance\dimen@ by \ht0 \dimen@=-#7\dimen@ \advance\dimen@ by \wd0
 \ht1=\ht0 \dp1=\dp0 \wd1=\dimen@
 \dimen@=\fontdimen8#13 \fontdimen8#13=#4\fontdimen8#13
 \rlap{\hbox to \wd0{$\m@th\hss#2{\overline{\box1}}\mkern#5mu$}}
 \fontdimen8#13=\dimen@}
\def\mylabel#1#2{{\def\@currentlabel{#2}\label{#1}}}
\begin{document}


\makeatletter \providecommand\@dotsep{5} \makeatother

\title[Portfolio Optimization in Fractional and Rough Heston Models]{Portfolio Optimization in Fractional \\and Rough Heston Models}

\author[N. \smash{B\"auerle}]{Nicole B\"auerle${}^*$}
\address[N. B\"auerle]{Department of Mathematics,
Karlsruhe Institute of Technology (KIT), D-76128 Karlsruhe, Germany}

\email{nicole.baeuerle@kit.edu}

\author[S. \smash{Desmettre}]{Sascha Desmettre${}^\dagger$,$^\ddagger$ }
\address[S. Desmettre]{Department of Mathematics,
TU Kaiserslautern (TUK), D-67663 Kaiserslautern, Germany and Institute for Mathematics and Scientific Computing, University of Graz, Heinrichstra\ss{}e 36, AT-8010 Graz, Austria}

\email{desmettre@mathematik.uni-kl.de ; sascha.desmettre@uni-graz.at}

\thanks{${}^*$ Department of Mathematics,  
Karlsruhe Institute of Technology (KIT), D-76128 Karlsruhe, Germany}
\thanks{${}^\dagger$ Department of Mathematics, TU Kaiserslautern (TUK), D-67663 Kaiserslautern, Germany}
\thanks{${}^\dagger$ Institute for Mathematics and Scientific Computing, University of Graz, AT-8010 Graz, Austria}

\begin{abstract}
We consider a fractional version of the Heston volatility model which is inspired by \cite{GJR14}. Within this model we treat portfolio optimization problems for power utility functions. Using a suitable representation of the fractional part, followed by a reasonable approximation we show that it is possible to cast the problem into the classical stochastic control framework. This approach is generic for fractional processes. We derive explicit solutions and obtain as a by-product the Laplace transform of the integrated volatility. In order to get rid of some undesirable features we introduce a new model for the rough path scenario which is based on the Marchaud fractional derivative. We provide a numerical study to underline our results.
\end{abstract}
\maketitle

\vspace{0.5cm}
\begin{minipage}{14cm}
{\small
\begin{description}
\item[\rm \textsc{ Key words} ]
{\small Fractional stochastic processes; Heston model; Rough paths; Stochastic control; Hamilton-Jacobi-Bellman equation; Feynman-Kac respresentation}
\end{description}
}
\end{minipage}

\section{Introduction}\label{sec:intro}
The stochastic volatility model of \cite{H93} is nowadays a standard model for the pricing of financial derivatives which is underlined by the tremendous amount of related literature; compare e.g. the extensive textbook \cite{R13} and the references therein for a comprehensive overview. In the context of continuous-time portfolio optimization, which is concerned with finding a trading strategy that maximizes expected utility from terminal wealth, the Heston model has among others been dealt with in \cite{Z01,CV05,K05,L01,BL13} using stochastic control methods and in \cite{KMK10} using martingale methods.

Fractional variants of the Heston model, which use a fractional Brownian motion with Hurst index $H >1/2$ as driver of the volatility process, and thus modeling a long term memory effect, have been studied by a large strand of literature, including e.g. \cite{CCR12,LM16}.

Initiated by the observation that volatility is rough in \cite{GJR18}, the current literature takes a new point of view: Rough Heston models, which use a fractional Brownian motion with Hurst index $H <1/2$ as driver of the volatility process, incorporating a better fit of implied volatility surfaces as shown in \cite{GJR18}, have become very popular; compare e.g. \cite{GJR14,ER16}. Subsequently many papers concerning option pricing, simulation of paths, asymptotics, and the foundations of fractional and rough environments have emerged; compare \cite{BFG16,ER17,HJT18,NS16,FZ16,Gea18} to name a few. This increasing importance of rough path theory in general is also reflected by the well-known monographs \cite{LQ02,FH14}.

Portfolio optimization in fractional and rough models, has on the other hand for a long time gained little attention. In an early work, \cite{SVZ07} deals with the Merton problem in a fractional Black-Scholes market. However, with the increasing popularity of rough volatility models, stochastic control methods and portfolio optimization in these models has recently been addressed as follows: For instance, in \cite{DFG17} it is shown for a class of controlled differential equations driven by rough paths that the value function satisfies a Hamilton-Jacobi-Bellman type equation. In a concrete optimal portfolio setting, \cite{FH18a,FH18b} use martingale distortion representations of the value function to establish a first-order approximation of the optimal value, when the return and volatility of the underlying asset are functions of a fractional Ornstein-Uhlenbeck process. 

In this paper we use the classical stochastic control approach and solve the optimal portfolio problem of an investor with a power-utility function in fractional and rough Heston models. Of course a direct application of the stochastic control method is not possible since the respective stochastic processes (volatility and stock price process) are non-Markovian; compare e.g. \cite{N06}.  However, we show by means of a suitable representation of the fractional part followed by a reasonable approximation that it is possible to cast the problem into the classical framework. 
Our calculations are therefore based on a finite dimensional approximation of the underlying volatility process, inspired by the affine representation of fractional processes in \cite{CCM00,HS16}. Solutions to the original optimization problems are then obtained as the limit of the approximated problems. This procedure gives rise to a numerical solution method for these kind of problems. Moreover, as a by-product, we deduce Feynman-Kac type formulas in the fractional and the rough case, which characterize the solutions of the associated partial differential equations as the Laplace transform of the integrated volatility process. In the rough case we use a new model for the volatility which is based on the Marchaud fractional derivative and which seems to remedy some shortcomings of previous models. Indeed it turns out that one has to be very careful with the usage of fractional volatility models for portfolio optimization. This is on one hand due to the general behavior of these models but also due to the dependency properties.

The outline of our paper is as follows: Section~2 introduces the financial market model and the optimization problem in case of a Hurst parameter $H\in(\frac12,1)$. We use here the fractional Riemann-Liouville integral (compare \cite{GJR14}) for the volatility.  In Section~3, we provide a finite dimensional approximation of the optimization problem, derive a solution by solving the corresponding Hamilton-Jacobi-Bellman equation, and verify that the obtained solution is indeed optimal. Section~4 then shows that the solution of the approximated fractional model converges to the solution of the original fractional model. In Section~5 we put emphasis on the definition of a suitable rough Heston model for Hurst parameter $H\in(0,\frac14)$ which is based on the Marchaud fractional derivative and solve the corresponding optimal investment problem. Section~6 then illustrates the fractional and rough Heston models and assesses the behavior of the deduced optimal investment strategies. The appendix contains some proofs and auxiliary results.

\section{The Financial Market Model and the Optimization Problem}\label{sec:mod}\noindent
Suppose that $(\Omega, \FC, (\FC_t)_{ 0\le t\le T}, \Pop)$ is a filtered probability space and $T>0$ is a fixed time horizon. We
consider a financial market with one bond and one risky asset. The bond evolves according to
\begin{equation} d S_t^0 = r S_t^0 dt\end{equation}  with $r>0$ being the
interest rate. The stock price process $S=(S_t)$ is given by
\begin{equation}\label{stockprice} d S_t = S_t \left( (r+\lambda \nu_t) dt + \sqrt{\nu_t}d B_t^S\right) \end{equation}
where $(B^S_t)$ is an $(\FC_t)$-Brownian motion and $\lambda>0$ a constant. The volatility process $(\nu_t)$ is a 'fractional' Cox-Ingersoll-Ross process with  $\nu_0 := v_0\ge 0$:
\begin{equation}\label{vola}
\nu_t = v_0 + \frac{1}{\Gamma(\alpha)} \int_0^t (t-s)^{\alpha-1} Z_sds \end{equation}
where  $\alpha = 2\h -1  \in (0,1)$ with Hurst index $\h \in \left(\tfrac{1}{2},1\right)$ and
\begin{equation}\label{eq:Z}
d Z_t =\kappa (\theta -Z_t) dt + \sigma \sqrt{Z_t} d B_t^Z
\end{equation}
with $Z_0:= z_0\ge 0$ is the usual Cox-Ingersoll-Ross model. The constants $\kappa, \theta, \sigma$ are assumed to be positive and satisfy the Feller condition $2\kappa\theta \ge \sigma^2$. This implies that $(Z_t)$ stays strictly positive with probability one. The 'rough volatility' case $H\in(0,\frac14)$ will be considered later. Also $(B^Z_t)$ is an $(\FC_t)$-Brownian motion. We assume that $(B^S_t)$ and $(B^Z_t)$ are correlated  with correlation $\rho\in (-1,1)$, i.e. $\langle B^S,B^Z\rangle_t = \rho t$. Note that the integral which appears in \eqref{vola} is well-defined for $\alpha  \in (0,1)$ since $\Gamma(z) := \int_0^\infty e^{-t} t^{z-1}dt$ is well-defined for $z>0$. The operator
\begin{equation}
I_{0}^\alpha f(t) := \frac{1}{\Gamma(\alpha)}\int_0^t (t-s)^{\alpha-1} f(s) ds
\end{equation}
is the classical left fractional Riemann-Liouville integral of order $\alpha$, also called Euler transformation (see Definition 2.1 in \cite{SKM93}). Among others it has the property that
\begin{equation}
\lim_{\alpha\to 0} I_{0}^\alpha f(t) = f(t)
\end{equation}
pointwise (see e.g. Theorem 2.7. in \cite{SKM93}) which means that in the limiting case $\alpha\downarrow 0$ we obtain the classical Heston model of~\cite{H93}.
 Since $Z_t \ge 0$ almost surely we obtain that $\nu_t \ge \nu_0$ almost surely for all $t\ge 0$. It can be shown (see \cite{GJR14}) for $t,h\ge 0$ that 
\begin{equation}\label{eq:long_range_1}
Cov(\nu_{t+h},\nu_t) = \frac{1}{\Gamma^2(\alpha)} \int_0^{t+h}\int_0^t(t-s)^{\alpha-1}(t+h-u) ^{\alpha-1} Cov(Z_s,Z_u)dsdu
\end{equation}
where 
\begin{equation}\label{eq:long_range_2}
Cov(Z_s,Z_u) = \sigma^2 \Big(\frac{\theta}{2\kappa}  e^{-\kappa |s-u|}+ \frac{z_0-\theta}{\kappa} e^{- \kappa(s\wedge u)} -\frac1{2\kappa} (2z_0-\theta)e^{-\kappa(s+u)} \Big)
\end{equation}
and $s\wedge u = \min(s,u)$.
This implies that the volatility process $(\nu_t)$ possesses long-range dependence. Moreover, the operator $I_0^\alpha$ has a smoothing property (for simulation results see Section~\ref{sec:simulation}).

\begin{remark}
\cite{ER17} give an alternative formulation of a fractional/rough Heston model, which is closer to the Mandelbrot-van Ness representation of fractional Brownian motion than the fractional CIR process given by (2.3) and (2.4). We opted for our formulation close to \cite{GJR14}, as it turned out that the Hamilton Jacobi Bellmann equations corresponding to this problem are more tractable as the ones which correspond to the model in \cite{ER17}.
\end{remark}

 Using the fact that for $\alpha\in (0,1)$
\begin{equation}\label{eq:Gamma} \frac{(t-s)^{\alpha-1}}{\Gamma(\alpha)} = \int_0^\infty e^{-(t-s)x} \mu(dx),\quad\mbox{with  } \mu(dx)= \frac{dx}{x^\alpha \Gamma(\alpha)\Gamma(1-\alpha)}
\end{equation}
we obtain with the Fubini Theorem 
\begin{eqnarray*}
\nonumber \nu_t &=& v_0 +\int_0^t \int_0^\infty e^{-(t-s)x} Z_s \mu(dx)ds \\
\nonumber &=& v_0 +\int_0^\infty  \int_0^t  e^{-(t-s)x} Z_s ds  \mu(dx)   \\ \label{eq:Vtrepresentation}
   &=& v_0 +\int_0^\infty  Y_t^x  \mu(dx),
\end{eqnarray*}
where 
\begin{equation}Y_t^x := \int_0^t  e^{-(t-s)x} Z_s ds.
\end{equation}

Using partial integration we see that $(Y_t^x)$ satisfies the stochastic differential equation
\begin{equation}
dY_t^x =  (Z_t- x Y_t^x) dt.
\end{equation}

The optimization problem is to find self-financing investment strategies in this market that maximize the expected utility from terminal wealth.
As utility function we choose the power utility function $U(x)= \frac1\gamma x^\gamma$ with $\gamma<1,\gamma \neq 0$. The parameter $\gamma$ represents the risk aversion of the investor. Smaller $\gamma$ correspond to higher risk aversion. In what follows we denote by $\pi_t\in\R$ the {\em fraction of wealth} invested in the stock at time $t$. $1-\pi_t$ is then the fraction of wealth invested in the bond at time $t$. If $\pi_t<0$, then this means that the
stock is sold short and $\pi_t>1$ corresponds to a credit. The process $\pi=(\pi_t)$ is called {\em portfolio strategy}. An admissible portfolio strategy has to be an $(\FC_t)$-adapted process such that all integrals exist. The {\em wealth process} under an admissible portfolio strategy $\pi$ is given by the solution of the stochastic differential
equation
\begin{eqnarray}\label{eq:wealth_process}
d W_t^\pi &=& W_{t}^\pi (r+ \pi_t \lambda \nu_t)dt + W_t^\pi\pi_t \sqrt{\nu_t}d B_t^S,
\end{eqnarray} where we assume that ${W}_0=w_0>0$ is the given initial
wealth. The {\em optimization problem } is  defined by
\begin{eqnarray}\label{eq:optprob}
&& V(w_0,v_0,z_0) := \sup_{\pi} \Eop_{w_0,v_0,z_0}\left[\frac1\gamma\big({W}_T^\pi\big)^\gamma\right]
\end{eqnarray}
where $\Eop_{w_0,v_0,z_0}$ is the conditional expectation given $W_0= w_0, \nu_0=v_0, Z_0=z_0$ and the supremum is taken over all admissible portfolio strategies. A portfolio strategy $\pi^\ast$ is {\em optimal} if it attains the supremum. Seen as an optimization problem with state process $(W_t^\pi)$ this problem is non-Markovian and the standard stochastic control approach cannot be applied. By the alternative representation of the volatility in   \eqref{eq:Vtrepresentation}, the problem can be 'Markovianized', however  only with infinite dimensional process $(W_t,Y_t^x,Z_t)$ where $x>0$. Thus, we solve the problem by first looking at a finite dimensional approximation.

\begin{remark}
Of course it is reasonable to assume that the stock price process is observable for the decision maker. By observing $(S_t)$ we are also able to observe the quadratic variation
\begin{equation}
\left<S\right>_t = \int_0^t S_u^2 \nu_u du,
\end{equation}
which implies that $(\nu_t)$ is observable. Using \eqref{vola} this implies that $(Z_t)$ is observable which defines $(Y_t^x)$. Thus it is realistic to assume the knowledge of $\nu_t, Y_t^x$ and $Z_t$ in this model.  
\end{remark}

\section{A finite dimensional Approximation of the Optimization Problem}\label{sec:approx}\noindent
The idea is to consider the representation 
\begin{equation}
 \nu_t =  v_0 +\int_0^\infty  Y_t^x  \mu(dx), \quad \mbox{ with } \quad Y_t^x := \int_0^t  e^{-(t-s)x} Z_s ds
\end{equation}
of the fractional volatility process $(\nu_t)$ given by the dynamics 
\begin{equation}
\nu_t = v_0 + \frac{1}{\Gamma(\alpha)} \int_0^t (t-s)^{\alpha-1} Z_s ds \quad \mbox{ with } \quad d Z_t =\kappa (\theta -Z_t) dt + \sigma \sqrt{Z_t} d B_t^Z\,,
\end{equation}
and to approximate $\mu$ by a discrete measure with a finite number of atoms. Therefore we use a quantization of $\mu$ which is defined as follows (this has also been used in \cite{CCM00}): 

Let $\mathcal{Z}^{n} := \{0<\xi_0^{n}<\ldots <\xi_n^{n}<\infty\}$ and define the barycenter of $\mu$ on the respective intervals $(\xi_i^{n},\xi_{i+1}^{n})$ for $ i=0,\ldots,n-1$ by
\begin{equation}
x_{i+1}^{n} := \frac{\int_{\xi_i^{n}}^{\xi_{i+1}^{n}}x\mu(dx)}{\int_{\xi_i^{n}}^{\xi_{i+1}^{n}}\mu(dx)},
\end{equation} 
and the mass on the atom for $ i=0,\ldots,n-1$ by
\begin{equation}
q_{i+1}^{n} := \int_{\xi_i^{n}}^{\xi_{i+1}^{n}}\mu(dx).
\end{equation}
The corresponding measure is given by 
\begin{equation}\label{eq:mumeasure}
\mu^{n} := \sum_{i=1}^n q_i^{n} \delta_{x_i^{n}}
\end{equation} 
where $\delta_x$ is the Dirac measure on $x$. In what follows we assume that $\mathcal{Z}^{n} $ satisfies:
\begin{itemize}
\item[(i)] $\xi_0^{n} \to 0$ and $\xi_n^{n} \to \infty$ for $n\to\infty$,
\item[(ii)] $\delta(\mathcal{Z}^{n} ) := \max_{i=0,\ldots,n-1} |\xi_{i+1}^{n}-\xi_{i}^{n}| \to 0$ for $n\to\infty$,
\item[(iii)] $\mathcal{Z}^{n}\subset \mathcal{Z}^{n+1}$.
\end{itemize}
With these definitions we obtain:

\begin{lemma}\label{lem:conv1} Suppose $f\in L^1(\mu)$ and the sequence $(\mathcal{Z}^{n})$ satisfies (i)-(iii) above. Then
\begin{equation}
 \int f d \mu^{n} \to \int fd\mu, \quad n\to \infty\end{equation}
 if  $f\ge 0$ and $f$ is convex. Moreover, the convergence is monotone increasing.
\end{lemma}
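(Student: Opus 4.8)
The plan is to view each quantity $\int f\,d\mu^n=\sum_{i=1}^n q_i^n f(x_i^n)$ as a quadrature rule whose nodes are the barycenters of $\mu$ on the partition cells, and to exploit the interplay between convexity (which forces a one-sided error) and the approximation properties (i)--(iii). The key structural fact I would rely on throughout is that the node $x_{i+1}^n$ is by construction the \emph{mean} of $\mu$ on the cell $(\xi_i^n,\xi_{i+1}^n)$, so both Jensen's inequality and the nesting step below become exact at the level of these cells.

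First I would establish the uniform upper bound $\int f\,d\mu^n\le\int f\,d\mu$. Fix a cell $(\xi_i^n,\xi_{i+1}^n)$ and regard $(q_{i+1}^n)^{-1}\mu$ restricted to that cell as a probability measure, whose mean is precisely $x_{i+1}^n$. Since $f\in L^1(\mu)$ is convex, Jensen's inequality gives $q_{i+1}^n f(x_{i+1}^n)\le\int_{\xi_i^n}^{\xi_{i+1}^n}f\,d\mu$. Summing over $i=0,\dots,n-1$ yields $\int f\,d\mu^n\le\int_{\xi_0^n}^{\xi_n^n}f\,d\mu\le\int f\,d\mu$, where the last inequality uses $f\ge0$.

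For the monotonicity I would invoke the nesting property (iii). Because $\mathcal{Z}^n\subset\mathcal{Z}^{n+1}$, every cell $I$ of $\mathcal{Z}^n$ is a disjoint union of cells $J_1,\dots,J_k$ of $\mathcal{Z}^{n+1}$, and the barycenter $b$ of $\mu|_I$ is the convex combination $b=\sum_j(\mu(J_j)/\mu(I))\,b_j$ of the sub-barycenters $b_j$, since barycenters add up through $\int_I x\,d\mu=\sum_j\int_{J_j}x\,d\mu$. Convexity then gives $\mu(I)f(b)\le\sum_j\mu(J_j)f(b_j)$, so the contribution of each old cell does not exceed the total contribution of the cells refining it; any extra cells of $\mathcal{Z}^{n+1}$ lying outside $(\xi_0^n,\xi_n^n)$ contribute only nonnegative terms because $f\ge0$. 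Summing shows $\int f\,d\mu^n\le\int f\,d\mu^{n+1}$.

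It remains to identify the limit. Setting $g_n:=f(x_{i+1}^n)$ on $(\xi_i^n,\xi_{i+1}^n)$ and $g_n:=0$ outside $(\xi_0^n,\xi_n^n)$, we have $\int f\,d\mu^n=\int g_n\,d\mu$ with $g_n\ge0$. For fixed $x>0$, property (i) places $x$ inside the range for all large $n$, and property (ii) forces the barycenter of the cell containing $x$ to lie within $\delta(\mathcal{Z}^n)\to0$ of $x$; continuity of the convex function $f$ then gives $g_n(x)\to f(x)$ pointwise $\mu$-almost everywhere. Fatou's lemma yields $\int f\,d\mu\le\liminf_n\int g_n\,d\mu$, while the upper bound already proved gives $\limsup_n\int g_n\,d\mu\le\int f\,d\mu$, and squeezing proves $\int f\,d\mu^n\to\int f\,d\mu$. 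The main obstacle here is conceptual rather than computational: convexity produces an inequality in the \emph{wrong} direction to pin down the limit from above, so the matching lower bound must be recovered independently via pointwise convergence and Fatou. What makes the whole argument work cleanly is the barycenter identity, which renders both the Jensen bound and the refinement step exact instead of merely approximate, so that no delicate estimate on the convexity gap across the singularity of $\mu$ at the origin is ever needed.
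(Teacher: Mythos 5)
Your proof is correct, and much of its skeleton coincides with the paper's: your step function $g_n$ is exactly the paper's composition $f(t^{n}(\cdot))$ with $t^{n}(x)=\sum_{i=0}^{n-1} x_{i+1}^{n}1_{[\xi_{i}^{n},\xi_{i+1}^{n}]}(x)$, and your monotonicity argument (the barycenter of an old cell is the $\mu$-weighted convex combination of the barycenters of the refining cells, then convexity, with cells outside $(\xi_0^{n},\xi_{n}^{n})$ handled by $f\ge 0$) is precisely the paper's, except that the paper inserts one point at a time whereas you treat a general nested refinement in one stroke. The genuine divergence is in how the limit is identified. The paper concludes $\int_{\xi_0^{n}}^{\xi_{n}^{n}} f(t^{n}(x))\,\mu(dx)\to\int f\,d\mu$ from $t^{n}(x)\to x$ together with the remark that $f$, being continuous, is bounded on compact intervals; this is terse, and in fact delicate, because $\mu(dx)\propto x^{-\alpha}dx$ has infinite mass at infinity and the integration domain $[\xi_0^{n},\xi_{n}^{n}]$ grows with $n$, so boundedness on compacta does not by itself control the tail contribution (nor the contribution near $0$, where a convex $f\in L^1(\mu)$ may still be unbounded). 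Your squeeze sidesteps any domination argument: the per-cell Jensen inequality gives the uniform one-sided bound $\int f\,d\mu^{n}\le\int f\,d\mu$ — exact because the quadrature nodes are barycenters — and Fatou applied to $g_n\ge 0$ supplies the matching lower bound. So your route buys rigor precisely where the paper is thinnest, at the modest cost of invoking convexity a second time, while the paper's route is shorter on paper but leaves the tail and singularity control implicit.
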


The proof of this lemma and all other longer proofs are deferred to the appendix.
Now let us denote the finite dimensional approximate volatility process by 
\begin{equation}\label{eq:nutn}
\nu_t^n := v_0 + \int_0^\infty  Y_t^x  \mu^n(dx) =  v_0+\sum_{i=1}^{n} q_i^n Y_t^{x_i^n}
\end{equation}
where $\mu^n$ is defined in \eqref{eq:mumeasure}.
Then a direct application of the previous  lemma implies now the following result:

\begin{theorem}\label{theo:conv}
Under the assumptions of Lemma \ref{lem:conv1} it holds for $t\ge 0$ that
\begin{equation}
\nu_t^n  \uparrow \nu_t  
\end{equation} 
for $n\to\infty$ almost surely.
\end{theorem}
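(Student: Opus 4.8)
The plan is to reduce the statement entirely to Lemma~\ref{lem:conv1} applied pathwise. Observe that, by the representation \eqref{eq:nutn}, we have $\nu_t^n - v_0 = \int_0^\infty Y_t^x\,\mu^n(dx)$ while $\nu_t - v_0 = \int_0^\infty Y_t^x\,\mu(dx)$, so for a fixed $t\ge 0$ and a fixed sample path the two quantities differ only through the measure against which the map $x\mapsto Y_t^x$ is integrated. Hence it suffices to verify that, for almost every $\omega$, the function $f(x):=Y_t^x$ satisfies the three hypotheses of Lemma~\ref{lem:conv1}, namely $f\in L^1(\mu)$, $f\ge 0$, and $f$ convex; the conclusion $\nu_t^n\uparrow\nu_t$ then follows directly, including the monotonicity.

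First I would check nonnegativity and integrability. Since the Feller condition $2\kappa\theta\ge\sigma^2$ attached to \eqref{eq:Z} keeps $(Z_s)$ nonnegative almost surely, the integrand $e^{-(t-s)x}Z_s$ in the definition of $Y_t^x$ is nonnegative, so $f\ge 0$ on the almost-sure event where the paths of $Z$ are nonnegative. For integrability I would simply note that $\int_0^\infty Y_t^x\,\mu(dx)=\nu_t-v_0$, which is finite almost surely because the Riemann--Liouville integral defining $\nu_t$ in \eqref{vola} is well-defined for $\alpha\in(0,1)$; thus $f\in L^1(\mu)$ pathwise on a set of full measure.

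The key remaining point is convexity of $x\mapsto Y_t^x$. For each fixed $s\in[0,t]$ the map $x\mapsto e^{-(t-s)x}$ is convex on $(0,\infty)$, since its second derivative $(t-s)^2 e^{-(t-s)x}$ is nonnegative. As $Y_t^x=\int_0^t e^{-(t-s)x}Z_s\,ds$ is an integral of this family of convex functions against the nonnegative measure $Z_s\,ds$ on $[0,t]$, and nonnegative mixtures of convex functions remain convex, the map $x\mapsto Y_t^x$ is convex on $(0,\infty)$.

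Collecting these observations, Lemma~\ref{lem:conv1} applies to $f=Y_t^{\,\cdot}$ on the almost-sure event on which $Z$ is nonnegative and $\nu_t$ is finite, yielding $\int Y_t^x\,\mu^n(dx)\uparrow\int Y_t^x\,\mu(dx)$ and hence $\nu_t^n\uparrow\nu_t$ almost surely. I do not expect a serious obstacle here: the only care needed is to fix a single null set off which both nonnegativity and finiteness hold (a countable intersection suffices), so that the monotone pathwise convergence supplied by Lemma~\ref{lem:conv1} upgrades to the stated almost-sure statement.
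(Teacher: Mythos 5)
Your proposal is correct and follows essentially the same route as the paper's own proof: both reduce the claim to a pathwise application of Lemma~\ref{lem:conv1} to $f(x)=Y_t^x(\omega)$, using the positivity of $(Z_s)$ for nonnegativity and the representation $Y_t^x=\int_0^t e^{-(t-s)x}Z_s\,ds$ as a nonnegative mixture of the convex functions $x\mapsto e^{-(t-s)x}$ for convexity. The extra details you supply --- verifying the $L^1(\mu)$ hypothesis via finiteness of the Riemann--Liouville integral and the single-null-set bookkeeping --- are points the paper leaves implicit, so the two arguments do not differ in substance.
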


\begin{proof}
We have to show that 
\begin{equation}\int_0^\infty  Y_t^x  \mu^n(dx) \to  \int_0^\infty  Y_t^x  \mu(dx)\end{equation}
where $Y_t^x = \int_0^t e^{-(t-s)x} Z_s ds$. Obviously for fixed $\omega\in \Omega$ and fixed $t>0$, the function $x\mapsto Y_t^x(\omega)$ is non-negative and convex (note that $Z_s(\omega)$ is for all $s>0$ positive).  Thus Lemma \ref{lem:conv1}  implies that $\nu^n_t \uparrow \nu_t$ for $n\to\infty$ almost surely.
\end{proof}

Instead of $(\nu_t)$ from \eqref{vola} we consider now $(\nu_t^n)$ from \eqref{eq:nutn}.
 All stochastic processes in this section depend  on $n$ (with the exception of $(Z_t)$) but in order to ease notation we do not make this dependence  explicit in the notation. Thus the dynamics of the approximate stock price process are given by
\begin{equation}\label{stockprice2} 
d S_t = S_t \Big( (r+\lambda \Big(v_0+\sum_{i=1}^{n} q_i Y_t^{x_i})\Big) dt + \sqrt{\Big(v_0+\sum_{i=1}^{n} q_i Y_t^{x_i}\Big)}d B_t^S\Big) \end{equation}
where as before for $x>0$ we have
\begin{eqnarray}
 dY_t^x &=& (Z_t- x Y_t^x) dt\\
d Z_t &=&  \kappa(\theta -Z_t) dt + \sigma \sqrt{Z_t} d B_t^Z.
\end{eqnarray}
The stochastic differential equation for the approximate wealth process is thus
\begin{equation}\label{eq:wealth2}
d W_t^\pi =W_{t}^\pi \Big(r+ \pi_t \lambda \Big(v_0+\sum_{i=1}^{n} q_i Y_t^{x_i}\Big)\Big)dt + W_t^\pi \pi_t \sqrt{\Big(v_0+\sum_{i=1}^{n} q_i Y_t^{x_i}\Big)}d B_t^S.
\end{equation}

We consider the same optimization problem as in \eqref{eq:optprob} with the preceding processes. This results in a finite dimensional classical stochastic optimal control problem. We have to consider the value functions
\begin{equation}\label{eq:optprobapprox}V(t,w,y_1,\ldots,y_n,z) := \sup_{\pi} \Eop_{t,w,y_1,\ldots,y_n,z}\left[\frac1\gamma\big({W}^{\pi}_T\big)^\gamma\right].
\end{equation}
where $\Eop_{t,w,y_1,\ldots,y_n,z}$ is the conditional expectation given $W_t=w, Y_t^{x_i}=y_i, Z_t=z$ at time $t$. As before portfolio strategies are $(\FC_t)$-adapted processes. In what follows we derive the corresponding Hamilton-Jacobi-Bellman (HJB) equation for this optimization problem. We denote the generic function by $G(t,w,y_1,\ldots,y_n,z)$ with $t\in[0,T], w> 0, y_i \ge 0, z> 0$. The boundary condition is given by $G(T,w,y_1,\ldots,y_n,z)= \frac 1\gamma w^\gamma$. In order to ease notation, we set $\beta := v_0+\sum_{i=1} ^{n}  q_i y_i$. Thus, the HJB equation reads
\begin{eqnarray}\label{eq:HJB1}
 \nonumber 0 = \sup_{u\in\R}&& \Big\{G_t +G_w w(r+u\lambda \beta)  + \sum_{i=1}^n G_{y_i} \big(z-x_iy_i \big) +G_z \kappa(\theta-z) \\
   &&+ \frac12 G_{ww}w^2 u^2 \beta  + \frac12 G_{zz} \sigma^2 z +G_{wz} w u\sigma \rho  \sqrt{z\beta}\Big\}.
\end{eqnarray}

We will first show that a classical solution of this HJB equation exists and can be given explicitly in the uncorrelated case $\rho=0$.

\begin{theorem}\label{theo:HJB}
A solution of HJB equation \eqref{eq:HJB1} exists on a certain time interval $[0,T_\infty]$ and is for $t\in[0,T], T\le T_\infty, w>0, y_i>0, z>0$ given by
\begin{equation}G(t,w,y_1,\ldots,y_n,z)= \frac1\gamma w^\gamma  g(t,y_1,\ldots,y_n,z)^c \end{equation} with $c= \frac{1-\gamma}{1-\gamma+\gamma\rho^2}$ and a differentiable $g$ satisfying 
\begin{equation}\label{eq:g}
0= cg_t +g\Big(\gamma r+\frac12 \frac{\lambda^2\beta\gamma}{1-\gamma}\Big)+c  \sum_{i=1}^n g_{y_i} (z-x_iy_i) + cg_z \Big( \kappa(\theta-z)+\frac{\lambda\gamma\sigma\rho\sqrt{z\beta}}{1-\gamma}\Big)+\frac12 \sigma^2 czg_{zz}.
\end{equation} In the uncorrelated case $\rho=0$, the function  $g$ can be given explicitly by
$$ g(t,y_1,\ldots,y_n,z) =\exp\Big(\phi(T-t)+\sum_{i=1}^n \psi_i(T-t)y_i+\varphi(T-t)z \Big)  $$
where 
\begin{equation}\label{eq:psii}
\psi_i(T-t) =  \eta q_i \int_0^{T-t}e^{ - x_is} ds
\end{equation}
with $\eta := \frac12\frac{\gamma\lambda^2}{1-\gamma}$ and $\varphi$ and $\phi$ are solutions of the ordinary differential equations
\begin{eqnarray}
 \label{riccati1}
   \varphi_t(T-t) &=& \eta \int_0^{T-t} \int_0^\infty e^{-xs} \mu^{n}(dx)ds -\kappa\varphi(T-t)+ \frac12\sigma^2 \varphi^2(T-t) \\ \label{ode2}
    \phi_t(T-t) &=& \gamma r+v_0\eta+\varphi (T-t)\kappa\theta.
\end{eqnarray}
with boundary condition $\varphi(0)=\phi(0)=0$.
\end{theorem}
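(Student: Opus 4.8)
The plan is to exploit the homogeneity of power utility by making the separation ansatz $G(t,w,y_1,\dots,y_n,z)=\tfrac1\gamma w^\gamma g(t,y_1,\dots,y_n,z)^c$ and to pin down the exponent $c$ so that the equation for $g$ is free of the quadratic term $g_z^2$. First I would record the derivatives of this ansatz; the ones that interact with the control $u$ are $G_w=w^{\gamma-1}g^c$, $G_{ww}=(\gamma-1)w^{\gamma-2}g^c$ and $G_{wz}=c\,w^{\gamma-1}g^{c-1}g_z$. Since $\gamma<1$ and $g>0$ we have $G_{ww}<0$, so the bracket in \eqref{eq:HJB1} is a concave quadratic in $u$ whose maximiser is the stationary point
\begin{equation*}
u^\ast=-\frac{G_w\lambda\beta+G_{wz}\sigma\rho\sqrt{z\beta}}{G_{ww}w\beta}=\frac{\lambda}{1-\gamma}+\frac{c\,g_z\,\sigma\rho\sqrt z}{(1-\gamma)g\sqrt\beta}.
\end{equation*}
Substituting $u^\ast$ back amounts to completing the square, i.e.\ replacing the $u$-dependent terms by $-\,(G_w w\lambda\beta+G_{wz}w\sigma\rho\sqrt{z\beta})^2/(2G_{ww}w^2\beta)$.

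The key algebraic step is the choice of $c$. After substituting the derivatives of the ansatz and factoring out $\tfrac1\gamma w^\gamma$, every surviving term carries a power $g^{c-1}$ except for two contributions proportional to $g^{c-2}g_z^2$: one from $\tfrac12\sigma^2 z\,G_{zz}$, through the $(c-1)g^{c-2}g_z^2$ piece of the second $z$-derivative, and one from the completed square, through the $G_{wz}^2$-term, which is proportional to $\rho^2$. Collecting them produces the factor $(c-1)+\tfrac{\gamma c\rho^2}{1-\gamma}$, and the entire purpose of the exponent is that this vanishes precisely when $c=\tfrac{1-\gamma}{1-\gamma+\gamma\rho^2}$, the value in the statement. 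With this choice the $g_z^2$ terms cancel; dividing by $c\,g^{c-1}$ and multiplying by $c$ leaves exactly \eqref{eq:g}. This reduction is valid for every $\rho\in(-1,1)$, and a differentiable solution $g$ of \eqref{eq:g} on an interval $[0,T_\infty]$ then yields the claimed classical solution of \eqref{eq:HJB1}.

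For the explicit formula I would use that $\rho=0$ forces $c=1$, so $G=\tfrac1\gamma w^\gamma g$, the $\sqrt{z\beta}$ drift disappears, and \eqref{eq:g} becomes linear in $g$. Inserting the exponential-affine ansatz $g=\exp\!\big(\phi(T-t)+\sum_i\psi_i(T-t)y_i+\varphi(T-t)z\big)$, so that $g_{y_i}=\psi_i g$, $g_z=\varphi g$, $g_{zz}=\varphi^2 g$ and $g_t=-(\dot\phi+\sum_i\dot\psi_i y_i+\dot\varphi z)g$ (dots denoting derivatives in $\tau=T-t$), I would divide by $g$ and match coefficients, recalling $\beta=v_0+\sum_i q_i y_i$ and $\eta=\tfrac12\tfrac{\gamma\lambda^2}{1-\gamma}$. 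The coefficient of $y_i$ gives the linear ODE $\dot\psi_i=-x_i\psi_i+\eta q_i$ with $\psi_i(0)=0$, whose solution is \eqref{eq:psii}; summing, $\sum_i\psi_i=\eta\int_0^{T-t}\!\int_0^\infty e^{-xs}\mu^n(dx)\,ds$, so the coefficient of $z$ yields the Riccati equation \eqref{riccati1}; the constant term yields \eqref{ode2}.

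The main obstacle is not the coefficient matching, which is mechanical, but the finite horizon of existence. The Riccati equation \eqref{riccati1} has a positive quadratic coefficient $\tfrac12\sigma^2$, so its solution need not be global; by the standard local existence and continuation theory for ordinary differential equations it lives on a maximal interval and may explode at some $T_\infty$. Once $\varphi$ is known and finite, $\psi_i$ and $\phi$ follow by plain integration of non-explosive right-hand sides, so the whole construction, hence the differentiable $g$ and the solution $G$, exists exactly on $[0,T_\infty]$; I would therefore state the existence claim relative to this blow-up time rather than for all $T$. For $\rho\neq0$ the factor $\sqrt{z\beta}$ renders \eqref{eq:g} non-affine in the $y_i$, which is why no closed form is available in that regime and only the reduction to \eqref{eq:g} is asserted there.
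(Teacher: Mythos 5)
Your algebraic reduction is correct and is essentially the paper's computation, carried out in one step instead of two: the paper first substitutes $G=\tfrac1\gamma w^\gamma f$, maximizes over $u$, obtains a PDE for $f$ containing the quadratic term $\tfrac12\tfrac{\gamma\sigma^2\rho^2 z}{1-\gamma}\tfrac{f_z^2}{f}$, and only then sets $f=g^c$ to kill it; your one-shot ansatz $G=\tfrac1\gamma w^\gamma g^c$ with the cancellation condition $(c-1)+\tfrac{\gamma c\rho^2}{1-\gamma}=0$ yields the same $c$ and the same equation \eqref{eq:g}. Your treatment of the uncorrelated case (linear ODEs for $\psi_i$, Riccati equation for $\varphi$, quadrature for $\phi$, existence on a maximal interval by Picard--Lindel\"of/continuation) also matches the paper's proof.

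There is, however, a genuine gap in the correlated case. The first assertion of the theorem is that a solution of \eqref{eq:HJB1} \emph{exists} on some $[0,T_\infty]$ for every $\rho\in(-1,1)$, and this requires proving that a differentiable $g$ solving \eqref{eq:g} exists. You leave this as a hypothesis (``a differentiable solution $g$ of \eqref{eq:g} \ldots then yields the claimed classical solution'') and even state that for $\rho\neq 0$ ``only the reduction to \eqref{eq:g} is asserted,'' which misreads the claim. The paper closes this step by exploiting the fact that \eqref{eq:g} is \emph{linear} in $g$ --- this linearization, not merely the removal of the $g_z^2$ term, is the purpose of the distortion exponent $c$ --- and then invoking Theorem 1 of Heath and Schweizer \cite{HS00}: conditions (A1), (A2), (A3') there are satisfied for the diffusion $(\tilde Y^{x_1},\ldots,\tilde Y^{x_n},\tilde Z)$ appearing in Theorem \ref{theo:FK}, so a classical solution of \eqref{eq:g} exists and is given by the Feynman--Kac expectation \eqref{eq:gFK}. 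Moreover, in that argument the restriction to $[0,T_\infty]$ comes from the finiteness condition (A3e') of \cite{HS00}, not only from Riccati blow-up as in your $\rho=0$ discussion. Without some such existence result for the (degenerate) linear parabolic equation \eqref{eq:g}, your argument establishes only the conditional statement together with the $\rho=0$ case, not the theorem as stated.
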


\begin{remark}\label{rem:sol}
Note that \eqref{riccati1}  is a Riccati equation and \eqref{ode2} can be solved explicitly once the solution for $\varphi$ is known. 
\end{remark}

For the following discussion it is important to note that the solution $g$  of the partial differential equation  \eqref{eq:g} can be represented as a Laplace Transform of an integrated  volatility via the Feynman-Kac Theorem.

\begin{theorem}\label{theo:FK}
A solution $g$ of the partial differential equation \eqref{eq:g} with boundary condition $g(T,y_1,\ldots,y_n,z) =1$  can for $t\in[0,T], T\le T_\infty, w>0, y_i>0, z>0$  be written as
\begin{equation}\label{eq:gFK} g(t,y_1,\ldots,y_n,z) = \Eop_{t,y_1,\ldots,y_n,z}\left[ e^{\int_t^T \Big( \frac{\gamma r}{c}+ \frac{1}{2}\frac{\gamma \lambda^2}{(1-\gamma)c} \tilde{\nu}_s^n  \Big) ds} \right]
\end{equation}
where the constant $c$ has been defined in the previous theorem, $\tilde{\nu}_t^n = v_0+ \sum_{i=1}^{n} q_i^n \tilde{Y}_t^{x_i^n}$ and
\begin{eqnarray}
 d\tilde{Y}_t^x &=& (\tilde{Z}_t^n- x \tilde{Y}_t^x) dt\\ \label{eq:tildeZ}
d \tilde{Z}_t^n &=&  \Big(\kappa(\theta -\tilde{Z}_t^n) +\frac{\lambda \gamma\sigma\rho}{1-\gamma}\sqrt{\tilde{Z}_t^n\tilde{\nu}_t^n}\Big)dt + \sigma \sqrt{\tilde{Z}_t^n} d B_t^Z.
\end{eqnarray}
In the uncorrelated case $\rho=0$ we have $\nu_t = \tilde{\nu}_t$
\end{theorem}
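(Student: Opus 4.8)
The plan is to recognize \eqref{eq:g} as a backward Kolmogorov equation with a zeroth-order (potential) term and then to apply the Feynman--Kac formula to the diffusion $(\tilde Y_t^{x_1},\ldots,\tilde Y_t^{x_n},\tilde Z_t)$ defined through \eqref{eq:tildeZ}. The first step is purely algebraic: dividing \eqref{eq:g} by $c$ and writing $\beta := v_0+\sum_{i=1}^n q_i y_i$, the equation becomes
\begin{equation*}
g_t + \mathcal{A}g + k\,g = 0,
\end{equation*}
where
\begin{equation*}
\mathcal{A}h := \sum_{i=1}^n (z - x_i y_i)\,h_{y_i} + \Big(\kappa(\theta - z) + \tfrac{\lambda\gamma\sigma\rho}{1-\gamma}\sqrt{z\beta}\Big)h_z + \tfrac12\sigma^2 z\,h_{zz}
\end{equation*}
is exactly the infinitesimal generator of $(\tilde Y_t^{x_1},\ldots,\tilde Y_t^{x_n},\tilde Z_t)$ (the $\tilde Y^{x_i}$ are of finite variation, and $\tilde Z$ carries the only diffusion term), and
\begin{equation*}
k := \tfrac{\gamma r}{c} + \tfrac12\tfrac{\gamma\lambda^2}{(1-\gamma)c}\,\beta
\end{equation*}
is precisely the integrand in the exponent of \eqref{eq:gFK}. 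Identifying these two objects is the whole content of the first step.

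Next I would fix $(t,y_1,\ldots,y_n,z)$, start the tilded process in this state at time $t$, and define the candidate
\begin{equation*}
M_s := g\big(s,\tilde Y_s^{x_1},\ldots,\tilde Y_s^{x_n},\tilde Z_s\big)\,\exp\Big(\int_t^s \big(\tfrac{\gamma r}{c} + \tfrac12\tfrac{\gamma\lambda^2}{(1-\gamma)c}\tilde\nu_u^n\big)\,du\Big),\qquad s\in[t,T].
\end{equation*}
Applying It\^o's formula and using that along the process the potential equals $k$ evaluated at the state, the drift of $M_s$ comes out as $e^{\int_t^s(\cdots)du}\,(g_t+\mathcal{A}g+k\,g)$, which vanishes by the rewritten PDE. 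Hence $dM_s = e^{\int_t^s(\cdots)du}\,g_z\,\sigma\sqrt{\tilde Z_s}\,dB_s^Z$, so $M$ is a local martingale. Granting that $M$ is a true martingale, $\Eop_{t,y_1,\ldots,y_n,z}[M_T]=M_t=g(t,y_1,\ldots,y_n,z)$; since the terminal condition gives $g(T,\cdot)=1$ we have $M_T=e^{\int_t^T(\cdots)du}$, which is exactly the representation \eqref{eq:gFK}.

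The main obstacle is upgrading the local martingale to a genuine martingale, i.e. controlling the family $\{M_{T\wedge\tau_m}\}$ along a localizing sequence $\tau_m\uparrow\infty$ and passing to the limit by dominated convergence. In the uncorrelated case the explicit exponential-affine form $g=\exp(\phi+\sum_i\psi_i y_i+\varphi z)$ from Theorem~\ref{theo:HJB} is convenient, since then $g_z=\varphi\,g$ and the required bounds reduce to exponential moments of the square-root process $\tilde Z$ and of the integrated volatility $\int_t^T\tilde\nu_s^n\,ds$; in the correlated case one relies on the same kind of moment estimates for the (drift-modified) CIR-type diffusion $\tilde Z$. These moments are finite precisely on the interval $[0,T_\infty]$ on which the Riccati equation \eqref{riccati1} does not explode, which is the reason the statement is restricted to $T\le T_\infty$, while the Feller condition $2\kappa\theta\ge\sigma^2$ keeps $\tilde Z$ off a degenerate boundary. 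A standard localization-plus-dominated-convergence argument then removes the stopping times and yields the claimed equality.

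For the final assertion I would simply note that when $\rho=0$ the extra drift $\tfrac{\lambda\gamma\sigma\rho}{1-\gamma}\sqrt{\tilde Z_t^n\tilde\nu_t^n}$ in \eqref{eq:tildeZ} vanishes, so the dynamics of $\tilde Z^n$ coincide with those of $Z$ in \eqref{eq:Z}; driving the equation by the same Brownian motion $B^Z$ from the same initial value $z_0$ gives $\tilde Z^n=Z$ pathwise, whence $\tilde Y_t^{x}=Y_t^x$ and therefore $\tilde\nu_t^n=\nu_t^n$.
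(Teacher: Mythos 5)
Your identification step is correct: dividing \eqref{eq:g} by $c$ exhibits the equation as $g_t+\mathcal{A}g+kg=0$, where $\mathcal{A}$ is indeed the generator of $(\tilde Y^{x_1},\ldots,\tilde Y^{x_n},\tilde Z^n)$ (only $\tilde Z^n$ carries a diffusion term) and $k$ is exactly the integrand in \eqref{eq:gFK}; your closing observation for $\rho=0$ (the drift correction vanishes, so $\tilde Z^n=Z$ pathwise and hence $\tilde\nu^n=\nu^n$) is also fine. The genuine gap sits precisely where you write ``granting that $M$ is a true martingale.'' The theorem quantifies over an \emph{arbitrary} classical solution $g$ of \eqref{eq:g}, and for such a $g$ you have no a priori growth control whatsoever, so your localization-plus-dominated-convergence step has nothing to dominate with. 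This is not a removable technicality: uniqueness for parabolic Cauchy problems fails without growth restrictions (Tychonoff-type examples for the heat equation), so ``every solution equals the expectation'' cannot be proved by It\^{o} plus localization alone. Your fallback --- exploit the explicit exponential-affine formula for $g$ so that $g_z=\varphi\,g$ --- exists only in the uncorrelated case; for $\rho\neq 0$, which is the case the theorem is really about, there is no explicit formula, and ``the same kind of moment estimates for the drift-modified CIR diffusion'' is not an argument: what must be bounded is $g$ and $g_z$ along the process, not merely moments of $\tilde Z^n$. A smaller inaccuracy: \eqref{riccati1} is the uncorrelated Riccati equation, so the restriction $T\le T_\infty$ in the correlated case cannot come from it; in the paper it comes from the exponential-moment condition (A3e') of Heath--Schweizer.

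The paper's own proof avoids this issue by going in the opposite direction: it invokes Theorem 1 of Heath--Schweizer, which, under the coefficient conditions (A1), (A2), (A3') for the tilded diffusion, shows that the function \emph{defined} by the expectation \eqref{eq:gFK} is itself a classical $C^{1,2}$ solution of \eqref{eq:g} with terminal value $1$. That existence-type statement requires hypotheses only on the coefficients (which can be checked) rather than on an arbitrary solution (which cannot), and it is all that is used afterwards: in Theorem \ref{theo:valuefinite} the function $g$ is taken to be the expectation. If you want to keep your direction of argument, you must either add an explicit growth or integrability hypothesis on $g$ under which $M$ is a true martingale, or restrict the claim to $\rho=0$, where the formula from Theorem \ref{theo:HJB} supplies the required bounds.
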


\begin{proof}
This statement follows from Theorem 1 of \cite{HS00}  where we identify the components  $X^i$ of process $X$ with the processes $\tilde{Z}, \tilde{Y}^j, j=1,\ldots,n$. The function $c$ in front of $f$ which appears in \cite{HS00} is given by \begin{equation}c(t,y_1,\ldots, y_n,z)=\frac{\gamma r}c+ \frac{1}{2}\frac{\gamma \lambda^2}{(1-\gamma)c} \big(v_0+ \sum_{i=1}^n y_i q_i\big).\end{equation}
We also have 
\begin{equation}b^{n+1}(t,y_1,\ldots, y_n,z)= \kappa(\theta-z) + \frac{\lambda\gamma\sigma\rho}{1-\gamma} \sqrt{z(v_0+\sum q_i y_i)} \end{equation}
Moreover the functions $g$ and $h$ in \cite{HS00} are here given by $g\equiv 0, h\equiv 1$. Note also that (A1), (A2) and (A3') of Theorem 1 in \cite{HS00}  are  satisfied.
\end{proof}

\begin{theorem}\label{theo:valuefinite}[Verification]
Suppose that $G(t,w,y_1,\ldots,y_n,z) := \frac1\gamma w^\gamma g(t,y_1,\ldots,y_n,z)^c $ with $g$ as in \eqref{eq:gFK}. Then for $t\in[0,T], T\le T_\infty$, an optimal investment strategy $(\pi_t^*)$ for problem \eqref{eq:optprobapprox} is given by 
\begin{align}\label{eq:pi_star}
\pi_t^*=\frac{\lambda}{1-\gamma}+ \frac{c\sigma\gamma}{1-\gamma}\sqrt{\frac{Z_t}{\nu_t^n}}\frac{g_z}{g}\,,
\end{align}
and $V=G$, i.e.\ $G$ coincides with the value function provided that $\int_0^t g^c W_s^\gamma \pi_s^* \sqrt{\nu_s^n}  \,dB_s^S$ and $ \int_0^t G_{z} \sigma \sqrt{Z_s} \,dB_s^Z$ are true martingales. In the uncorrelated case $\rho=0$ we have $\pi_t^*\equiv \frac{\lambda}{1-\gamma}$  and the value function can be written as
\begin{equation}\label{eq:Vexpression}
V(t,w,y_1,\ldots,y_n,z) = \frac1\gamma w^\gamma \exp\Big(\phi(T-t)+\sum_{i=1}^n \psi_i(T-t)y_i+\varphi(T-t)z \Big)\,, 
\end{equation}
where $\psi_i$ are given in \eqref{eq:psii} and $\varphi$ and $\phi$ are solutions of \eqref{riccati1} and \eqref{ode2}.
\end{theorem}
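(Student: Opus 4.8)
The plan is to carry out a standard verification argument for the finite-dimensional control problem \eqref{eq:optprobapprox}, building on the two preceding results. First I would identify the candidate optimal control as the pointwise maximizer of the expression inside the supremum of the HJB equation \eqref{eq:HJB1}. Writing $G=\frac1\gamma w^\gamma g^c$ and collecting the $u$-dependent terms, namely $G_w w u\lambda\beta+\frac12 G_{ww}w^2u^2\beta+G_{wz}wu\sigma\rho\sqrt{z\beta}$, one observes that this map is strictly concave in $u$ because $G_{ww}=(\gamma-1)w^{\gamma-2}g^c<0$ for $\gamma<1$. Hence the first-order condition $G_w\lambda\beta+G_{ww}wu\beta+G_{wz}\sigma\rho\sqrt{z\beta}=0$ determines a unique maximizer, and substituting $G_w=w^{\gamma-1}g^c$, $G_{ww}=(\gamma-1)w^{\gamma-2}g^c$, $G_{wz}=c\,w^{\gamma-1}g^{c-1}g_z$ together with $\beta=\nu_t^n$ yields exactly the feedback form \eqref{eq:pi_star}. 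By Theorem~\ref{theo:HJB} the function $G$ solves \eqref{eq:HJB1}, so inserting $\pi^*$ produces drift equal to zero, while the supremum structure guarantees that the drift under any other control is nonpositive.

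Next I would run the verification (in)equality. Fixing an admissible strategy $\pi$ and applying It\^o's formula to $s\mapsto G(s,W_s^\pi,Y_s^{x_1},\ldots,Y_s^{x_n},Z_s)$ on $[t,T]$, the drift collected equals the bracket in \eqref{eq:HJB1} evaluated at $u=\pi_s$, which is $\le 0$, and the diffusion part consists precisely of the two stochastic integrals against $B^S$ and $B^Z$ named in the statement. Under the hypothesis that these are true martingales, taking expectations and using the terminal condition $G(T,w,\ldots)=\frac1\gamma w^\gamma$ gives $\Eop_{t,\ldots}[\frac1\gamma(W_T^\pi)^\gamma]\le G(t,\ldots)$, so that $V\le G$ after taking the supremum over $\pi$. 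Repeating the computation with the candidate $\pi^*$ turns the drift inequality into an equality, whence $\Eop_{t,\ldots}[\frac1\gamma(W_T^{\pi^*})^\gamma]=G(t,\ldots)$; combined with the reverse inequality this proves $V=G$ and the optimality of $\pi^*$.

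The main obstacle is exactly the martingale property of the two stochastic integrals, which is why it is imposed as a hypothesis: a priori they are only local martingales, and justifying the interchange of expectation and integration requires either integrability bounds on $W^\pi$, $Z$ and $g_z/g$ over $[0,T_\infty]$ or a localization with stopping times $\tau_m\uparrow T$ followed by a monotone- or uniform-integrability passage to the limit. The restriction to $T\le T_\infty$ is essential here, since the Riccati solution $\varphi$ from \eqref{riccati1}, and hence $g$ and $\pi^*$, may explode in finite time, so one must stay on the interval where $g$ is finite, strictly positive and differentiable. Finally, in the uncorrelated case $\rho=0$ one has $c=1$, the correction term in \eqref{eq:pi_star} vanishes, and $\pi^*\equiv\frac{\lambda}{1-\gamma}$; substituting the explicit exponential form of $g$ from Theorem~\ref{theo:HJB}, whose exponents solve \eqref{eq:psii}, \eqref{riccati1} and \eqref{ode2}, into $G=\frac1\gamma w^\gamma g$ then delivers the closed-form value function \eqref{eq:Vexpression}.
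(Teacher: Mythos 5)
Your overall scheme---deriving $\pi^*$ as the pointwise maximizer of the HJB supremum via the first-order condition, applying It\^o's formula to $G$ along an arbitrary admissible strategy, using the HJB equation to make the drift nonpositive, obtaining equality for $\pi^*$, and specializing to $\rho=0$ by substituting the explicit exponential $g$---is exactly the structure of the paper's proof. (Incidentally, the first-order condition produces $\rho$, not $\gamma$, in the numerator of the correction term, so \eqref{eq:pi_star} as printed contains a typo; your derivation is the consistent one, since the correction must vanish precisely when $\rho=0$.)

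The genuine gap is in the direction $V\le G$. This inequality must hold for \emph{every} admissible $\pi$, but the martingale hypothesis of the theorem concerns only the two integrals driven by the optimal strategy $\pi^*$; you invoke ``the hypothesis that these are true martingales'' for the integrals arising from an arbitrary $\pi$, which the theorem does not grant, and which you also cannot simply impose on all $\pi$ without shrinking the class of strategies over which the supremum in \eqref{eq:optprobapprox} is taken (that would change the value function being computed). The paper closes this hole with a sign-dependent argument that needs no hypothesis on arbitrary $\pi$: for $\gamma>0$, the It\^o inequality shows that the constant plus the two stochastic integrals dominates the nonnegative process $G(s,W_s,\ldots)$, so this local martingale is bounded below and hence a supermartingale, which immediately yields $\Eop_{t,y_1,\ldots,y_n,z}\big[\frac1\gamma (W_T^\pi)^\gamma\big]\le G(t,w,y_1,\ldots,y_n,z)$; for $\gamma<0$ it restricts the comparison to bounded admissible strategies, for which the stochastic integrals have zero expectation. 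Your localization remark points in the right direction, but Fatou only rescues the case $\gamma>0$ (where $G\ge 0$); for $\gamma<0$ a uniform-integrability or boundedness assumption is unavoidable. This sign dichotomy---rather than the stated martingale hypothesis, which the paper uses only to get the equality along $\pi^*$---is the substantive point your sketch leaves unresolved.
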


\begin{remark}\label{rem:Merton_ratio}
Note that in case $\rho=0$ the optimal portfolio strategy does not depend on the volatility at all. This is typical in settings where the Brownian motions of stock and volatility process are uncorrelated and where the appreciation rate and the volatility are in a certain relation (see e.g. \cite{BL13}). It corresponds to the Merton ratio which would exactly be $\lambda$ in our model.
\end{remark}

\section{The Fractional Optimization Problem}
We solve now problem \eqref{eq:optprob} by taking the limit $n\to\infty$ in the results of the previous section.  The special case $\rho=0$ will be discussed separately in section 4.2.

\subsection{The Correlated Case}
From Theorem \ref{theo:FK} we know that the value function of the approximation is essentially given by a Laplace transform of an integrated process $(\tilde{\nu}_t^n)$. We first consider $(\tilde{Z}_t^n)$ given in  \eqref{eq:tildeZ} and discuss what happens if $n$ tends to $\infty$. For this purpose we write it as 
\begin{eqnarray}
\tilde{Z}_t^n = \tilde{Z}_0 + \!\int_0^t  \! \!\kappa(\theta-\tilde{Z}_s^n) + \frac{\lambda\gamma\sigma\rho}{1-\gamma} \sqrt{\tilde{Z}_s^n} \sqrt{v_0\! +\! \sum_{i=1}^n q_i^n \int_0^s \! \! e^{-(s-u)x_i}\tilde{Z}_u^ndu} ds + \sigma\!\int_0^t  \! \!\sqrt{\tilde{Z}_s^n}dB_s^Z.
\end{eqnarray}

For the next result we consider $(\tilde{Z}_t^n)$  as a random element in $D_{\R}[0,T_\infty]$ and denote by $\Rightarrow$ weak convergence. 

\begin{lemma}\label{lem:weakconv}
It holds that $(\tilde{Z}_t^n) \Rightarrow (\tilde{Z}_t)$ in Skorohod topology and $ (\tilde{Z}_t)$  satisfies
\begin{eqnarray}
\tilde{Z}_t = \tilde{Z}_0 + \!\int_0^t  \! \!\kappa(\theta-\tilde{Z}_s) + \frac{\lambda\gamma\sigma\rho}{1-\gamma} \sqrt{\tilde{Z}_s} \sqrt{v_0\! +\! \int_0^\infty \! \! \int_0^s  \! \!e^{-(s-u)x}\tilde{Z}_udu\mu(dx)} ds + \sigma\!\int_0^t  \! \!\sqrt{\tilde{Z}_s}dB_s^Z. 
\end{eqnarray}
\end{lemma}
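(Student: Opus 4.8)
The plan is to follow the classical three-step programme for weak convergence of solutions of stochastic (here Volterra-type) differential equations: first establish moment bounds that are uniform in $n$ and deduce tightness of the family $(\tilde{Z}^n)$ in $D_{\R}[0,T_\infty]$; then identify every subsequential weak limit as a solution of the stated limiting equation; and finally invoke uniqueness of that limiting equation to upgrade subsequential convergence to convergence of the full sequence. Since each $\tilde Z^n$ has continuous paths it suffices to obtain tightness in $C[0,T_\infty]$, which is automatically tightness in the Skorohod topology and forces the limit to be continuous.

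First I would derive the uniform moment bounds. The only non-standard term in the dynamics is the memory drift $\tfrac{\lambda\gamma\sigma\rho}{1-\gamma}\sqrt{\tilde{Z}_s^n}\sqrt{\tilde{\nu}_s^n}$. The crucial observation is that, writing $K^n(r):=\int_0^\infty e^{-rx}\mu^n(dx)$, Fubini gives $\tilde{\nu}_s^n = v_0+\int_0^s K^n(s-u)\tilde{Z}_u^n\,du$, and since $x\mapsto e^{-rx}$ is non-negative and convex, Lemma \ref{lem:conv1} together with \eqref{eq:Gamma} yields $K^n(r)\uparrow K(r):=r^{\alpha-1}/\Gamma(\alpha)$; in particular $K^n\le K$ uniformly in $n$, and the dominating kernel $K$ is integrable on $[0,T_\infty]$ because $\alpha>0$. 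Bounding $\sqrt{\tilde{Z}}\sqrt{\tilde{\nu}}\le\tfrac12(\tilde{Z}+\tilde{\nu})$ and using this domination, a generalized Gronwall inequality with a weakly singular kernel closes the estimate and gives $\sup_n\Eop[\sup_{t\le T_\infty}(\tilde{Z}_t^n)^p]<\infty$ for every $p\ge1$ (positivity of $\tilde{Z}^n$, needed for the square roots, follows by comparison with the CIR process under the Feller condition, since the additional drift vanishes at the origin and is only of order $\sqrt{\tilde Z}$ nearby). Applying the same estimates to increments, with Burkholder-Davis-Gundy for the martingale part, furnishes a uniform Kolmogorov/Aldous bound $\Eop[|\tilde{Z}_t^n-\tilde{Z}_s^n|^{p}]\le C|t-s|^{q}$, hence tightness.

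Next I would extract, by Prokhorov, a weakly convergent subsequence with limit $\tilde{Z}$ and pass to a common probability space via the Skorohod representation theorem, so that the convergence becomes almost sure in $D_{\R}[0,T_\infty]$. The mean-reversion term $\int_0^t\kappa(\theta-\tilde{Z}_s^n)\,ds$ passes to the limit by almost sure and dominated convergence using the uniform moments, and the stochastic integral $\sigma\int_0^t\sqrt{\tilde{Z}_s^n}\,dB_s^Z$ converges to $\sigma\int_0^t\sqrt{\tilde{Z}_s}\,dB_s^Z$ by the standard stability of stochastic integrals (equivalently, by identifying the limiting martingale problem). I expect the main obstacle to be the memory drift, where one must pass to the limit simultaneously in the measure and in the path. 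I would split $\int_0^s K^n(s-u)\tilde{Z}_u^n\,du-\int_0^s K(s-u)\tilde{Z}_u\,du$ into the measure-replacement term $\int_0^s (K^n-K)(s-u)\tilde{Z}_u^n\,du$, controlled by $0\le K-K^n\to0$ (Lemma \ref{lem:conv1}) tested against the uniform moments, and the path-replacement term $\int_0^s K(s-u)(\tilde{Z}_u^n-\tilde{Z}_u)\,du$, controlled by the almost sure convergence and the integrable dominating kernel $K$; the uniform continuity of $\sqrt{\cdot}$ on the relevant range then transfers this to $\sqrt{\tilde{\nu}_s^n}\to\sqrt{\tilde{\nu}_s}$.

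Finally, to conclude that the whole sequence converges I would prove uniqueness in law of the limiting equation. Because the coefficients are only Hölder-$\tfrac12$ in the state, I would establish pathwise uniqueness by a Yamada-Watanabe argument for the diffusion difference $\sigma(\sqrt{\tilde{Z}}-\sqrt{\tilde{Z}'})$, combined with a Gronwall estimate against the kernel $K$ to absorb the difference of the two memory drifts; pathwise uniqueness then yields uniqueness in law, and the customary subsubsequence argument shows $(\tilde{Z}^n)\Rightarrow(\tilde{Z})$ along the entire sequence. The genuinely delicate point is the interplay of the non-Lipschitz square root with the Volterra memory term, which rules out a direct contraction estimate and is precisely why the statement is phrased as weak rather than strong convergence.
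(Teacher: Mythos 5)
Your identification step is essentially the paper's: the two-term splitting of the memory drift into a path-replacement part, dominated by the integrable limit kernel (using that $\int e^{-rx}\mu^n(dx)\le\int e^{-rx}\mu(dx)=r^{\alpha-1}/\Gamma(\alpha)$ by the monotonicity in Lemma \ref{lem:conv1}), and a kernel-replacement part controlled by Lemma \ref{lem:conv1}, is exactly the estimate the paper uses to verify the continuity hypothesis of Proposition 5.1 of \cite{KP91}, which also packages the stability of the stochastic integrals that you invoke informally. Your tightness route differs: you derive uniform moment bounds via a singular Gronwall argument, whereas the paper gets relative compactness more cheaply from the a.s.\ comparison result (Lemma \ref{lem:comp}, based on Yamada's comparison theorem), which sandwiches $\tilde Z^n$ between $Z$ and the monotone limit depending on the sign of $\rho$. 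Your version is plausible but heavier, and your positivity claim "the additional drift vanishes at the origin" needs the same care the paper delegates to Lemma \ref{lem:comp}.

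The genuine gap is your step 3. The square root here sits in the \emph{drift}, not only in the diffusion, and the Yamada--Watanabe theorem requires the drift modulus $\kappa$ to satisfy $\int_{0^+}\kappa(u)^{-1}du=\infty$; for $\kappa(u)=C\sqrt{u}$, which is what the difference $\sqrt{\tilde\nu'}\,\bigl(\sqrt{\tilde Z}-\sqrt{\tilde Z'}\bigr)$ produces, this integral is finite, so the classical condition fails. Concretely, after the $\phi_\epsilon$-step you are left with an inequality of the form $f(t)\le C\int_0^t\sqrt{f(s)}\,ds+\dots$ for $f(t)=\Eop\bigl[|\tilde Z_t-\tilde Z_t'|\bigr]$, and the comparison ODE $f'\le C\sqrt{f}$, $f(0)=0$, admits the nonzero solution $f(t)=(Ct/2)^2$, so Gronwall does not close; your kernel-Gronwall only absorbs the $\sqrt{\tilde\nu}-\sqrt{\tilde\nu'}$ part (and even that needs $v_0>0$), while a sign argument on $\phi_\epsilon'\cdot(\sqrt{\tilde Z}-\sqrt{\tilde Z'})$ helps for only one sign of $\gamma\rho$. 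The paper avoids uniqueness altogether: by Lemma \ref{lem:comp} the sequence $(\tilde Z^n)$ is a.s.\ monotone in $n$, hence converges pointwise a.s.\ to a single limit, which pins down the limit law for the \emph{full} sequence once relative compactness is established; Proposition 5.1 of \cite{KP91} then identifies any limit point as a solution of the stated equation, with no uniqueness statement for the limiting Volterra SDE required. Replacing your uniqueness-plus-subsubsequence step by this monotonicity argument would close your proof.
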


Hence we have  convergence of $(\tilde{Z}_t^n)$ to a limit $ (\tilde{Z}_t)$ which satisfies the SDE we get by replacing the sum by the integral. We proceed with the wealth process itself.

The stochastic differential equation \eqref{eq:wealth2} can be solved explicitly and we obtain for an arbitrary admissible portfolio strategy $\pi$
\begin{eqnarray}\label{eq:comp}
  W_T^{\pi,n} &=& w_0 \exp\Big(\int_0^T  \Big(r+\pi_s \nu_s^n(\lambda-\frac12 \pi_s)\Big)ds + \int_0^T \pi_s \sqrt{\nu_s^n}dB_s^S\Big).
\end{eqnarray}
In what follows we denote
\begin{eqnarray}
V_n(w_0,v_0,z_0;\pi)& :=& \Eop_{w_0,v_0,z_0}\Big[\frac{1}{\gamma}(W_T^{\pi,n})^\gamma\Big],\\
V_n(w_0,v_0,z_0)  &:=& \sup_\pi V_n(w_0,v_0,z_0;\pi),\\
V(w_0,v_0,z_0;\pi)  &:= & \Eop_{w_0,v_0,z_0}\Big[\frac{1}{\gamma}(W_T^{\pi})^\gamma\Big],\
\end{eqnarray}
Then it can be shown that

\begin{lemma}\label{lem:conVpi}
Suppose that for fixed strategy $\pi$ the sequence $(W_T^{\pi,n})$ is uniformly integrable. Then $$ \lim_{n\to\infty} V_n(w_0,v_0,z_0;\pi)   = V(w_0,v_0,z_0,\pi).  $$ 
\end{lemma}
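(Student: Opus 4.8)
The plan is to exploit the explicit representation \eqref{eq:comp} of the approximate wealth together with the monotone almost sure convergence $\nu_s^n \uparrow \nu_s$ from Theorem \ref{theo:conv}, in order to first identify the pointwise limit of $(W_T^{\pi,n})$ and then upgrade the convergence of the utilities to convergence of the expectations by means of a uniform-integrability (Vitali) argument. Concretely, writing
\begin{equation*}
W_T^{\pi,n}= w_0 \exp\big( R^n\big), \qquad R^n:= \int_0^T \Big(r+\pi_s \nu_s^n(\lambda-\tfrac12 \pi_s)\Big)ds + \int_0^T \pi_s \sqrt{\nu_s^n}\,dB_s^S,
\end{equation*}
I would show that $R^n \to R$ in probability, where $R$ is the corresponding exponent built from $\nu_s$, so that $W_T^{\pi,n}\to W_T^\pi$ in probability and hence $(W_T^{\pi,n})^\gamma \to (W_T^\pi)^\gamma$ in probability by the continuous mapping theorem.

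For the exponent I would treat the two terms separately. The drift term converges pathwise by dominated convergence: the integrand $\pi_s\nu_s^n(\lambda-\tfrac12\pi_s)$ converges to $\pi_s\nu_s(\lambda-\tfrac12\pi_s)$ for a.e.\ $s$ and is dominated by $|\pi_s|(\lambda+\tfrac12|\pi_s|)\nu_s$, which is integrable on $[0,T]$ for a.e.\ $\omega$ because $\pi$ is admissible and $\nu^n_s\le \nu_s$. For the stochastic integral I would invoke the dominated-convergence theorem for It\^o integrals: since $0\le \nu_s^n\le \nu_s$ gives $(\sqrt{\nu_s}-\sqrt{\nu_s^n})^2\le \nu_s$, pathwise dominated convergence yields $\int_0^T \pi_s^2(\sqrt{\nu_s^n}-\sqrt{\nu_s})^2\,ds\to 0$ almost surely, and the standard stability estimate then gives $\int_0^T \pi_s\sqrt{\nu_s^n}\,dB_s^S \to \int_0^T \pi_s\sqrt{\nu_s}\,dB_s^S$ in probability. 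Combining the two pieces yields $R^n\to R$, and hence the asserted convergence in probability of $(W_T^{\pi,n})^\gamma$ (passing to an a.s.\ convergent subsequence if one prefers to argue pathwise).

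The decisive final step is Vitali's convergence theorem: convergence in probability together with uniform integrability of $\big((W_T^{\pi,n})^\gamma\big)$ yields $L^1$-convergence, i.e.\ $\Eop_{w_0,v_0,z_0}[(W_T^{\pi,n})^\gamma]\to \Eop_{w_0,v_0,z_0}[(W_T^{\pi})^\gamma]$, which is exactly $V_n(w_0,v_0,z_0;\pi)\to V(w_0,v_0,z_0;\pi)$. Here the hypothesis enters: for $\gamma\in(0,1)$ uniform integrability of $(W_T^{\pi,n})$ transfers directly to $\big((W_T^{\pi,n})^\gamma\big)$ via the elementary bound $x^\gamma\le 1+x$, so the assumption is precisely what is needed. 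I expect the main obstacle to be the stochastic-integral term, since the drift is a routine pathwise argument while the It\^o integral requires the dominated-convergence theorem for stochastic integrals and only delivers convergence in probability; the passage to expectations cannot be done by monotone convergence (the exponential destroys any useful monotonicity) and genuinely requires the uniform-integrability input, which is why it is imposed as a hypothesis.
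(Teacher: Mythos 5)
Your proposal is correct and follows the same overall architecture as the paper's proof: both start from the explicit exponential representation \eqref{eq:comp}, identify the limit of the drift term and the stochastic integral separately, and invoke the uniform-integrability hypothesis for the final passage to expectations. The differences lie in the execution, and yours is in two places the tighter argument. For the stochastic integral the paper works in $L^2$: it deduces $\Eop\big[\int_0^T \pi_s^2(\sqrt{\nu_s}-\sqrt{\nu_s^n})^2\,ds\big]\to 0$ by monotone convergence and concludes $L^2$- (hence $L^1$-) convergence of the It\^o integrals; this implicitly requires $\Eop\big[\int_0^T \pi_s^2\nu_s\,ds\big]<\infty$, whereas your pathwise dominated-convergence argument combined with the in-probability stability of It\^o integrals needs only $\int_0^T\pi_s^2\nu_s\,ds<\infty$ almost surely, which is already part of admissibility. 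For the final step the paper passes through the Skorokhod representation theorem to get almost sure convergence of $W_T^{\pi,n}$ on a suitable auxiliary space and then appeals to uniform integrability; your route via Vitali's theorem (convergence in probability plus UI gives $L^1$-convergence) is more direct and avoids the change of probability space. You also supply a detail the paper glosses over: for $\gamma\in(0,1)$, UI of $(W_T^{\pi,n})$ transfers to $\big((W_T^{\pi,n})^\gamma\big)$ via $x^\gamma\le 1+x$. Be aware, though, that for $\gamma<0$ this transfer fails, since small values of $W_T^{\pi,n}$ make $(W_T^{\pi,n})^\gamma$ large, so in that case the hypothesis must really be read as uniform integrability of the utilities $\big((W_T^{\pi,n})^\gamma\big)$ themselves; this point is equally unaddressed in the paper's own proof, so it is a shared looseness rather than a defect of your argument relative to the paper.
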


The value of $V_n(w_0,v_0,z_0)$ is given explicitly in Theorem \ref{theo:HJB} and Theorem \ref{theo:FK} and the limit $\bar{V}:=\lim_{n\to\infty}V_n$ exists due to monotone convergence (see Lemma \ref{lem:comp}). 
Finally we get

\begin{theorem}\label{theo:epsopt}[$\varepsilon$-optimal strategies]
Suppose the assumptions of Lemma \ref{lem:conVpi} are valid.
Let $\varepsilon>0$ and choose $n$ large enough such that $|\bar{V}-V_n|<\frac{\varepsilon}{2}$ as well as $|V_n(w_0,v_0,z_0;\pi^n)-V(w_0,v_0,z_0;\pi^n)|<\frac{\varepsilon}{2}$ where $\pi^n$ is the optimal strategy for approximation $n$. Then $\pi^n$ is $\varepsilon$-optimal for the original portfolio problem \eqref{eq:optprob}.
\end{theorem}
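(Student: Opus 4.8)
The plan is to prove the claim by an elementary two-$\varepsilon/2$ estimate that combines the two hypotheses imposed on the chosen $n$, and then to identify the limiting value $\bar V$ with the true optimal value $V=V(w_0,v_0,z_0)$ of problem \eqref{eq:optprob}. Throughout I would use that, since $\pi^n$ is optimal for the $n$-th approximation, $V_n(w_0,v_0,z_0;\pi^n)=V_n$, and that by definition of the supremum $V_n(w_0,v_0,z_0;\pi)\le V_n$ for every admissible $\pi$.

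First I would chain the two given bounds. Using optimality of $\pi^n$ for approximation $n$ together with the hypotheses $|\bar V - V_n|<\varepsilon/2$ and $|V_n(w_0,v_0,z_0;\pi^n)-V(w_0,v_0,z_0;\pi^n)|<\varepsilon/2$, I obtain
\begin{align*}
V(w_0,v_0,z_0;\pi^n)
&\ge V_n(w_0,v_0,z_0;\pi^n) - \big|V_n(w_0,v_0,z_0;\pi^n) - V(w_0,v_0,z_0;\pi^n)\big| \\
&> V_n - \tfrac{\varepsilon}{2}
\;\ge\; \bar V - \big|\bar V - V_n\big| - \tfrac{\varepsilon}{2}
\;>\; \bar V - \varepsilon .
\end{align*}
Thus $\pi^n$ attains a value within $\varepsilon$ of the limit $\bar V$, which is the whole content of the estimate.

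It then remains to identify $\bar V$ with $V$. For the inequality $V\le\bar V$, I would fix any admissible $\pi$ satisfying the integrability assumption of Lemma \ref{lem:conVpi}, observe $V_n(w_0,v_0,z_0;\pi)\le V_n$ for all $n$, and let $n\to\infty$: by Lemma \ref{lem:conVpi} the left-hand side converges to $V(w_0,v_0,z_0;\pi)$ while $V_n\uparrow\bar V$ by monotone convergence (cf. Lemma \ref{lem:comp}), giving $V(w_0,v_0,z_0;\pi)\le\bar V$; taking the supremum over $\pi$ yields $V\le\bar V$. For the reverse inequality, the display above provides for every $\varepsilon>0$ an admissible strategy $\pi^n$ with $V(w_0,v_0,z_0;\pi^n)>\bar V-\varepsilon$, whence $V\ge V(w_0,v_0,z_0;\pi^n)>\bar V-\varepsilon$, and letting $\varepsilon\downarrow 0$ gives $V\ge\bar V$. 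Hence $V=\bar V$, so the displayed estimate reads $V(w_0,v_0,z_0;\pi^n)>V-\varepsilon$, i.e.\ $\pi^n$ is $\varepsilon$-optimal for \eqref{eq:optprob}.

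The inequality arithmetic is routine; the genuinely delicate point is the existence of an $n$ for which the second hypothesis $|V_n(w_0,v_0,z_0;\pi^n)-V(w_0,v_0,z_0;\pi^n)|<\varepsilon/2$ can be met, because the strategy $\pi^n$ itself varies with $n$, whereas Lemma \ref{lem:conVpi} only delivers convergence along a \emph{fixed} strategy. This diagonal issue is precisely what the phrasing ``choose $n$ large enough'' encodes, and it is also where the uniform-integrability hypothesis on $(W_T^{\pi,n})$ must be invoked — both for the diagonal strategies $\pi^n$ and, in the step $V\le\bar V$, for an arbitrary admissible $\pi$. I expect verifying this uniform integrability, rather than the $\varepsilon$-bookkeeping, to be the main obstacle.
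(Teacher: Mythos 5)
Your proof is correct and follows essentially the same route as the paper: the paper likewise observes that $\bar{V}=\lim_{n\to\infty}\sup_\pi V_n(\cdot;\pi)\ge\sup_\pi\lim_{n\to\infty} V_n(\cdot;\pi)=V$ (the interchange justified exactly as in your step for $V\le\bar{V}$, via Lemma \ref{lem:conVpi} for fixed $\pi$ and the monotone convergence $V_n\uparrow\bar{V}$ from Lemma \ref{lem:comp}), and then concludes with the same triangle estimate $0\le V-V(\cdot;\pi^n)\le\bar{V}-V(\cdot;\pi^n)\le|\bar{V}-V_n|+|V_n-V(\cdot;\pi^n)|\le\varepsilon$, using $V_n(\cdot;\pi^n)=V_n$. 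The one deviation is your additional claim $V\ge\bar{V}$, hence $\bar{V}=V$: this is not needed for the conclusion, and as you argue it, it silently assumes that the diagonal hypothesis $|V_n(\cdot;\pi^n)-V(\cdot;\pi^n)|<\varepsilon/2$ can be met for \emph{every} $\varepsilon>0$, whereas the theorem only posits it for the one given $\varepsilon$ --- precisely because, as you yourself correctly flag, Lemma \ref{lem:conVpi} controls only fixed strategies and does not deliver the diagonal. The fix is trivial: your displayed estimate $V(\cdot;\pi^n)>\bar{V}-\varepsilon$ combined with $V\le\bar{V}$ already yields $V(\cdot;\pi^n)>V-\varepsilon$, so simply delete the step $V\ge\bar{V}$ and your argument coincides with the paper's proof.
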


\begin{proof}
Since
\begin{eqnarray}
\bar{V}&=&\lim_{n\to\infty}\sup_\pi V_n(\cdot;\pi),\\
{V}&=&\sup_\pi\lim_{n\to\infty} V_n(\cdot;\pi),
\end{eqnarray}
we obtain that $\bar{V}\ge V$.  Thus it follows
\begin{equation}
 0\le V-V(\cdot,\pi^n)\le \bar{V}-V(\cdot,\pi^n)\le |\bar{V}-V_n|+|V_n - V(\cdot;\pi^n)|\le \varepsilon
\end{equation}
which implies the statement.
\end{proof}

Hence we can solve the original problem up to an arbitrarily small error by solving the approximate problem. In the uncorrelated case it is possible to solve the original problem exactly. We will do this in the next section.

\subsection{The Uncorrelated Case}
When $\rho=0$ we get more explicit results. We first consider the differential equations \eqref{riccati1} and \eqref{ode2} which define the value function. Note that the second differential equation  does not depend on $n$. The Riccati equation for $\varphi$ formally turns for $n\to\infty$ to 
\begin{equation}
 \varphi_t(T-t) = \eta \int_0^{T-t} \int_0^\infty e^{-xs} \mu(dx)ds-\kappa\varphi(T-t)+\frac12\sigma^2 \varphi^2(T-t)
\end{equation}
with boundary condition $\varphi(0)=0$. Using \eqref{eq:Gamma} we obtain that
\begin{equation}
\int_0^\infty e^{-xs} \mu(dx) = \frac{s^{\alpha-1}}{\Gamma(\alpha)}
\end{equation}
and further
\begin{equation}
\int_0^{T-t} \frac{s^{\alpha-1}}{\Gamma(\alpha)}ds = \frac{(T-t)^\alpha}{\Gamma(\alpha+1)}.
\end{equation}
Finally, in the limiting case, the system of differential equations \eqref{riccati1}, \eqref{ode2} is given by
\begin{eqnarray}
\label{riccati2} \varphi_t(T-t) &=& \eta \frac{(T-t)^\alpha}{\Gamma(\alpha+1)} -\kappa\varphi(T-t)+\frac12\sigma^2 \varphi^2(T-t) \\ \label{ode3}
  \phi_t(T-t) &=& \gamma r+v_0\eta+\varphi (T-t)\kappa\theta
\end{eqnarray} 
with boundary condition $\varphi(0)=\phi(0)=0$. A relation between these differential equations which may be obvious, but has to be shown, is given in the next lemma:  

\begin{lemma}\label{lem:odeconvergence}
Let $\varphi^n$ be the solution of \eqref{riccati1} and $\varphi$ the solution of \eqref{riccati2}. Then it holds that $\varphi^n(t) \to \varphi(t)$ pointwise in $t$ for $n\to\infty$.
\end{lemma}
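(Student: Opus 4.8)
The plan is to notice that the two Riccati problems \eqref{riccati1} and \eqref{riccati2} have identical linear and quadratic parts and differ only through their inhomogeneous forcing terms; the whole statement then reduces to the convergence of these forcings, followed by a comparison-and-uniqueness argument. Writing $\tau=T-t$ and reading both equations as initial value problems in $\tau$ with $\varphi^n(0)=\varphi(0)=0$, the forcing terms are
\[
a_n(\tau):=\eta\int_0^\tau\!\!\int_0^\infty e^{-xs}\,\mu^n(dx)\,ds,\qquad a(\tau):=\eta\int_0^\tau\!\!\int_0^\infty e^{-xs}\,\mu(dx)\,ds=\eta\,\frac{\tau^\alpha}{\Gamma(\alpha+1)},
\]
the last identity being the computation already carried out in the text via \eqref{eq:Gamma}. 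Both equations share the vector field $y\mapsto -\kappa y+\tfrac12\sigma^2 y^2$.

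First I would prove $a_n(\tau)\to a(\tau)$ for each fixed $\tau$. For fixed $s>0$ the function $x\mapsto e^{-xs}$ is nonnegative and convex and lies in $L^1(\mu)$, since its $\mu$-integral equals $s^{\alpha-1}/\Gamma(\alpha)<\infty$; hence Lemma \ref{lem:conv1} applies and gives $\int e^{-xs}\mu^n(dx)\to\int e^{-xs}\mu(dx)$ monotonically in $n$. Integrating in $s$ over $[0,\tau]$ and using that $s^{\alpha-1}$ is integrable near $0$, the monotone convergence theorem yields $a_n(\tau)\to a(\tau)$, the convergence again being monotone (increasing if $\eta>0$, decreasing if $\eta<0$).

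Next I would control all the $\varphi^n$ on the common interval $[0,T_\infty]$ on which $\varphi$ is known to exist (Theorem \ref{theo:HJB}). Because $a_n$ is squeezed between $0$ and $a$, the constant function $0$ and the limit $\varphi$ are ordered sub/supersolutions of \eqref{riccati1}, so the scalar ODE comparison principle traps $\varphi^n$ between them; in particular each $\varphi^n$ exists on all of $[0,T_\infty]$ and $\sup_n\|\varphi^n\|_{\infty,[0,T_\infty]}\le\|\varphi\|_{\infty,[0,T_\infty]}=:C<\infty$. Since the forcings are monotone in $n$, comparison also makes $(\varphi^n(\tau))_n$ monotone, so a pointwise limit $\varphi^\infty(\tau):=\lim_n\varphi^n(\tau)$ exists on $[0,T_\infty]$.

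Finally I would identify $\varphi^\infty$ with $\varphi$. Passing to the limit in the integral form
\[
\varphi^n(\tau)=\int_0^\tau\Big(a_n(r)-\kappa\varphi^n(r)+\tfrac12\sigma^2\varphi^n(r)^2\Big)\,dr,
\]
where the integrand converges pointwise and is dominated on $[0,T_\infty]$ by the bound $C$ just obtained, shows that $\varphi^\infty$ solves \eqref{riccati2} with $\varphi^\infty(0)=0$; uniqueness for \eqref{riccati2} (its right-hand side is locally Lipschitz) then forces $\varphi^\infty=\varphi$, which is the claim. The main obstacle is precisely the finite-time blow-up intrinsic to Riccati equations: one must ensure that the whole family $(\varphi^n)$ lives on the single interval $[0,T_\infty]$ and stays uniformly bounded there, since otherwise the quadratic term is not globally Lipschitz and the passage to the limit cannot be justified. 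This is exactly what the squeezing $0\le a_n\le a$ (resp.\ $a\le a_n\le 0$) buys through the comparison principle; with the uniform bound $C$ in hand one could equally finish by a Gronwall estimate on $\Delta_n:=\varphi^n-\varphi$, which satisfies $|\Delta_n'|\le|a_n-a|+(\kappa+\sigma^2C)|\Delta_n|$ and hence converges to $0$, giving an explicit rate.
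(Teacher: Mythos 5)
Your proposal is correct and takes essentially the same route as the paper: pointwise monotone convergence of the forcing terms via Lemma \ref{lem:conv1}, a scalar ODE comparison argument trapping $\varphi^n$ between the trivial sub/supersolution and $\varphi$ on $[0,T_\infty]$, and then control of the difference $\varphi-\varphi^n$ --- indeed your closing Gronwall remark on $\Delta_n$ is exactly the paper's explicit variation-of-constants formula for $\xi^n=\varphi-\varphi^n$ combined with dominated convergence. Your primary finish (monotone pointwise limit, passage to the limit in the integral equation, then uniqueness from local Lipschitzness) and your explicit handling of the sign of $\eta$ are minor, correct variants of the paper's argument.
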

 
This result can be used to show
 
\begin{theorem}\label{theo:limitcase} [Solution for the fractional path Problem]
The optimal portfolio strategy for problem \eqref{eq:optprob} in case $\rho=0$ is given by $\pi_t^*\equiv \frac{\lambda}{1-\gamma}$  and the value function can be written as
\begin{equation}\label{eq:valuerough}
V(w_0,v_0,z_0) = \frac1\gamma w_0^\gamma \exp\Big(\phi(T)+\varphi(T)z_0 \Big)
\end{equation}
where $\varphi$ and $\phi$ are solutions of \eqref{riccati2} and \eqref{ode3}. In case $\gamma<0$ we have to assume that $(W_T^{\pi,n})$ is uniformly integrable for all $\pi$.
\end{theorem}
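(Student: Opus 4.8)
The plan is to read off the fractional solution as the $n\to\infty$ limit of the finite-dimensional approximations from Theorem \ref{theo:valuefinite}, exploiting the crucial feature of the uncorrelated case that the approximate optimizers are all the \emph{same} constant. By Theorem \ref{theo:valuefinite} with $\rho=0$, the optimizer of the $n$-th problem is $\pi_t^n\equiv\tfrac{\lambda}{1-\gamma}=:\pi^\ast$, independent of $n$, with value given by \eqref{eq:Vexpression}. Evaluating \eqref{eq:Vexpression} at the initial point $t=0$, where $Y_0^{x_i}=0$ (the defining integral running over the empty interval $[0,0]$) so that every term $\psi_i(T)y_i$ from \eqref{eq:psii} drops out, I would obtain
\begin{equation*}
V_n(w_0,v_0,z_0)=\frac1\gamma w_0^\gamma\exp\big(\phi^n(T)+\varphi^n(T)z_0\big),
\end{equation*}
writing $\varphi^n,\phi^n$ for the solutions of \eqref{riccati1}, \eqref{ode2}.

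Next I would push the limit through the coefficient functions. Lemma \ref{lem:odeconvergence} already delivers $\varphi^n(T)\to\varphi(T)$, with $\varphi$ solving the limiting Riccati equation \eqref{riccati2}. Since $\phi$ is recovered from $\varphi$ by the single quadrature $\phi(T)=\int_0^T\big(\gamma r+v_0\eta+\kappa\theta\,\varphi(s)\big)\,ds$ (and $\phi^n$ from $\varphi^n$ in the same way via \eqref{ode2}), the pointwise convergence $\varphi^n\to\varphi$ together with dominated convergence on the compact interval $[0,T_\infty]$ gives $\phi^n(T)\to\phi(T)$, with $\phi$ solving \eqref{ode3}. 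Hence the monotone limit $\bar V:=\lim_n V_n(w_0,v_0,z_0)$ (which exists by Lemma \ref{lem:comp}) equals $\frac1\gamma w_0^\gamma\exp(\phi(T)+\varphi(T)z_0)$, i.e.\ the right-hand side of \eqref{eq:valuerough}.

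It then remains to identify $\bar V$ with the true value $V$ and to certify $\pi^\ast$ as optimal. The inequality $V\ge V(w_0,v_0,z_0;\pi^\ast)$ is immediate from the definition of the supremum. For the reverse bound $V\le\bar V$, when $\gamma>0$ Fatou's lemma applied to the nonnegative integrands $(W_T^{\pi,n})^\gamma$ yields $V(\cdot;\pi)\le\liminf_n V_n(\cdot;\pi)\le\bar V$ for every $\pi$, and taking the supremum gives $V\le\bar V$; when $\gamma<0$ the same conclusion is reached under the assumed uniform integrability, exactly as in the proof of Theorem \ref{theo:epsopt}. Finally, applying Lemma \ref{lem:conVpi} to the single strategy $\pi^\ast$ gives $V(\cdot;\pi^\ast)=\lim_n V_n(\cdot;\pi^\ast)$, and since $\pi^\ast$ is optimal for each approximation this last limit equals $\lim_n V_n(w_0,v_0,z_0)=\bar V$. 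Chaining these, $\bar V\ge V\ge V(\cdot;\pi^\ast)=\bar V$, so all three coincide, whence $V$ equals the right-hand side of \eqref{eq:valuerough} and the constant $\pi^\ast=\tfrac{\lambda}{1-\gamma}$ is optimal (its admissibility being clear).

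The main obstacle, and the point where the two regimes for $\gamma$ genuinely differ, is the uniform-integrability hypothesis of Lemma \ref{lem:conVpi} for $\pi^\ast$. By Theorem \ref{theo:conv} we have $\nu_t^n\uparrow\nu_t$, so $W_T^{\pi^\ast,n}\to W_T^{\pi^\ast}$ almost surely once the convergence $\int_0^T\sqrt{\nu_s^n}\,dB_s^S\to\int_0^T\sqrt{\nu_s}\,dB_s^S$ is secured, and the expectations converge provided $(W_T^{\pi^\ast,n})$ is uniformly integrable. For $0<\gamma<1$ the terminal utility is nonnegative and I expect this to follow from an $L^p$-bound ($p>1$) on $W_T^{\pi^\ast,n}$ via exponential-moment estimates for the integrated volatility $\int_0^T\nu_s^n\,ds\le\int_0^T\nu_s\,ds$; such estimates are available only up to a finite horizon, which is precisely the source of the restriction $T\le T_\infty$ (together with the Feller condition on $Z$). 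For $\gamma<0$ the map $x\mapsto\frac1\gamma x^\gamma$ blows up to $-\infty$ as $x\downarrow0$, no such bound is available, and the uniform integrability must be imposed as a hypothesis, exactly as in the statement.
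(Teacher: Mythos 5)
Most of your argument coincides with the paper's own proof: evaluating \eqref{eq:Vexpression} at $t=0$ with $Y_0^{x_i}=0$, passing $\varphi^n(T)\to\varphi(T)$ via Lemma \ref{lem:odeconvergence} and $\phi^n(T)\to\phi(T)$ via the quadrature for $\phi$, and obtaining the upper bound $V\le\bar V$ from $W_T^{\pi,n}\to W_T^{\pi}$ together with Fatou's lemma (lower bound $0$ for $\gamma>0$, the assumed uniform integrability for $\gamma<0$) is exactly the paper's route. The difference, and the genuine gap, lies in the attainment step. You close the sandwich by the equality $V(\cdot;\pi^\ast)=\lim_n V_n(\cdot;\pi^\ast)$ via Lemma \ref{lem:conVpi}, but that lemma carries the hypothesis that $(W_T^{\pi^\ast,n})_n$ is uniformly integrable, and for $0<\gamma<1$ the theorem makes no such assumption. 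At precisely this point you only write that you \emph{expect} the required uniform integrability to follow from an $L^p$-bound via exponential-moment estimates for $\int_0^T\nu_s^n\,ds$; no such estimate is proved, it is not available elsewhere in the paper, and establishing finiteness of $\Eop[\exp(c\int_0^T\nu_s\,ds)]$ for an exponent $c$ strictly larger than $\eta=\tfrac{\gamma\lambda^2}{2(1-\gamma)}$ is a nontrivial extra step that would in general force a further restriction of the horizon. As written, your proof of the theorem is therefore incomplete in the regime $0<\gamma<1$.

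The paper circumvents this entirely by a direct computation in the limiting (non-approximated) model, which your proposal is missing: solving the wealth SDE with $\pi^\ast\equiv\tfrac{\lambda}{1-\gamma}$ gives
\begin{equation*}
\frac1\gamma\big(W_T^{\pi^\ast}\big)^\gamma=\frac1\gamma w_0^\gamma\exp\Big(\gamma rT+\eta\int_0^T\nu_s\,ds\Big)\,M_T,
\qquad
M_T=\exp\Big(-\frac12\int_0^T\frac{\lambda^2\gamma^2}{(1-\gamma)^2}\nu_s\,ds+\int_0^T\frac{\lambda\gamma}{1-\gamma}\sqrt{\nu_s}\,dB_s^S\Big),
\end{equation*}
and since $\rho=0$ makes $B^S$ independent of $B^Z$, conditioning on $\FC_T^Z$ and using $\Eop[M_T]=1$ (the Liptser--Shiryaev argument cited in the paper) identifies $\Eop\big[\tfrac1\gamma(W_T^{\pi^\ast})^\gamma\big]$ with $\tfrac1\gamma w_0^\gamma e^{\gamma rT}\,\Eop\big[\exp\big(\eta\int_0^T\nu_s\,ds\big)\big]$, which equals $\bar V$ by monotone convergence (Theorem \ref{theo:conv}) and Lemma \ref{lem:odeconvergence}. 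This needs no uniform-integrability hypothesis on $(W_T^{\pi^\ast,n})$, which is why the theorem can restrict that assumption to $\gamma<0$ — and there it is used only in the Fatou step for arbitrary $\pi$, not for attainment, so your closing diagnosis of where the two regimes differ is also slightly misplaced. To repair your proof, either carry out the exponential-moment estimate you sketch, or replace the appeal to Lemma \ref{lem:conVpi} by the conditioning computation above.
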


\begin{remark}
If we set $v_0=\alpha=0$ in \eqref{riccati2} and \eqref{ode3} then the resulting differential equations are the same as for the classical Heston model with CIR volatility (see e.g. \cite{K05}). This implies that in the limiting case $\alpha\to 0$, the considered model corresponds to the classical Heston model. See also the discussion in Section 2.
\end{remark}

The value function in Theorem \ref{theo:limitcase} gives the maximal expected utility at time point $t=0$. In Theorem \ref{theo:valuefinite}, the optimal value is presented for an arbitrary starting time point $t\in[0,T]$. In this case, the value depends on the history of the realized volatility via the variables $y_i=Y_t^{x_i}$. Of course it is possible to derive from \eqref{eq:Vexpression} the value function in the fractional Heston model by taking $n\to\infty$. In order to do so, we have to consider the additional term $\sum_{i=1}^n \psi_i(T-t)y_i$ which appears in the exponential.
For $n\to \infty$ we obtain
\begin{eqnarray}
\nonumber\lim_{n\to\infty} \sum_{i=1}^n \psi_i(T-t)y_i &=& \lim_{n\to\infty} \eta\sum_{i=1}^n q_i y_i \int_0^{T-t} e^{-x_i s} ds\\
&=& \eta \int_0^\infty Y_t^x \int_0^{T-t} e^{-x s} ds \mu(dx).
\end{eqnarray}
This term can be expressed with the help of $(Z_t)$ as follows:
\begin{eqnarray}
\nonumber \int_0^\infty Y_t^x \int_0^{T-t} e^{-x s} ds \mu(dx) &=&  \int_0^\infty Y_t^x \frac{1}{x}\big( 1- e^{-x(T-t)}\big) \mu(dx)\\
 &=& \int_0^\infty \int_0^t  \frac{e^{ux}}{x}\big(e^{-tx}- e^{-Tx}\big) Z_u du \mu(dx).
\end{eqnarray}
Thus the value function in the fractional Heston model at time $t$, given $W_t=w, (Z_s)_{0\le s\le t} = (z_s)_{0\le s\le t} ,z_t=z,$ can be written as
\begin{equation}\label{eq:value_frac}
\frac1\gamma w^\gamma \exp\Big(\phi(T-t)+\varphi(T-t)z \Big) 
\exp\Big(\eta\int_0^\infty \int_0^t  \frac{e^{ux}}{x}\big(e^{-tx}- e^{-Tx}\big) Z_u du \mu(dx)\Big),
\end{equation}
where as before $\varphi$ and $\phi$ are solutions of \eqref{riccati2}.
Now further note that since
\begin{equation}
e^{-t x} \le  \frac{e^{-Tx}-e^{-tx}}{x(T-t)}\le -e^{-Tx} 
\end{equation}
the last factor in \eqref{eq:value_frac} is bounded from below by
\begin{eqnarray}
\nonumber && \exp\Big(\eta\int_0^\infty \int_0^t  \frac{e^{ux}}{x}\big(e^{-tx}- e^{-Tx}\big) Z_u du \mu(dx)\Big)\ge\exp\Big(\eta(T-t)\int_0^\infty \int_0^t  e^{x(u-T)} Z_u du \mu(dx)\Big)\\
\nonumber &=  & \exp\Big(\eta(T-t)\int_0^t  \int_0^\infty  e^{x(u-T)}  \mu(dx)Z_u du\Big)
=  \exp\Big(\frac{\eta(T-t)}{\Gamma(\alpha)}\int_0^t (T-u)^{\alpha-1}  Z_u du\Big)\\
&\ge& \exp\Big(\frac{\eta(T-t)}{\Gamma(\alpha) T^{1-\alpha}}\int_0^t   Z_u du\Big) > 1
\end{eqnarray}
This shows that given the knowledge about some historical data from $(Z_t)$ increases the value function in this fractional Heston model. This might be expected because the positive dependence of the volatility should be exploited. Indeed, suppose $T-t$ is fixed and both $T$ and $t$ are increasing. Since on average $Z_u \approx \theta$, the factor grows with rate $T^\alpha$, i.e. a larger history yields a larger portfolio value. 


\begin{remark}
It is well-known that the portfolio optimization problem with logarithmic utility function, i.e. $U(x)=\ln(x)$ can be obtained in the limit from the power utility problem  when we let $\gamma\downarrow 0$. Obviously in this case the strategy $\pi_t^* \equiv \lambda$ is optimal. This can be seen by a direct investigation of the optimization problem.
\end{remark}

\section{The Rough Volatility Case $\alpha\in (-1,-\frac12)$}\label{sec:rough}
Let now {$\alpha = 2\h -1 \in (-1,-\frac12)$ with Hurst index $\h \in \left(0,\tfrac{1}{4}\right)$}. In this case we cannot use the definition \eqref{vola} which only makes sense for exponents $\alpha>0$. Also the definition $\frac{d}{dt} I_0^{\alpha+1}f(t)$ cannot be applied since this requires $f$ to be absolutely continuous, however we have to plug in the paths of $(Z_t)$ which are only almost H\"older $\frac12$-regular (see appendix). Instead we use here the so-called {\em Marchaud fractional derivative}. It is for $\alpha\in (-1,-\frac12)$ defined by (see \cite{SKM93}, sec. 13.1)
\begin{equation}
D^{\alpha+1}_0f(t) = f(t) \frac{t^{-\alpha-1}}{\Gamma(-\alpha)}+\frac{\alpha+1}{\Gamma(-\alpha)} \int_0^t \frac{f(t)-f(s)}{(t-s)^{\alpha+2}}ds
\end{equation}
which coincides with the Riemann-Liouville fractional derivative if $f$ is sufficiently differentiable  (see \cite{SKM93}, Section 13) and is defined for H\"older $\delta$-continuous functions with $\alpha+1<\delta\le 1$. Since $\delta<\frac12$ in our application, $\alpha$ has to be from the interval $(-1,-\frac12)$. Thus, the volatility process is now defined  with  $\nu_0 := v_0\ge 0$ by:
\begin{equation}\label{vola2}
\nu_t = v_0 + Z_t  \frac{t^{-\alpha-1}}{\Gamma(-\alpha)} + \frac{\alpha+1}{\Gamma(-\alpha)} \int_0^t \frac{Z_t-Z_s}{(t-s)^{\alpha+2}}ds
 \end{equation}
where  again
\begin{equation}\label{eq:Z2}
d Z_t = \kappa(\theta -Z_t) dt + \sigma \sqrt{Z_t} d B_t^Z
\end{equation}
with $Z_0:= z_0\ge 0$. The paths of the volatility process~\eqref{eq:Z2} exhibit a rough behavior as is illustrated in Section~\ref{sec:simulation}. Note that \begin{equation}
 \lim_{\alpha\to -1} D^{\alpha+1}_0f(t) =f(t)
\end{equation}
which means that in the limiting case $\alpha\downarrow -1$ we obtain again the classical Heston model.
Now using the fact that for $\alpha\in (-1,-\frac12)$ we have
\begin{equation}\label{eq:Gamma2} (t-s)^{-\alpha-2} = \frac{\Gamma(-\alpha)}{\alpha+1}\int_0^\infty e^{-(t-s)x} \tilde{\mu}(dx),\quad\mbox{with  } \tilde{\mu}(dx)= \frac{x^{\alpha+1} dx}{ \Gamma(-\alpha)\Gamma(\alpha+1)}
\end{equation}
we obtain with the Fubini Theorem 
\begin{eqnarray}
\nonumber \nu_t &=& v_0 + Z_t  \frac{t^{-\alpha-1}}{\Gamma(-\alpha)} + \int_0^t (Z_t-Z_s)\int_0^\infty e^{-(t-s)x} \tilde{\mu}(dx)ds\\ \nonumber
&=& v_0 + Z_t  \frac{t^{-\alpha-1}}{\Gamma(-\alpha)} +\int_0^\infty  \int_0^t (Z_t-Z_s) e^{-(t-s)x} ds\tilde{\mu}(dx)\\
&=& v_0 + Z_t  \frac{t^{-\alpha-1}}{\Gamma(-\alpha)} +\int_0^\infty  \tilde{Y}_t^x \tilde{\mu}(dx)\label{eq:nuneg}
\end{eqnarray}
where 
\begin{equation}\tilde{Y}_t^x := \int_0^t (Z_t-Z_s) e^{-(t-s)x} ds.
\end{equation}
Using partial integration we see that $(\tilde{Y}_t^x)$ satisfies the stochastic differential equation
\begin{equation}
d\tilde{Y}_t^x = \Big(\frac{1}{x}(1-e^{-tx})\kappa(\theta-Z_t) - x \tilde{Y}_t^x\Big) dt +\frac{1}{x}(1-e^{-tx}) \sigma \sqrt{Z_t} dB_t^Z.
\end{equation}
Unfortunately it turns out that $\nu_t$ is not positive with probability one. To remedy this shortcoming we consider $a(\nu_t)$ as volatility with $a:\R\to \R_+$ sufficiently smooth, i.e. the stock price process $S=(S_t)$ is now given by
\begin{equation}\label{stockprice2_rough} d S_t = S_t \left( (r+\lambda a(\nu_t)) dt + \sqrt{a(\nu_t)}d B_t^S\right). \end{equation}
The  wealth process under an admissible portfolio strategy $\pi$ is thus given by the solution of the stochastic differential equation
\begin{eqnarray}
d W_t^\pi &=& W_{t}^\pi (r+ \pi_t \lambda a(\nu_t))dt + W_t^\pi\pi_t \sqrt{a(\nu_t)}d B_t^S,
\end{eqnarray} where we assume that ${W}_0=w_0>0$ is the given initial
wealth. We want to solve the same  optimization problem \eqref{eq:optprob} and proceed as in Section \ref{sec:approx}, i.e. we consider the  finite dimensional  approximation 
\begin{equation}\label{eq:nuapproxneg}
\nu_t ^n := v_0+Z_t  \frac{t^{-\alpha-1}}{\Gamma(-\alpha)}  +\int_0^\infty  \tilde{Y}_t^x  \tilde{\mu}^n(dx)=  v_0+Z_t  \frac{t^{-\alpha-1}}{\Gamma(-\alpha)}  + \sum_{i=1}^{n} \tilde{q}_i^n \tilde{Y}_t^{x_i^n}.
\end{equation}
The dynamics of the approximate stock price process are given by
\begin{equation}\label{stockprice2_} 
d S_t = S_t \left( (r+\lambda a(\nu_t^n)) dt + \sqrt{a(\nu_t^n)}d B_t^S\right)\end{equation}
and the stochastic differential equation  for the approximate wealth process is thus
\begin{equation}\label{eq:wealth2_}
d W_t^\pi =W_{t}^\pi \Big(r+ \pi_t \lambda a(\nu_t^n)\Big)dt + W_t^\pi \pi_t \sqrt{a(\nu_t^n)}d B_t^S.
\end{equation}

The finite dimensional classical stochastic optimal control problem is here defined by
\begin{equation}V(t,w,\tilde{y}_1,\ldots,\tilde{y}_n,z) := \sup_{\pi} \Eop_{t,w,\tilde{y}_1,\ldots,\tilde{y}_n,z}\left[\frac1\gamma\big({W}^{\pi}_T\big)^\gamma\right].
\end{equation}
where $\Eop_{t,w,\tilde{y}_1,\ldots,\tilde{y}_n,z}$ is the conditional expectation given $W_t=w, \tilde{Y}_t^{x_i}=\tilde{y}_i, Z_t=z$ at time $t$. As before portfolio strategies are $(\FC_t)$-adapted processes. In what follows we derive the corresponding Hamilton-Jacobi-Bellman (HJB) equation for this optimization problem. We denote the generic function by $G(t,w,\tilde{y}_1,\ldots,\tilde{y}_n,z)$ with $t\in[0,T], w> 0, \tilde{y}_i \ge 0, z> 0$. The boundary condition is given by $G(T,w,\tilde{y}_1,\ldots,\tilde{y}_n,z)= \frac 1\gamma w^\gamma$. In order to ease notation we set $\beta := a(v_0+z\frac{t^{-\alpha-1}}{\Gamma(-\alpha)}+\sum_{i=1} ^{n}  \tilde{q}_i \tilde{y}_i)$. Thus, the HJB equation reads
\begin{eqnarray}\label{eq:HJB1neg}
 \nonumber 0 = \sup_{u\in\R}\! \! \! \! && \! \! \! \! \Big\{G_t +G_w w(r+u\lambda \beta)   + \sum_{i=1}^n G_{\tilde{y}_i} \Big(\frac1{x_i}(1-e^{-tx_i})\kappa(\theta-z)-x_i\tilde{y}_i \Big) +G_z \kappa(\theta-z) \\ \nonumber
   &&+ \frac12 G_{ww}w^2 u^2 \beta  + \frac12 G_{zz} \sigma^2 z
   +\frac12 \sigma^2 z \sum_{i=1}^n\sum_{j=1}^n  G_{\tilde{y}_i\tilde{y}_j}\frac1{x_ix_j}  (1-e^{-tx_i})(1-e^{-tx_j})\\ \nonumber
   && + \sigma^2 z  \sum_{i=1}^n G_{\tilde{y}_iz} \frac1{x_i}(1-e^{-tx_i}) +G_{wz}wu\sigma\rho\sqrt{z\beta}\\
   &&+\sum_{i=1}^nG_{w\tilde{y}_i}\rho wu\sigma \sqrt{z\beta}\frac1{x_i}(1-e^{-tx_i})\Big\}.
\end{eqnarray}

To solve the Hamilton-Jacobi-Bellman equation, we start with the same transformation as in the proof of Theorem \ref{theo:HJB}. Using the Ansatz $G(t,w,\tilde{y}_1,\ldots,\tilde{y}_n,z)= \frac1\gamma w^\gamma f(t,\tilde{y}_1,\ldots,\tilde{y}_n,z)$ with $f(T,\tilde{y}_1,\ldots,\tilde{y}_n,z)=1$ and  plugging it into \eqref{eq:HJB1neg} where we use the abbreviation  $h_i(t) = \frac{1}{x_i}(1-e^{-tx_i})$ we obtain:
\begin{eqnarray}
\nonumber  0 = \sup_{u\in\R} \! \! \! \! &&\!\! \! \!  \Big\{f_t +f(r+u\lambda \beta)\gamma   + \sum_{i=1}^n f_{y_i} \big( h_i(t)\kappa(\theta-z)-x_i\tilde{y}_i\big) + f_z\kappa(\theta-z)\\ \nonumber
   && +\frac12 (\gamma-1)\gamma u^2 \beta f + \frac12 f_{zz} \sigma^2 z+\frac12 \sigma^2 z\sum_{i=1}^n  \sum_{j=1}^n f_{\tilde{y}_i\tilde{y}_j}h_i(t) h_j(t) \\
   && + \sigma^2 z \sum_{i=1}^n  f_{z\tilde{y}_i}h_i(t) +f_z \gamma u\sigma\rho \sqrt{z\beta}+ \sum_{i=1}^n f_{\tilde{y}_i} \gamma u\sigma\rho\sqrt{z\beta} h_i(t)\Big\}.
\end{eqnarray}
Maximizing this expression in $u$ gives \begin{equation}
u^*_t = \frac{\lambda}{1-\gamma} + \frac{\sigma\rho}{1-\gamma}\sqrt{\frac{z}{\beta}} \Big( \frac{f_z}{f}+\frac{\sum_{i=1}^n f_{\tilde{y}_i}h_i(t)}{f}\Big). 
\end{equation} In case $\rho=0$ we get again $u^*_t = \frac{\lambda}{1-\gamma} $ independent of $t,w,\tilde{y}_1,\ldots,\tilde{y}_n$. Inserting the maximum point yields
\begin{eqnarray}\label{eq:pdefneg}
 \nonumber\!\! 0 \!\!&=&\!\!\! f_t +\frac12 f\gamma \frac{\lambda^2 \beta}{1-\gamma} +f\gamma r \!+ \!\sum_{i=1}^n f_{y_i} \big( h_i(t) \kappa(\theta\! -\! z)-x_i\tilde{y}_i\big) \! +\! f_z\kappa (\theta\!-\! z) + \!\!\frac12 \sigma^2 z f_{zz}\\ \nonumber
  && +\frac12 \sigma^2 z\sum_{i=1}^n  \sum_{j=1}^n f_{\tilde{y}_i\tilde{y}_j}h_i(t) h_j(t) + \sigma^2 z \sum_{i=1}^n  f_{z\tilde{y}_i}h_i(t)\\ 
  && + \frac12 \frac{z}{f} \frac{\gamma\sigma^2\rho^2}{1-\gamma} \Big(f_z+ \sum_{i=1}^n f_{\tilde{y}_i} h_i(t)\Big)^2+\frac{\sigma\rho\lambda\gamma}{1-\gamma}\sqrt{z\beta} \Big(f_z+ \sum_{i=1}^n f_{\tilde{y}_i} h_i(t)\Big).
\end{eqnarray}
Unfortunately this PDE is rather involved and has to be solved numerically. In case $\rho=0$ we obtain:

\begin{theorem}\label{theo:FKneg}
Suppose $\rho=0$ and a classical solution $f$ of the partial differential equation \eqref{eq:pdefneg} with boundary condition $f(T,\tilde{y}_1,\ldots,\tilde{y}_n,z)=1$ exists for $t\in[0,T], T\le T_\infty, w>0, \tilde{y}_i>0, z>0$. Then it can  be written as
$$ f(t,\tilde{y}_1,\ldots,\tilde{y}_n,z) = \Eop_{t,\tilde{y}_1,\ldots,\tilde{y}_n,z}\left[ e^{\int_t^T \Big( \gamma r+ \frac{1}{2}\frac{\gamma \lambda^2}{1-\gamma} a(\nu_s^n)  \Big) ds} \right]$$
where $(\nu_t^n)$ is given by \eqref{eq:nuapproxneg}.
\end{theorem}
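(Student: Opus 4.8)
The plan is to exploit that imposing $\rho=0$ collapses the PDE \eqref{eq:pdefneg} into a \emph{linear} second-order parabolic equation, so the claim follows from the Feynman-Kac theorem exactly as in the proof of Theorem~\ref{theo:FK}. First I would set $\rho=0$ in \eqref{eq:pdefneg}, which annihilates the two $\rho$-terms in the last line and leaves
$$0 = f_t + f\Big(\gamma r + \tfrac12\tfrac{\gamma\lambda^2}{1-\gamma}\beta\Big) + \sum_{i=1}^n f_{\tilde y_i}\big(h_i(t)\kappa(\theta-z) - x_i\tilde y_i\big) + f_z\kappa(\theta-z) + \tfrac12\sigma^2 z f_{zz} + \tfrac12\sigma^2 z\sum_{i,j} f_{\tilde y_i\tilde y_j}h_i(t)h_j(t) + \sigma^2 z\sum_i f_{z\tilde y_i}h_i(t),$$
where $\beta = a(\nu_t^n)$ and $h_i(t)=\tfrac1{x_i}(1-e^{-tx_i})$.

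Next I would identify the $(n+1)$-dimensional diffusion $X=(\tilde Y^{x_1},\ldots,\tilde Y^{x_n},Z)$ that, for $\rho=0$, is driven solely by $B^Z$ with $d\tilde Y_t^{x_i} = (h_i(t)\kappa(\theta-Z_t)-x_i\tilde Y_t^{x_i})\,dt + h_i(t)\sigma\sqrt{Z_t}\,dB_t^Z$ and $dZ_t=\kappa(\theta-Z_t)\,dt+\sigma\sqrt{Z_t}\,dB_t^Z$. A direct computation of covariations gives $d\langle\tilde Y^{x_i},\tilde Y^{x_j}\rangle_t = h_i(t)h_j(t)\sigma^2 Z_t\,dt$, $d\langle\tilde Y^{x_i},Z\rangle_t = h_i(t)\sigma^2 Z_t\,dt$ and $d\langle Z\rangle_t=\sigma^2 Z_t\,dt$, so the generator of $X$ matches exactly the first- and second-order part of the displayed equation, while the zeroth-order coefficient $c(t,\tilde y_1,\ldots,\tilde y_n,z) = \gamma r + \tfrac12\tfrac{\gamma\lambda^2}{1-\gamma}a(\nu_t^n)$ plays the role of the potential. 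In contrast to Theorem~\ref{theo:FK}, where a drift-modified process $\tilde Z$ appeared, here no measure change is needed: for $\rho=0$ the relevant process is the original $Z$ together with $\tilde Y^{x_i}$, and the expectation is taken under $\Pop$.

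With this identification I would invoke Theorem~1 of \cite{HS00}, taking $X^i=\tilde Y^{x_i}$ for $i\le n$ and the last coordinate equal to $Z$, with $g\equiv 0$, $h\equiv 1$, and $c$ as above; the boundary condition $f(T,\cdot)=1$ then produces the stated expectation. Equivalently, one may argue directly by applying It\^o's formula to $M_s := f(s,\tilde Y_s,Z_s)\exp(\int_t^s c(u,\cdot)\,du)$, using the PDE to cancel the finite-variation part, and concluding $f(t,\cdot)=\Eop_{t,\cdot}[M_T]=\Eop_{t,\cdot}[\exp(\int_t^T c\,du)]$ once $M$ is shown to be a true martingale rather than merely local.

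The main obstacle is verifying the integrability and regularity hypotheses (A1), (A2), (A3') of \cite{HS00} (equivalently, the true-martingale property of $M$), which are more delicate here than in Theorem~\ref{theo:FK}: the diffusion coefficients now carry explicit time-dependence through $h_i(t)$, and the potential involves the composition $a(\nu_t^n)$. I would control this by noting that $0\le h_i(t)\le 1/x_i$ is bounded on $[0,T]$, that the Feller condition keeps $Z_t$ strictly positive with finite moments, and that $a$ is assumed sufficiently smooth; the genuinely restrictive point is the finiteness of the exponential moment of $\int_t^T a(\nu_s^n)\,ds$ when $\gamma>0$ (when $\gamma<0$ the potential is bounded above and the exponential is trivially integrable), which is why an at most linear growth assumption on $a$, together with the validity on a possibly restricted horizon $[0,T_\infty]$, is the natural hypothesis to close the argument.
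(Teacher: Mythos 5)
Your proposal is correct and follows essentially the same route as the paper: the paper's proof of Theorem~\ref{theo:FKneg} simply invokes the Feynman--Kac equivalence result (Theorem~1 of \cite{HS00}) exactly as in the proof of Theorem~\ref{theo:FK}, identifying the components $X^i$ with $\tilde{Y}^{x_1},\ldots,\tilde{Y}^{x_n}, Z$, with $g\equiv 0$, $h\equiv 1$ and the potential $\gamma r + \tfrac12\tfrac{\gamma\lambda^2}{1-\gamma}a(\nu_t^n)$, which is precisely your identification (including the correct observation that for $\rho=0$ no drift-modified $\tilde{Z}$ is needed). Your explicit covariation computation and the discussion of hypotheses (A1), (A2), (A3') merely spell out details the paper leaves implicit.
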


The proof follows like the proof of Theorem \ref{theo:FK} with the Feynman-Kac theorem given in \cite{HS00}.

\begin{remark}
It can be seen from the simulation results in Section \ref{sec:simulation} that if we choose the model parameters in the right way, the paths of $(\nu_t)$ stay positive with very high probability. In particular note that for $\alpha \downarrow -1$ we obtain in the limit the classical Heston model where paths are positive with probability one. Moreover, on the domain $D := \{(t,\tilde{y}_1,\ldots,\tilde{y}_n,z) :  v_0+z\frac{t^{-\alpha-1}}{\Gamma(-\alpha)}+\sum_{i=1} ^{n}  \tilde{q}_i \tilde{y}_i > 0 \}$  we get a classical solution of \eqref{eq:pdefneg}. This is shown in the appendix.
\end{remark}

The verification that the HJB equation indeed yields the value function works  the same way as in the case $\alpha>0$. We obtain:

\begin{theorem}\label{theo:valuefiniteneg}[Verification]
Suppose a solution $f$ of the partial differential equation \eqref{eq:pdefneg} in case $\rho=0$ with boundary condition $f(T,\tilde{y}_1,\ldots,\tilde{y}_n,z)=1$  exists. For $t\in[0,T], T\le T_\infty$, the optimal investment strategy $(\pi_t^*)$ is given by $$\pi_t^*\equiv \frac{\lambda}{1-\gamma}\,,$$ and the value function can be written as
\begin{equation}\label{eq:Vexpressionneg}
V(t,w,\tilde{y}_1,\ldots,\tilde{y}_n,z) = \frac1\gamma w^\gamma  f(t,\tilde{y}_1,\ldots,\tilde{y}_n,z).
\end{equation}
\end{theorem}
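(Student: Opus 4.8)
The plan is to run the standard stochastic-control verification argument, in complete analogy to the proof of Theorem \ref{theo:valuefinite} for $\alpha>0$. Take as candidate $G(t,w,\tilde{y}_1,\ldots,\tilde{y}_n,z)=\frac1\gamma w^\gamma f(t,\tilde{y}_1,\ldots,\tilde{y}_n,z)$ with $f$ a classical solution of \eqref{eq:pdefneg} satisfying $f(T,\cdot)=1$. By the very derivation of \eqref{eq:pdefneg}, substituting this Ansatz into the HJB equation \eqref{eq:HJB1neg} and maximizing over $u$ shows that $G$ solves \eqref{eq:HJB1neg}, the maximizer being $u^*=\frac{\lambda}{1-\gamma}$ (constant, since $\rho=0$). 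Writing $\mathcal{L}^u$ for the generator of the controlled state process $(W_s^\pi,\tilde{Y}_s^{x_1},\ldots,\tilde{Y}_s^{x_n},Z_s)$ under the constant control $u$, the HJB equation reads $\sup_{u\in\R}\big(G_t+\mathcal{L}^u G\big)=0$.

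First I would fix an arbitrary admissible $\pi$ and apply It\^o's formula to $s\mapsto G(s,W_s^\pi,\tilde{Y}_s^{x_1},\ldots,\tilde{Y}_s^{x_n},Z_s)$ on $[t,T]$. Its finite-variation part equals $G_t+\mathcal{L}^{\pi_s} G\le 0$ because $G$ solves the HJB equation, so along a localizing sequence $(\tau_k)$ reducing the local-martingale part, taking expectations yields $\Eop_{t,\ldots}\big[\frac1\gamma(W_{T\wedge\tau_k}^\pi)^\gamma\big]\le G(t,w,\ldots)$; sending $k\to\infty$ and using the terminal condition $G(T,\cdot)=\frac1\gamma w^\gamma$ gives $\Eop_{t,\ldots}\big[\frac1\gamma(W_T^\pi)^\gamma\big]\le G(t,w,\ldots)$, hence $V\le G$ upon taking the supremum over $\pi$. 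For the reverse inequality I would repeat the computation for $\pi=\pi^*\equiv\frac{\lambda}{1-\gamma}$: now the drift vanishes identically, since $\pi^*$ attains the supremum in the HJB equation, turning the stopped process into a true martingale and producing the equality $\Eop_{t,\ldots}\big[\frac1\gamma(W_T^{\pi^*})^\gamma\big]=G(t,w,\ldots)$, so that $V\ge G$. Together these give $V=G$ and the optimality of $\pi^*$, which is exactly \eqref{eq:Vexpressionneg}.

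The main obstacle is the analytic bookkeeping required to remove the localization. This needs, first, that the local martingales in the It\^o expansion are genuine martingales: the $B^S$-integral $\int_t^\cdot G_w W_s^\pi\pi_s\sqrt{a(\nu_s^n)}\,dB_s^S$ and, in contrast to the case $\alpha>0$ where $\tilde{Y}$ had no diffusion, the $B^Z$-integral $\int_t^\cdot\big(G_z+\sum_{i=1}^nG_{\tilde{y}_i}\tfrac1{x_i}(1-e^{-sx_i})\big)\sigma\sqrt{Z_s}\,dB_s^Z$, which now carries the extra $\tilde{Y}$-terms. Second, it needs a uniform-integrability or dominated-convergence bound on $(W_{T\wedge\tau_k}^\pi)^\gamma$ permitting the interchange of limit and expectation; here the sign of $\gamma$ must be tracked, with the $\gamma<0$ case requiring the integrability assumption familiar from Theorem \ref{theo:limitcase}, while for $0<\gamma<1$ an explicit integrability estimate on the wealth suffices. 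For the constant strategy $\pi^*$ these estimates are eased by the explicit exponential representation of $W^{\pi^*}$ as in \eqref{eq:comp} with $a(\nu_s^n)$ replacing $\nu_s^n$, where boundedness of the smooth map $a$ on the relevant range gives $\nu^n$-independent control of the volatility and thereby secures both the integrability and the true-martingale property. Since, on the domain $D$, $f$ is a classical $C^{1,2}$-solution (see the preceding remark and the appendix), It\^o's formula applies without further regularization.
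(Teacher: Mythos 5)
Your proposal follows essentially the same route as the paper: the paper proves this theorem by repeating the verification argument of Theorem \ref{theo:valuefinite} (It\^o's formula plus the HJB inequality, a supermartingale/localization step for arbitrary strategies, and exact equality for the constant candidate $\pi^*$ under a true-martingale proviso), with $G=\frac1\gamma w^\gamma f$ and $a(\nu_t^n)$ replacing $\nu_t^n$, and the extra $B^Z$-integral carrying the $G_{\tilde y_i}$-terms that you correctly identify is exactly what this replacement entails. The one caveat is your claim that boundedness of the map $a$ "secures" the true-martingale property: $a:\R\to\R_+$ is only assumed smooth (the paper uses $|\cdot|$ or an exponential) and $\nu^n$ is unbounded, so this does not hold; like the paper, you should simply retain the true-martingale (respectively, for $\gamma<0$, uniform-integrability) requirement as an assumption, exactly as in Theorem \ref{theo:valuefinite}.
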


\begin{proof}
As mentioned before, the proof is essentially the same as for Theorem \ref{theo:valuefinite}. We only have to replace $G(t,w,\tilde{y}_1,\ldots,\tilde{y}_n,z) $ by $\frac1\gamma w^\gamma  f(t,\tilde{y}_1,\ldots,\tilde{y}_n,z)$ and $\nu_t$ by $a(\nu_t)$.
\end{proof}

The final step now is to take the limit $n\to\infty$ and consider the optimization problem without approximation. Here we obtain:

\begin{theorem}\label{theo:limitcaseneg} [Solution for the rough path Problem]
Suppose a solution $f$ of the partial differential equation \eqref{eq:pdefneg} in case $\rho=0$ with boundary condition $f(T,\tilde{y}_1,\ldots,\tilde{y}_n,z)=1$  exists. The optimal portfolio strategy for problem \eqref{eq:optprob} with $\alpha\in (-1,-\frac12)$  is given by $\pi_t^*\equiv \frac{\lambda}{1-\gamma}$  and the value function can be written as
\begin{equation}\label{eq:valueroughneg}
V(w_0,v_0,z_0) = \frac1\gamma w_0^\gamma \;\Eop_{0,\tilde{y}_1,\ldots,\tilde{y}_n,z_0}\left[ e^{\int_0^T \Big( \gamma r+ \frac{1}{2}\frac{\gamma \lambda^2}{1-\gamma} a(\nu_s)  \Big) ds} \right]\end{equation}
where $(\nu_t)$ is given by \eqref{eq:nuneg}. In case $\gamma<0$ we have to assume that $(W_t^{\pi,n})$ is uniformly integrable for all $\pi$.
\end{theorem}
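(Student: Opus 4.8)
The plan is to mirror the argument of Theorem~\ref{theo:limitcase}: read off the value of the $n$-th approximation from the verification and Feynman--Kac results already established, and then pass to the limit $n\to\infty$. By Theorem~\ref{theo:valuefiniteneg} together with the representation in Theorem~\ref{theo:FKneg}, evaluated at the initial time $t=0$ where $\tilde{Y}_0^{x_i}=0$, the optimal value of the $n$-th problem is
\begin{equation}
V_n(w_0,v_0,z_0) = \frac1\gamma w_0^\gamma\, \Eop_{0,0,\ldots,0,z_0}\!\left[ e^{\int_0^T \left( \gamma r + \frac12 \frac{\gamma\lambda^2}{1-\gamma} a(\nu_s^n)\right)ds}\right],
\end{equation}
and it is attained by the constant strategy $\pi^*_t\equiv \frac{\lambda}{1-\gamma}$. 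The decisive structural feature of the uncorrelated case is that this $\pi^*$ is admissible for the original problem \eqref{eq:optprob} and does \emph{not} depend on $n$; this is what will upgrade the $\varepsilon$-optimality of Theorem~\ref{theo:epsopt} to exact optimality.

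First I would establish the pathwise convergence $\nu_s^n\to\nu_s$ for Lebesgue-a.e.\ $s\in[0,T]$, which amounts to showing $\int_0^\infty \tilde{Y}_s^x\,\tilde{\mu}^n(dx)\to\int_0^\infty \tilde{Y}_s^x\,\tilde{\mu}(dx)$ with $\tilde{\mu}$ from \eqref{eq:Gamma2}. Since $\rho=0$, the process $(Z_t)$ is the same for every $n$, so this is a deterministic statement for each fixed sample path. Writing $\tilde{Y}_s^x = Z_s\,\frac{1-e^{-sx}}{x} - \int_0^s e^{-(s-u)x}Z_u\,du$ exhibits $x\mapsto\tilde{Y}_s^x$ as a smooth function, but---unlike $Y_s^x$ in Theorem~\ref{theo:conv}---it is neither non-negative nor convex, so Lemma~\ref{lem:conv1} does not apply directly and the two summands are individually not $\tilde{\mu}$-integrable. \textbf{This is the main obstacle.} I expect to resolve it by exploiting that the paths of $(Z_t)$ are almost H\"older-$\frac12$: the cancellation $Z_s-Z_u$ yields $|\tilde{Y}_s^x|\le C_\omega\int_0^s(s-u)^{1/2-\epsilon}e^{-(s-u)x}du = O\big(x^{-(3/2-\epsilon)}\big)$ for large $x$, which together with $\tilde{\mu}(dx)\propto x^{\alpha+1}dx$ and $\alpha<-\tfrac12$ gives $\tilde{Y}_s^{\,\cdot}\in L^1(\tilde{\mu})$ and a uniform tail bound. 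Convergence of the barycenter quantizations then follows from properties (i)--(iii) of $(\mathcal{Z}^n)$ for such smooth, tail-controlled integrands, via the moment-matching estimate that the barycenter annihilates the first-order error on each interval, leaving a remainder $O(\delta(\mathcal{Z}^n)^2\sup|\partial_x^2\tilde{Y}_s^x|)$.

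With $\nu_s^n\to\nu_s$ in hand I would pass the limit inside the expectation. Continuity of $a$ and dominated convergence in the time variable give $\int_0^T a(\nu_s^n)\,ds\to\int_0^T a(\nu_s)\,ds$ almost surely, hence a.s.\ convergence of the integrands in \eqref{eq:valueroughneg}; to exchange limit and expectation I invoke uniform integrability, which for $\gamma<0$ is exactly the hypothesis imposed (equivalently, uniform integrability of $(W_T^{\pi^*,n})$ through the rough analogue of \eqref{eq:comp}), and for $\gamma\in(0,1)$ follows from uniform-in-$n$ moment bounds on $(\nu_s^n)$ inherited from those of the CIR process $(Z_t)$ (trivially if $a$ is bounded). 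This yields $\bar V:=\lim_{n\to\infty}V_n(w_0,v_0,z_0)=\frac1\gamma w_0^\gamma\,\Eop\big[e^{\int_0^T(\gamma r + \frac12\frac{\gamma\lambda^2}{1-\gamma}a(\nu_s))ds}\big]$ with $(\nu_s)$ from \eqref{eq:nuneg}.

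Finally I would close the optimality gap exactly as in Theorem~\ref{theo:epsopt}. Because $\pi^*$ is optimal for every $n$ we have $V_n=V_n(\cdot\,;\pi^*)$, and the rough analogue of Lemma~\ref{lem:conVpi} (same proof, using $a(\nu_s^n)\to a(\nu_s)$ and uniform integrability of $(W_T^{\pi^*,n})$) gives $\bar V=\lim_n V_n(\cdot\,;\pi^*)=V(\cdot\,;\pi^*)$. On the other hand, for every admissible $\pi$ the same lemma gives $V(\cdot\,;\pi)=\lim_n V_n(\cdot\,;\pi)\le\lim_n V_n=\bar V$, so $V\le\bar V$; since $\pi^*$ is admissible we also have $V\ge V(\cdot\,;\pi^*)=\bar V$. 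Combining the two inequalities yields $V=\bar V=V(\cdot\,;\pi^*)$, so the constant strategy $\pi^*_t\equiv\frac{\lambda}{1-\gamma}$ is optimal for the rough problem and $V$ equals the expression computed above, which is precisely \eqref{eq:valueroughneg}.
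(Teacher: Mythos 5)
Your proposal follows the same skeleton as the paper's proof: exploit that $\pi^*\equiv\frac{\lambda}{1-\gamma}$ is optimal for every finite-dimensional approximation (Theorems \ref{theo:valuefiniteneg} and \ref{theo:FKneg}), pass to the limit $n\to\infty$, and sandwich the original value function between the limit of the approximate values and the value achieved by $\pi^*$. The differences lie in how the two limit steps are justified, and the comparison is instructive. Where the paper simply writes ``by monotone convergence'' for $\Eop\big[\exp\big(\tfrac{\gamma\lambda^2}{2(1-\gamma)}\int_0^T a(\nu^n_s)ds\big)\big]\to\Eop\big[\exp\big(\tfrac{\gamma\lambda^2}{2(1-\gamma)}\int_0^T a(\nu_s)ds\big)\big]$, you correctly observe that the fractional-case machinery (Lemma \ref{lem:conv1} and Theorem \ref{theo:conv}) is unavailable here because $x\mapsto\tilde Y^x_s$ is neither non-negative nor convex, and you supply a substitute: the almost-H\"older-$\tfrac12$ regularity of the CIR paths (proved in the paper's appendix) gives $|\tilde Y^x_s|=O\big(x^{-3/2+\epsilon}\big)$, hence $\tilde Y_s^{\,\cdot}\in L^1(\tilde\mu)$ because $\alpha<-\tfrac12$, and a second-order Taylor/barycenter estimate then yields $\nu^n_s\to\nu_s$. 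This is a genuine improvement: the paper's rough-case proof tacitly presupposes this convergence (indeed even its monotonicity) without establishing it, so your treatment closes a real gap rather than merely reproducing the argument.

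The one place where your route is weaker than the paper's is the upper bound over arbitrary strategies when $\gamma\in(0,1)$. You invoke the rough analogue of Lemma \ref{lem:conVpi} to claim $V(\cdot\,;\pi)=\lim_n V_n(\cdot\,;\pi)$ for \emph{every} admissible $\pi$, but that lemma requires uniform integrability of $(W^{\pi,n}_T)$, which the theorem only assumes in the case $\gamma<0$; for $\gamma\in(0,1)$ and an arbitrary, possibly unbounded, $\pi$ you have no such control. The paper avoids this by using only a one-sided bound: since $\frac1\gamma(W^{\pi,n}_T)^\gamma\ge0$ when $\gamma\in(0,1)$, Fatou's lemma gives $\Eop\big[\frac1\gamma(W^{\pi}_T)^\gamma\big]\le\liminf_n\Eop\big[\frac1\gamma(W^{\pi,n}_T)^\gamma\big]\le\lim_n V_n=\bar V$, which is all the sandwich needs. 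Replacing your two-sided equality by this Fatou inequality repairs the argument without further assumptions; your identification $\bar V=V(\cdot\,;\pi^*)$ for the single bounded strategy $\pi^*$ (alternatively obtainable, as in the paper's proof of Theorem \ref{theo:limitcase}, by conditioning on $\FC_T^Z$ and using $\Eop[M_T]=1$) can stand as written.
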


\section{Simulation Results}\label{sec:simulation}

To illustrate the fractional and rough behavior of the processes~\eqref{vola} and~\eqref{vola2}, we have derived and implemented the corresponding explicit forward Euler methods, which are inspired by the survey~\cite{G18}; the resulting schemes are for a constant step-size $h > 0$: 
\begin{align}
\nu_k &= \nu_0 + h^{\alpha} \sum_{j=0}^{k-1} \left(  \frac{(k-j)^{\alpha} - (k-j-1)^{\alpha}}{\Gamma(\alpha+1)} \,Z_j  \right)\,,\\
\nu_k &= \nu_0 +  \frac{Z_k\,k^{-\alpha-1}}{\Gamma(-\alpha)} \nonumber\\
&+ \frac{1}{\Gamma(-\alpha)\,h^{\alpha+1}} \frac{(\alpha+1)}{(\alpha+0.5)} \sum_{j=0}^{k-1} \left(  \frac{Z_k-Z_j}{(k-j)^{\delta}} \left\{ \frac{1}{(k-j-1)^{\alpha+1-\delta}} - \frac{1}{(k-j)^{\alpha+1-\delta}} \right\}  \right)\,,
\end{align}
where we choose $\delta$ close to, but smaller than $0.5$. In the analysis to follow we have chosen the step size as $h = 0.001$ if not stated otherwise.

\subsection{The Fractional Volatility Case}

Figure~\ref{fig:vola_frac} shows five sample paths of the Cox-Ingersoll-Ross process~\eqref{eq:Z} with its corresponding stock price process together with the fractional volatility process \eqref{vola} and the fractional stock price process \eqref{stockprice} for \[ \alpha \in \{0.05,0.5,0.95\} \mbox{ and } \rho \in \{-0.7,0,0.7 \}\,,\] where the remaining model parameters have been chosen as
\begin{align}
 T = 1\,,S_0 = 100\,,r=0.02\,, \lambda = 0.5\,, \theta = 0.05\,,\kappa = 6\,,v_0 = 0\,,z_0=0.05\,. \label{parameters_1}
\end{align}  

We deduce from Figure~\ref{fig:vola_frac} that also visually the convergence to the classical Heston model in the limiting case $\alpha\downarrow 0$ holds true. We observe the increasing smoothness of the fractional paths for higher values of $\alpha$. Moreover, the long-range dependence of the fractional volatility process as indicated by~\eqref{eq:long_range_1} and \eqref{eq:long_range_2} is clearly detectable, which is also in line with the increasing behavior of the value function~\eqref{eq:value_frac} in the fractional model with an increasing history. 

\subsection{The Rough Volatility Case}

Figure~\ref{fig:vola_rough} shows five sample paths of the Cox-Ingersoll-Ross process~\eqref{eq:Z} with its corresponding stock price process together with the rough volatility process \eqref{vola2} and the rough stock price process \eqref{stockprice2_} (with $a(\nu_t)=|\nu_t|$) for \[ \alpha \in \{-0.95,-0.75,-0.55\} \mbox{ and } \rho \in \{-0.7,0,0.7 \} \,,\] where the remaining model parameters have been chosen as in \eqref{parameters_1}.

\begin{landscape}
\begin{figure}[htb]
    \vspace{1.5cm}
	\includegraphics[width=1.5\textwidth]{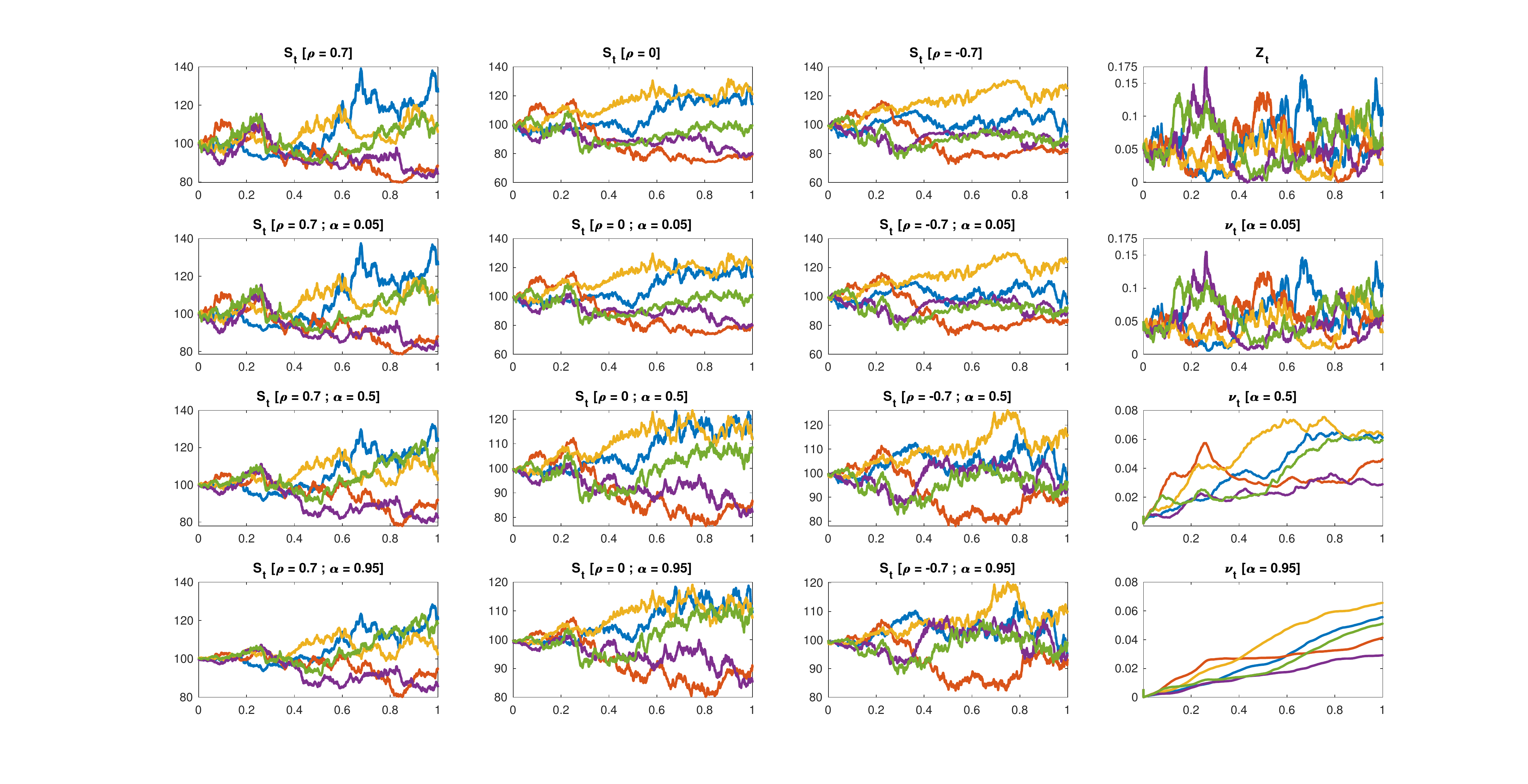}
	\vspace{-1.5cm}
	\caption{Fractional volatility and stock price paths for $\alpha\in\{0.05,0.5,0.95\}$ in comparison with the classical Heston model.}\label{fig:vola_frac}
\end{figure}
\end{landscape}

\begin{landscape}
\begin{figure}[htb]
    \vspace{1.5cm}
	\includegraphics[width=1.5\textwidth]{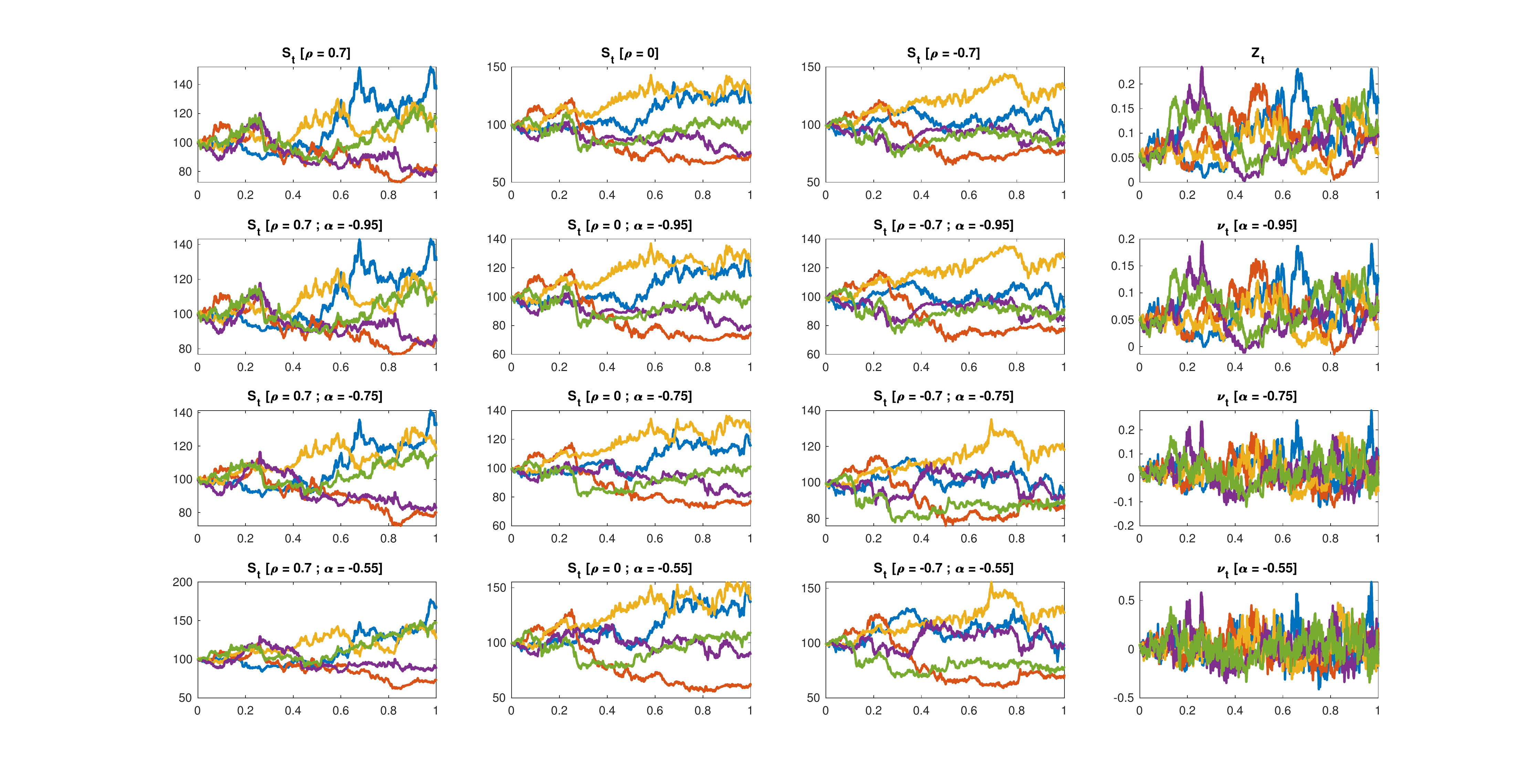}
	\vspace{-1.5cm}
	\caption{Rough volatility and stock price paths for $\alpha\in\{-0.95,-0.75,-0.55\}$ in comparison with the classical Heston model.}\label{fig:vola_rough}
\end{figure}
\end{landscape}

We deduce from Figure~\ref{fig:vola_rough} that also in the rough case the expected convergence to the classical Heston model in the limiting case $\alpha\downarrow -1$ holds true. We  observe the increasing roughness of the paths for values of $\alpha$ closer to the maximal possible value of $-\tfrac{1}{2}$. Note that in this case the roughness of the paths has a significant impact on the stock price. We also detect the effect that the rough volatility process does not stay positive with probability one as discussed in Section~\ref{sec:rough}, and this effect is more pronounced, the rougher the paths are. We wish to stress that this behavior is highly dependent on the choice of the starting value $v_0$ of the rough CIR process, which we have chosen as $v_0=0$, in order to obtain the convergence  $ \underset{\alpha\to -1}{\lim}(\nu_t) =Z_t$ as indicated by Figure~\ref{fig:vola_rough}. In that regard, Figure~\ref{fig:vola_rough_3} shows that for instance in the case $\alpha=-0.75$ the rough volatility process remains strictly positive for the choice $v_0=3, z_0=0.15$.

\begin{figure}[htb]
\includegraphics[width=0.75\textwidth]{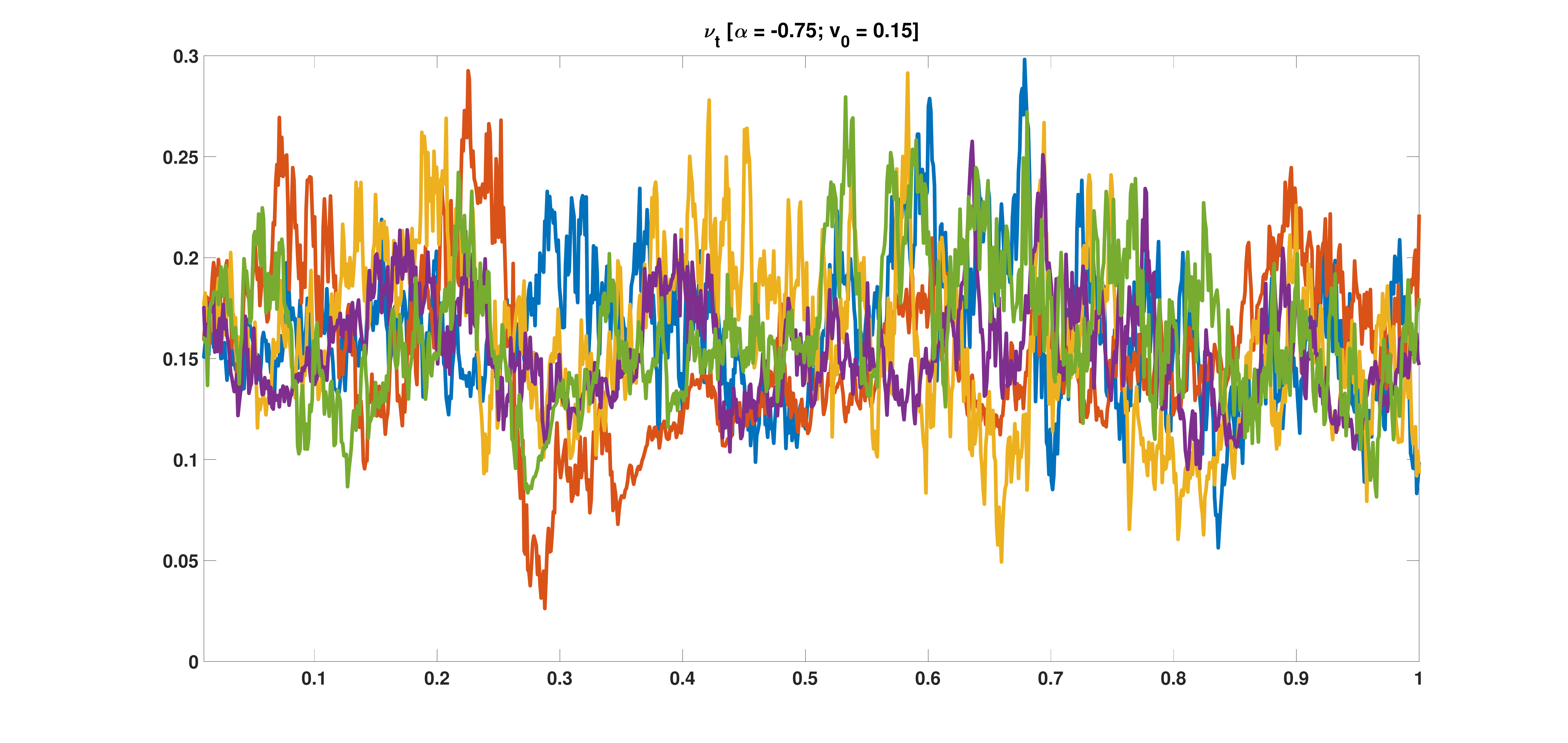}
\caption{Rough volatility paths for $\alpha=-0.75$ and $v_0=0.15$.}\label{fig:vola_rough_3}
\end{figure}

Figure~\ref{fig:vola_rough_exp} further depicts the choices of the absolute value and the exponential function for the function $a(\nu_t)$  from $a:\R\to \R_+$ with different starting values $v_0$. As a result, we deduce that both choices yield the desired effect. We note for the sake of completeness that we plugged in the absolute value in the equation for the stock price for the illustrations in Figure~\ref{fig:vola_rough}. 

\begin{figure}[htb]
\includegraphics[width=\textwidth]{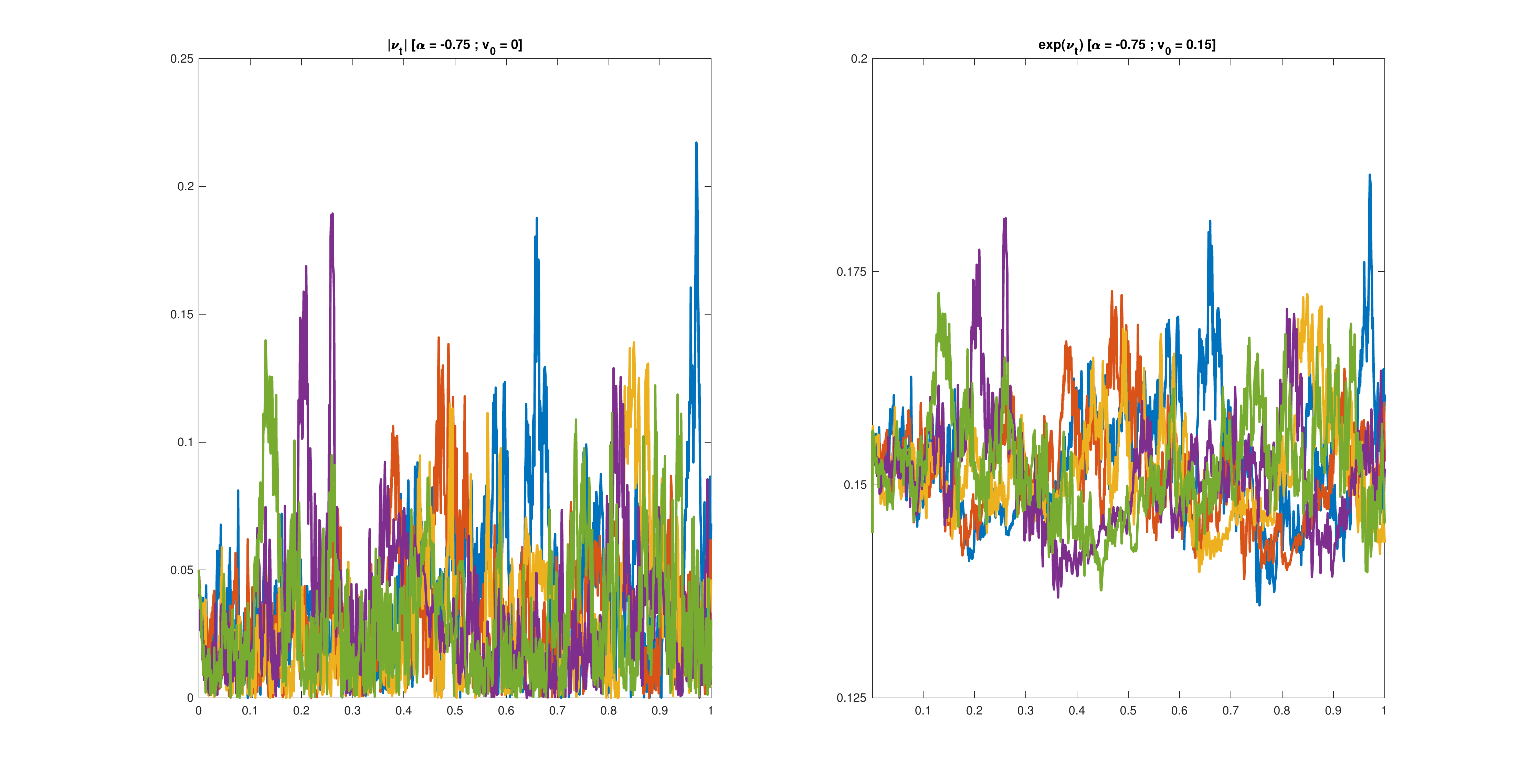}
\caption{Absolute and exponential rough volatility paths for $\alpha=-0.75$.}\label{fig:vola_rough_exp}
\end{figure}

\subsection{Optimal Terminal Wealth}

To illustrate the impact of the fractional and rough volatility process on the optimal terminal wealth in the case $\rho = 0$, Figure~\ref{fig:frac_wealth} depicts the optimal wealth process
\begin{align}\label{eq:wealth_opt}
W_t^{\pi^\star} = w_0 \exp\left( rt + \int_0^t \left( \frac{\lambda^2}{1-\gamma} \nu_s - \frac{\lambda^2}{(1-\gamma)^2} \nu_s \right) ds + \int_0^t\frac{\lambda}{1-\gamma} \sqrt{\nu_s} dB^S_s \right)\,,
\end{align}
which we have obtained by solving the SDE~\eqref{eq:wealth_process} explicitly and plugging in the optimal (Merton) portfolio strategy, and where we have chosen the additional model parameters as 
\begin{align}\label{parameters_2}
 w_0 = 1000\,,\,\gamma=-2\,.
\end{align} 
We observe that the roughness of the volatility paths (RHS) causes a higher variance of the paths of the optimal wealth as in the smooth case (LHS), increasing the closer the values of $\alpha$ are to $-\tfrac{1}{2}$.

\begin{figure}[htb]
\includegraphics[width=0.49\textwidth]{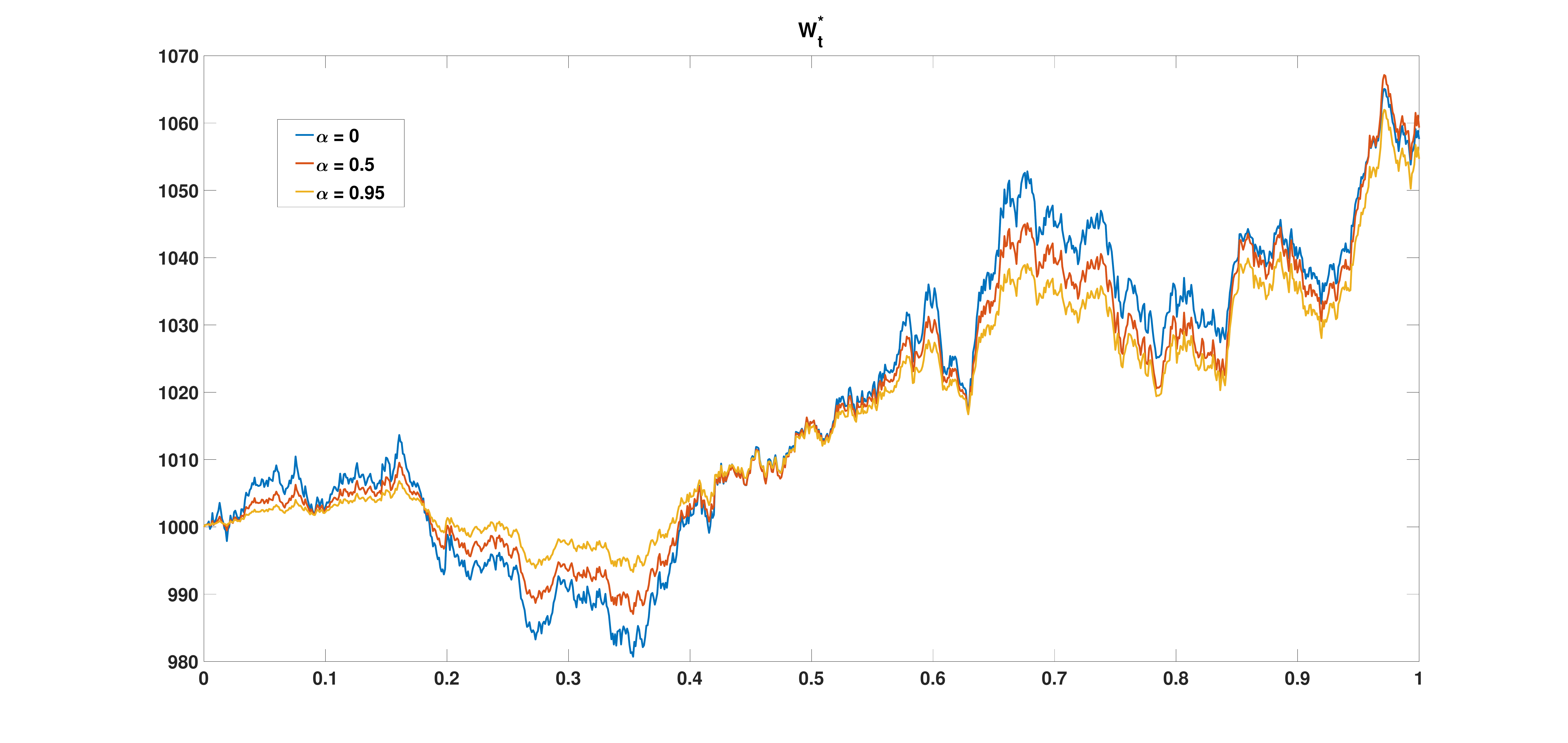}
\includegraphics[width=0.49\textwidth]{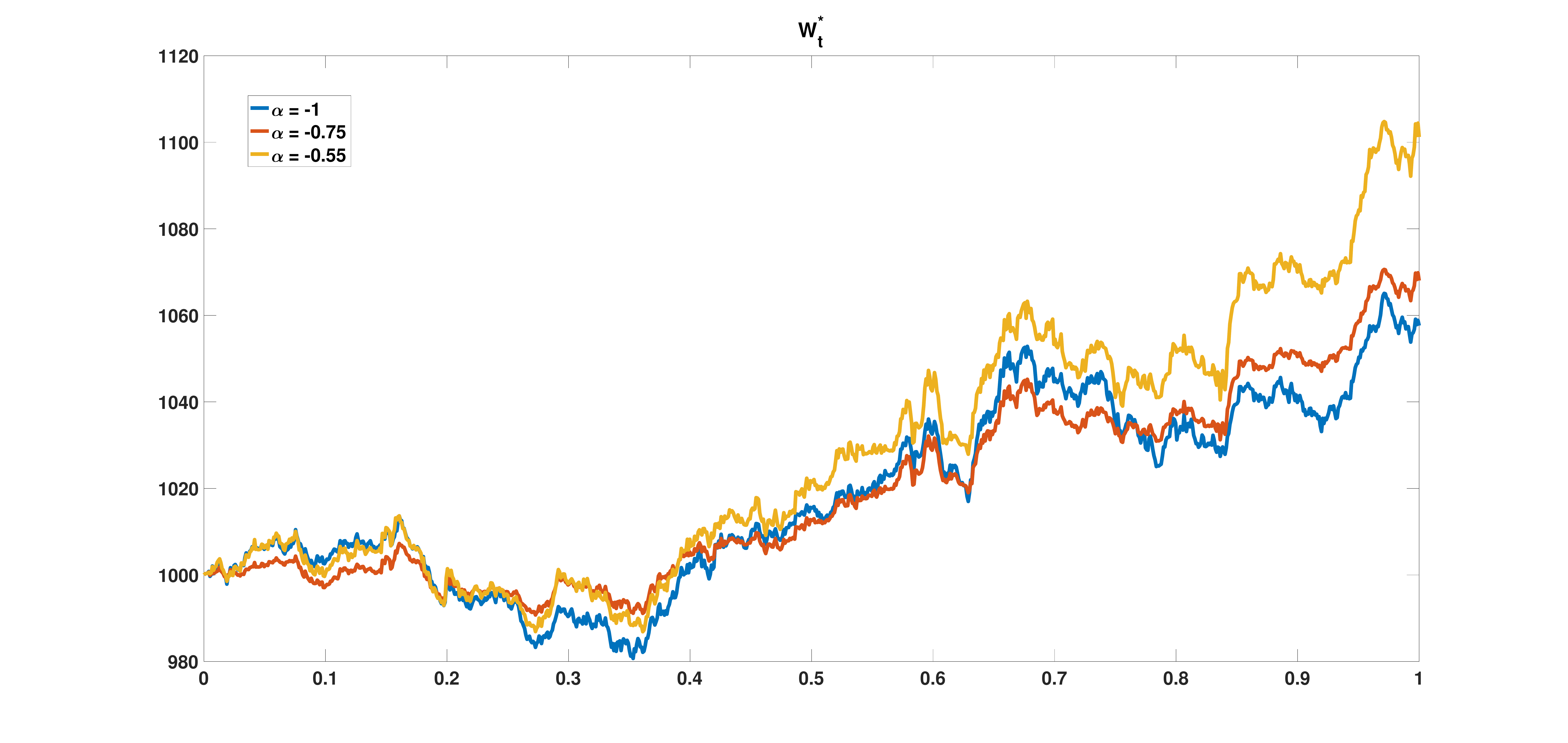}
\caption{Fractional optimal terminal wealth for $\alpha\in\{0.5,0.95\}$ (LHS) and rough optimal terminal wealth for $\alpha\in\{-0.75,-0.55\}$ (RHS) compared to the classical Heston model ($\alpha = 0$ respectively $\alpha = -1$).}\label{fig:frac_wealth} 
\end{figure}

\subsection{Long Term Behavior} To conclude our simulation study, we compare the long term behavior of the fractional ($\alpha = 0.75$) and the rough ($\alpha = -0.75$) volatility processes, together with the corresponding stock price processes in Figure~\ref{fig:long_term}. Again, in line with the increasing behavior of the value function~\eqref{eq:value_frac} in the fractional model, we observe that in the long term, the fractional stock price process reaches unrealistically high values across different correlation levels. This is caused by the upwards trend of the fractional volatility process as is reflected by the sample paths in Figure~\ref{fig:long_term}.  In contrast, the rough stock price process, which is driven by the absolute value of the rough volatility process, moves within a reasonable range. We thus recommend to use the fractional model only for short term investment horizons, whereas the rough model seems as well suitable for long time investment horizons.

\begin{figure}[htb]
\includegraphics[width=\textwidth]{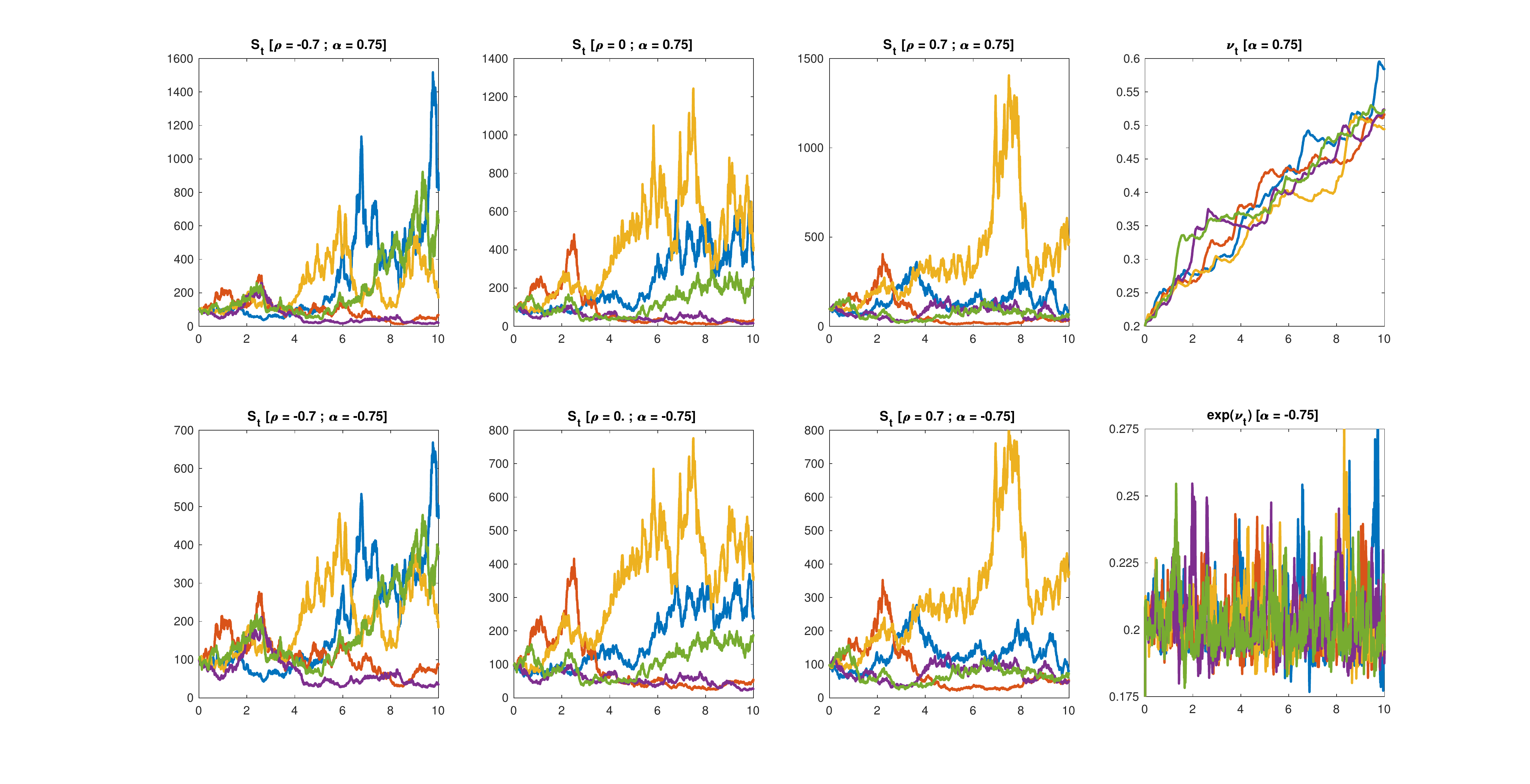}
\caption{Comparison of the long term behavior of fractional and rough volatility processes with corresponding stock price processes.}\label{fig:long_term}
\end{figure}

\section{Appendix}

\subsection{H\"older-continuity of the Cox-Ingersoll-Ross model}
The following lemma may be common knowledge but since we have not found it somewhere we will provide a proof. We credit main ideas to \cite{TR}.

\begin{lemma}
The paths of $(Z_t)$ given in \eqref{vola} are almost $\frac12$- H\"older-continuous.
\end{lemma}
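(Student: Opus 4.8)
The plan is to apply the Kolmogorov--Chentsov continuity criterion, which reduces H\"older regularity of the paths to a moment estimate on the increments. Concretely, I would establish that for every $p\geq 2$ there is a constant $C=C(p,T)$ with
$$\Eop\big[|Z_t-Z_s|^p\big]\leq C\,|t-s|^{p/2},\qquad 0\leq s\leq t\leq T.$$
Since the right-hand side equals $|t-s|^{1+(p/2-1)}$, Kolmogorov--Chentsov then yields a modification of $(Z_t)$ that is $\delta$-H\"older continuous for every $\delta<\tfrac{1}{2}-\tfrac{1}{p}$; letting $p\to\infty$ gives $\delta$-H\"older continuity for every $\delta<\tfrac12$, which is precisely the asserted ``almost $\tfrac12$-H\"older'' regularity.

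To obtain the moment estimate, I would write the increment using \eqref{eq:Z} as
$$Z_t-Z_s=\kappa\int_s^t(\theta-Z_u)\,du+\sigma\int_s^t\sqrt{Z_u}\,dB_u^Z$$
and bound the two pieces separately. For the drift term, H\"older's inequality in the time variable gives $\Eop\big[|\kappa\int_s^t(\theta-Z_u)\,du|^p\big]\leq\kappa^p(t-s)^{p-1}\int_s^t\Eop[|\theta-Z_u|^p]\,du$, which is of order $(t-s)^p\leq(t-s)^{p/2}$ for $|t-s|\leq 1$. For the martingale term, the Burkholder--Davis--Gundy inequality bounds the $p$-th moment by $C_p\sigma^p\,\Eop\big[(\int_s^t Z_u\,du)^{p/2}\big]$, and a further application of H\"older in time yields $C_p\sigma^p(t-s)^{p/2-1}\int_s^t\Eop[Z_u^{p/2}]\,du\leq C(t-s)^{p/2}$. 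Summing the two contributions gives the claim.

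Both estimates rest on the uniform moment bound $\sup_{0\leq t\leq T}\Eop[Z_t^p]<\infty$ for every $p\geq 1$, and establishing this is the \emph{main technical step}. I would derive it by applying It\^o's formula to $Z_t^p$; after localizing away the martingale part one arrives at
$$\frac{d}{dt}\Eop[Z_t^p]=p\Big(\kappa\theta+\tfrac12(p-1)\sigma^2\Big)\Eop[Z_t^{p-1}]-p\kappa\,\Eop[Z_t^p].$$
Using Young's inequality $Z^{p-1}\leq\varepsilon Z^p+C_\varepsilon$ to absorb the lower-order moment, a Gronwall argument combined with an induction on $p$ (the right-hand side involves only the $(p-1)$-st moment) yields the desired finiteness, uniformly on $[0,T]$.

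The point requiring genuine care is the non-Lipschitz square-root coefficient $\sqrt{Z_u}$: the standard SDE moment machinery does not apply verbatim, so it is the induction on $p$, exploiting the sign of the stabilising term $-p\kappa\,\Eop[Z_t^p]$, that makes the argument go through. The Feller condition $2\kappa\theta\geq\sigma^2$, guaranteeing strict positivity of $(Z_t)$, ensures that $\sqrt{Z_u}$ is well-defined along the paths and legitimises both the It\^o computation and the BDG estimate.
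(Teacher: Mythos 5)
Your proof is correct, and its core is the same as the paper's: both arguments run through the Kolmogorov--Chentsov criterion, bound the martingale increment via the Burkholder--Davis--Gundy inequality followed by H\"older's inequality in time, and let the moment exponent tend to infinity to reach every H\"older exponent below $\tfrac12$. The genuine difference lies in the two supporting steps. First, the paper disposes of the drift contribution in one line, observing that $t\mapsto \kappa\int_0^t(\theta-Z_u)\,du$ is pathwise continuous (indeed locally Lipschitz, since $Z$ has continuous, hence locally bounded, paths), so only the stochastic integral needs a moment estimate; you instead estimate the drift's moments as well, which costs a few extra lines but keeps everything at the level of $L^p$ bounds. Second, and more substantially, the paper obtains the finiteness of $\sup_{t\le T}\Eop[Z_t^p]$ for free by citing the known non-central chi-squared law of the CIR process, whereas you prove it from scratch: It\^o's formula applied to $Z_t^p$ (with localization), the identity $\frac{d}{dt}\Eop[Z_t^p]=p\bigl(\kappa\theta+\tfrac12(p-1)\sigma^2\bigr)\Eop[Z_t^{p-1}]-p\kappa\,\Eop[Z_t^p]$, Young's inequality, and a Gronwall/induction argument. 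What you identify as the ``main technical step'' is thus precisely the point the paper short-circuits with a distributional fact. Your route is longer but self-contained and more robust --- it would survive in models where the transition law is not explicitly known --- while the paper's is shorter at the price of invoking a special property of the CIR process. One cosmetic remark: for non-integer $p\in[1,2)$ the It\^o argument needs $x\mapsto x^p$ to be $C^2$ at the origin, which fails; this is harmless since the case $p\ge 2$ (where $x^p$ is $C^2$ on $[0,\infty)$) together with Jensen's inequality covers everything you need.
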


\begin{proof}
We use Kolmogorov's continuity theorem which states that for a stochastic process $(X_t)$ which satisfies
$$ \Eop\big[ | X_t-X_s|^\alpha\big] \le K |t-s|^{1+\beta}$$
for positive constants $K,\alpha,\beta$ there exists a modification of $(X_t)$ with paths which are $\gamma$-H\"older continuous for  $0<\gamma<\frac{\beta}{\alpha}$. In view of \eqref{vola} we only have to deal with the stochastic integral $\int_0^t \sqrt{Z_u} dB_u^Z$ since the remaining part is continuous. In what follows let $p>2$ and consider
\begin{eqnarray*}
&& \Eop\Big[\big| \int_0^t \sqrt{Z_u} dB_u^Z -\int_0^s\sqrt{Z_u} dB_u^Z\big|^{2p}\Big] =   \Eop\Big[\big| \int_s^t \sqrt{Z_u} dB_u^Z \big|^{2p}\Big]\\
&\le& K_p \Eop\Big[\big( \int_s^t Z_u du  \big)^{p}\Big] \le K_p (t-s)^{p-1} \Eop\Big[\big(\int_s^t {Z_u}^p du\big)\Big] = K_p (t-s)^{p-1} \int_s^t \Eop[{Z_u}^p] du,
\end{eqnarray*}
where we have used the Burkholder-Davis-Gundy inequality for the first inequality and the H\"older inequality for the second one. It is well-known that $Z_t$ has a non-central Chi-squared distribution thus the last integral is finite. Hence we can apply the Kolmogorov continuity theorem and obtain that $(Z_t)$ has a modification with paths which are $\gamma$-H\"older continuous for  $0<\gamma<\frac12 -\frac{1}{p}$.
\end{proof}

\subsection{Classical solution of \eqref{eq:pdefneg}}
Note that the initial value of $\nu_0$ should be chosen positive such that we start within the domain $D := \{(t,\tilde{y}_1,\ldots,\tilde{y}_n,z) :  v_0+z\frac{t^{-\alpha-1}}{\Gamma(-\alpha)}+\sum_{i=1} ^{n}  \tilde{q}_i \tilde{y}_i > 0 \}$. As long as the process $(\nu_t)$ stays positive it is in $D$ and on this domain we can skip the  absolute value and obtain the partial differential equation
\begin{eqnarray}\label{eq:pdefnegwithout}
 \nonumber\!\! 0 \!\!&=&\!\!\! f_t +\frac12 f\gamma \frac{\lambda^2 (v_0+z\frac{t^{-\alpha-1}}{\Gamma(-\alpha)}+\sum_{i=1} ^{n}  \tilde{q}_i \tilde{y}_i)}{1-\gamma} +f\gamma r \!+ \!\sum_{i=1}^n f_{y_i} \big( h_i(t) \kappa(\theta\! -\! z)-x_i\tilde{y}_i\big) \! +\! f_z\kappa (\theta\!-\! z) \\
  && + \!\!\frac12 \sigma^2 z f_{zz}+\frac12 \sigma^2 z\sum_{i=1}^n  \sum_{j=1}^n f_{\tilde{y}_i\tilde{y}_j}h_i(t) h_j(t) + \sigma^2 z \sum_{i=1}^n  f_{z\tilde{y}_i}h_i(t)
\end{eqnarray}
with boundary condition $f(T,\tilde{y}_1,\ldots,\tilde{y}_n,z)=1$.
Now using again the Ansatz
$$ f(t,\tilde{y}_1,\ldots,\tilde{y}_n,z)= \exp\Big(\phi(T-t)+\sum_{i=1}^n \psi_i(T-t)\tilde{y}_i+\varphi(T-t)z \Big)  $$
with $\phi(0)=\psi_i(0)=\varphi(0)=0$ we obtain a solution of \eqref{eq:pdefneg} with
\begin{equation}\label{eq:psiineg}
\psi_i(T-t) =  \eta \tilde{q}_i \int_0^{T-t}e^{ - x_is} ds
\end{equation}
where again $\eta := \frac12\frac{\gamma\lambda^2}{1-\gamma}$ and $\varphi$ and $\phi$ are solutions of the ordinary differential equations
\begin{eqnarray}
 \label{riccati1neg}
  \varphi_t(T-t) &=& \! \! \frac{\eta t^{-\alpha-1}}{\Gamma(-\alpha)}\!
    \! -\! \kappa\varphi(T\!\!-\!\! t)+ \frac12\sigma^2 \varphi^2(T\!\! -\!\! t)\! -\! \eta h^n(t)  
     \big(\kappa\! -\! \sigma^2\varphi(T\!\! -\!\! t)\! -\! \frac12 \sigma^2 \eta  h^n(t)   \big)    
   \\ \label{ode2neg}
    \phi_t(T-t) &=& \gamma r+v_0\eta+\theta\kappa\Big( \varphi (T-t)+ h^n(t)  \Big).
\end{eqnarray}
with $h^n(t):=\int_0^t\int_0^{T-t} \int_0^\infty e^{-x(s+u)} \tilde{\mu}^{n}(dx)duds  $ and boundary condition $\varphi(0)=\phi(0)=0$. Note that both ordinary differential equations \eqref{riccati1neg} and \eqref{ode2neg} have solutions due to the Picard-Lindel\"of Theorem. Thus in total we get a classical solution of \eqref{eq:pdefneg} on $D$.

\subsection{Additional Proofs}
This part of the appendix comprises longer proofs and auxiliary lemmas.

{\em Proof of Lemma \ref{lem:conv1}:}
For $n\in\N$ define the function
$$t^{n}(x) := \sum_{i=0}^{n-1} x_{i+1}^{n}1_{[\xi_{i}^{n},\xi_{i+1}^{n}]}(x), \quad x\ge 0.$$
Obviously it holds that $t^{n}(x) \to x$ for $n\to\infty$ under assumptions (i)-(iii). Thus we have 
\begin{eqnarray*}
\int fd \mu^{n}  &=& \sum_{i=1}^n q_i^{n} f(x_i^{n}) \\
&=& \sum_{i=0}^{n-1} \int_{\xi_i^{n}}^{\xi_{i+1}^{n}}\mu(dx) f(x_{i+1}^{n}) = \int_{\xi_0^{n}}^{\xi_{n}^{n}} f(t^{n}(x))\mu(dx)
\end{eqnarray*}
which implies the first convergence statement since $f$ is continuous and thus also bounded on compact intervals. Now suppose that  $ \mathcal{Z}^{n+1}$ differs from  $\mathcal{Z}^{n}$ by only one point. We show that $\int fd \mu^{n}  \le \int fd \mu^{n+1}$.  If this point is added on $(0,\xi_0^{n})$ or on $(\xi_n^{n},\infty)$ the statement follows immediately since $f\ge 0$. Now suppose a point $\xi$  is added on $(\xi_i^{n},\xi_{i+1}^{n})$. Let us denote by $x_{i+1}^{n+1}$ and $x_{i+2}^{n+1}$ the two new points in $\mathcal{Z}^{n+1}$ which replace $x_{i+1}^{n}$. We obtain:
\begin{eqnarray*}
x_{i+1}^{n} &=& \frac{\int_{\xi_i^{n}}^{\xi_{i+1}^{n}}x\mu(dx)}{\int_{\xi_i^{n}}^{\xi_{i+1}^{n}}\mu(dx)} = \frac{\int_{\xi_i^{n}}^{\xi}x\mu(dx)}{\int_{\xi_i^{n}}^{\xi}\mu(dx)} \cdot \frac{\int_{\xi_i^{n}}^{\xi}\mu(dx)}{\int_{\xi_i^{n}}^{\xi_{i+1}^{n}}\mu(dx)}+ \frac{\int_{\xi}^{\xi_{i+1}^{n}}x\mu(dx)}{\int_{\xi}^{\xi_{i+1}^{n}}\mu(dx)}
\cdot \frac{\int_{\xi}^{\xi_{i+1}^{n}}\mu(dx)}{\int_{\xi_i^{n}}^{\xi_{i+1}^{n}}\mu(dx)}.
\end{eqnarray*}
Thus, with the convexity of $f$ it follows that
$$ f(x_{i+1}^{n} ) \le f(x_{i+1}^{n+1} ) \frac{\int_{\xi_i^{n}}^{\xi}\mu(dx)}{\int_{\xi_i^{n}}^{\xi_{i+1}^{n}}\mu(dx)} + f(x_{i+2}^{n+1} )\frac{\int_{\xi}^{\xi_{i+1}^{n}}\mu(dx)}{\int_{\xi_i^{n}}^{\xi_{i+1}^{n}}\mu(dx)}$$
Multiplying both sides with $\int_{\xi_i^{n}}^{\xi_{i+1}^{n}}\mu(dx)$ implies
that replacing the term $\int_{\xi_i^{n}}^{\xi_{i+1}^{n}}\mu(dx) f(x_{i+1}^{n} )$ by $\int_{\xi_i^{n}}^{\xi}\mu(dx)f(x_{i+1}^{n+1} )+ \int_{\xi}^{\xi_{i+1}^{n}}\mu(dx)f(x_{i+2}^{n+1} )$ increases the value. Thus monotone convergence follows. \hfill $\square$

{\em Proof of Theorem \ref{theo:HJB}:} 
In order to simplify this Hamilton-Jacobi-Bellman equation, we choose the usual separation Ansatz with $G(t,w,y_1,\ldots,y_n,z)= \frac1\gamma w^\gamma f(t,y_1,\ldots,y_n,z)$ and boundary condition given by $f(T,y_1,\ldots,y_n,z)=1$.  Plugging this Ansatz into \eqref{eq:HJB1} We obtain:
\begin{eqnarray*}
\nonumber  0 = \sup_{u\in\R} \! \! \! \! &&\! \! \! \!   \Big\{f_t +f(r+u\lambda \beta)\gamma  + \sum_{i=1}^n f_{y_i} (z-x_iy_i) + f_z\kappa(\theta-z)\\
   && +\frac12 (\gamma-1)\gamma u^2 \beta f + \frac12 f_{zz} \sigma^2 z+f_z \gamma\sigma u\rho \sqrt{z\beta}\Big\}.
\end{eqnarray*}
Maximizing this expression in $u$ gives 
$$u^*(y_1,\ldots,y_n,z) = \frac{\lambda}{1-\gamma}+\frac{\sigma\rho}{1-\gamma}\sqrt{\frac{z}{\beta}} \frac{f_z}{f}.$$ Note that for $\rho=0$ this reduces to $u^*= \frac{\lambda}{1-\gamma}$ independent of $t,w,y_1,\ldots,y_n$. Inserting the maximum point yields
\begin{eqnarray}\label{eq:pdef1}
\nonumber  0 &=& f_t +\frac12 f\gamma \frac{\lambda^2 \beta}{1-\gamma} +f\gamma r + \sum_{i=1}^n f_{y_i} (z-x_iy_i) +f_z \Big( \kappa (\theta-z)+\frac{\lambda\gamma\sigma\rho \sqrt{z\beta}}{1-\gamma}\Big) \\
  && + \frac12 \sigma^2 z f_{zz}+ \frac12 \frac{\gamma\sigma^2\rho^2 z}{1-\gamma} \frac{f_z^2}{f}.
\end{eqnarray}

In order to further simplify the equation  we use the Ansatz 
\begin{equation}\label{eq:f2}
f(t,y_1,\ldots,y_n,z) = g(t,y_1,\ldots,y_n,z)^c
\end{equation}
with $c= \frac{1-\gamma}{1-\gamma+\gamma\rho^2}$ and $g(T,y_1,\ldots,y_n,z)=1$. For similar transformations see e.g. \cite{Z01,K05,BL13}. Inserting the derivatives and rearranging the terms leads to
\begin{equation*}
0= cg_t +g\Big(\gamma r+\frac12 \frac{\lambda^2\beta\gamma}{1-\gamma}\Big)+c  \sum_{i=1}^n g_{y_i} (z-x_iy_i) + cg_z \Big( \kappa(\theta-z)+\frac{\lambda\gamma\sigma\rho\sqrt{z\beta}}{1-\gamma}\Big)+\frac12 \sigma^2 czg_{zz}.
\end{equation*}
For this PDE the conditions (A1), (A2) and (A3') in \cite{HS00} are satisfied (see also Theorem \ref{theo:FK}) which implies the existence of a classical solution (see Theorem 1 in \cite{HS00}). Note that the finiteness condition (A3e') may only be satisfied on a certain time interval $[0,T_\infty]$. 

Let us consider the case $\rho=0$ in more detail. In this case $c=1$ and $f=g$. The remaining PDE is given by
\begin{eqnarray}\label{eq:pdef}
\nonumber  0 &=& f_t +\frac12 f\gamma \frac{\lambda^2 \beta}{1-\gamma} +f\gamma r + \sum_{i=1}^n f_{y_i} (z-x_iy_i) +f_z  \kappa (\theta-z) + \frac12 \sigma^2 z f_{zz}.
\end{eqnarray}
Here we use the Ansatz
\begin{equation}\label{eq:f}
f(t,y_1,\ldots,y_n,z) = \exp\Big(\phi(T-t)+\sum_{i=1}^n \psi_i(T-t)y_i+\varphi(T-t)z \Big)
\end{equation}
with $\phi(0)=\psi_i(0)=\varphi(0)=0$. Note that this approach is typical for affine models. It has already be shown to be successful in a number of stochastic volatility models (see e.g. \cite{BL13,KMK10}). We obtain:
\begin{eqnarray}
\nonumber   0 &=& -(\phi_t+ \sum_{i=1}^n y_i \psi_{it}+z\varphi_t)+\gamma r +\frac12\gamma \frac{\lambda^2 \beta}{1-\gamma} \\
   && + \sum_{i=1}^n \psi_i (z-x_iy_i ) +  \varphi \kappa(\theta-z) + \frac12 \sigma^2 z \varphi^2.
\end{eqnarray}
Inserting $\beta=v_0+ \sum_{i=1}^n y_i q_i$, rearranging the terms and using the abbreviation $\eta := \frac12\frac{\gamma\lambda^2}{1-\gamma}$  we arrive at
\begin{eqnarray}
\nonumber   0 &=& -\phi_t + \gamma r +v_0\eta+\varphi\kappa\theta \\
\nonumber   &&+ \sum_{i=1}^n y_i  \Big( -\psi_{it}+ \eta q_i - \psi_ix_i\Big)\\
   &&+z\Big(-\varphi_t + \sum_{i=1}^n \psi_i -  \kappa\varphi  + \frac12 \sigma^2  \varphi^2\Big).
\end{eqnarray}
Since this equation has to be satisfied for all $z$ and $q_i$ we end up with the following system of ordinary differential equations where $i=1,\ldots,n$:
\begin{eqnarray}\label{eq:odesystem}
\nonumber  \psi_{it}(T-t) &=&  \eta q_i - x_i\psi_i(T-t)\\
   \varphi_t(T-t) &=& \sum_{i=1}^n  \psi_i (T-t)-\kappa\varphi(T-t)+\nonumber \frac12\sigma^2 \varphi^2(T-t) \\
    \phi_t(T-t) &=& \gamma r+v_0\eta+\varphi (T-t)\kappa\theta.
\end{eqnarray}
The first differential equations for $\psi_i$ are just linear and an explicit solution together with the boundary condition $\psi_i(0)=0$ is given by
\begin{equation}\psi_i(T-t)=  \eta \frac{q_i}{x_i}(1-e^{-x_i(T-t)})=  \eta q_i \int_0^{T-t}e^{-x_is}ds.  \end{equation}
Thus we obtain
\begin{eqnarray}
\nonumber  \sum_{i=1}^n  \psi_i (T-t) &=& \eta  \sum_{i=1}^n  q_i \int_0^{T-t}e^{-x_is}ds\\
&=& \eta   \int_0^{T-t} \int_0^\infty e^{-xs} \mu^{n}(dx)ds.
\end{eqnarray}
Hence the remaining differential equations  can be written as
\begin{eqnarray}
 \varphi_t(T-t) &=& \eta \int_0^{T-t} \int_0^\infty e^{-xs} \mu^{n}(dx)ds-\kappa\varphi(T-t)+\frac12\sigma^2 \varphi^2(T-t) \label{eq:odesystem2_i}\\
  \phi_t(T-t) &=& \gamma r+v_0\eta+\varphi (T-t)\kappa\theta.\label{eq:odesystem2_ii}
\end{eqnarray} 
with boundary condition $\varphi(0)=\phi(0)=0$. Once $\varphi$ is known, the solution of \eqref{eq:odesystem2_ii} is immediate.
The existence of a solution is thus satisfied when the ordinary differential equation for $\varphi$ has a solution. This however, is guaranteed by the existence theorem of Picard-Lindel\"of due to continuity of the coefficients. Note however that the existence may only be guaranteed on a finite time interval $[0,T_\infty]$.   \hfill $\square$

{\em Proof of Theorem \ref{theo:valuefiniteneg}:}
Suppose that $G=\frac1\gamma w^\gamma g^c$ is as stated. Then it is by definition a solution of the HJB equation. To obtain $V = G$, we show that for an arbitrary investment strategy $\pi$ we have that
 \begin{align}\label{eq:ver1}
\Eop_{t,y_1,\ldots,y_n,z}\left[\frac{\left(W_T^{\pi}\right)^\gamma}{\gamma} \right] \leq G(t,w,y_1,\ldots,y_n,z)\,,
\end{align}
and that for the optimal policy $\pi^\star$ we have that
\begin{align}\label{eq:ver2}
\Eop_{t,y_1,\ldots,y_n,z} \left[ \frac{\left(W_T^{\pi^\star}\right)^\gamma}{\gamma} \right] = G(t,w,y_1,\ldots,y_n,z)\,.
\end{align}
Since $G \in C^{1,2}$, we obtain by It\^{o}'s formula for an admissible investment strategy $\pi$ that (we write $W$ instead of $W^\pi$ for simplicity)
\begin{align}
\label{eq:veri1}& G(T,W_T,Y_T^{x_1},\ldots,Y_T^{x_n},Z_T) = G(t,w,y_1,\ldots,y_n,z) + \int_t^T G_{w} W_s \pi_s \sqrt{\nu_s^n} \,dB_s^S \\  \nonumber
&+ \int_t^T G_{z} \sigma \sqrt{Z_s} \,dB_s^Z+ \int_t^T\Big\{ G_t + G_w W_s \left(r + \pi_s\lambda\nu_s^n\right) + G_z\kappa(\theta-Z_s) \\\nonumber
&+\sum_{i=1}^{n} G_{y_i} (Z_s-x_i Y_s^{x_i})+ \frac{1}{2} G_{ww} W_s^2 \pi_s^2\nu_s^n + \frac{1}{2} G_{zz} Z_s \sigma^2 +G_{wz} W_s \pi_s \sigma\rho \sqrt{Z_s \nu_s^n}\Big\} \,ds\\\nonumber
&\leq G(t,w,y_1,\ldots,y_n,z) + \int_t^T G_{w} W_s \pi_s \sqrt{\nu_s^n} \,dB_s^S + \int_t^T G_{z} \sigma \sqrt{Z_s} \,dB_s^Z\,,
\end{align}
where we have used \eqref{eq:HJB1} to obtain the expression in the last line. The right-hand side is thus a local martingale in $T$. For $\gamma > 0$ we have $G > 0$, and the right-hand side is a supermartingale, such that, using $G(T,W_T,Y_T^{x_1},\ldots,Y_T^{x_n},Z_T) = w^\gamma/\gamma$, we get  
\begin{align*}\label{eq:ver3}
\Eop_{t,y_1,\ldots,y_n,z} \left[ G(T,W_T,Y_T^{x_1},\ldots,Y_T^{x_n},Z_T) \right] = \Eop_{t,y_1,\ldots,y_n,z} \left[ \frac{\left(W_T^{\pi}\right)^\gamma}{\gamma} \right]\leq G(t,w,y_1,\ldots,y_n,z)\,,
\end{align*}
i.e.~\eqref{eq:ver1} holds. Relation~\eqref{eq:ver1} also holds for bounded admissible strategies and $\gamma < 0$, since then
\begin{align*}
\Eop_{t,y_1,\ldots,y_n,z} \left[ \int_t^T G_{w} W_s \pi_s \sqrt{\nu_s^n} \,dB_s^S \right] = \Eop_{t,y_1,\ldots,y_n,z} \left[ \int_t^T G_{z} \sigma \sqrt{Z_s} \,dB_s^Z \right] = 0\,.
\end{align*}
Relation~\eqref{eq:ver2} is obtained as follows: When we plug in $(\pi_t^*)$ into equation \eqref{eq:veri1}  we obtain by the definition of $(\pi_t^*)$ that
\begin{small}
\begin{align*}
& G(T,W_T,Y_T^{x_1},\ldots,Y_T^{x_n},Z_T) =  G(t,w,y_1,\ldots,y_n,z) + \int_t^T G_{w} W_s \pi_s^* \sqrt{\nu_s^n} \,dB_s^S + \int_t^T G_{z} \sigma \sqrt{Z_s} \,dB_s^Z.
\end{align*}
\end{small}
Taking the conditional expectation $\Eop_{t,y_1,\ldots,y_n,z}$ on both sides and making use of the assumption implies the statement.  \hfill $\square$



\begin{lemma}\label{lem:comp}
Let $(\tilde{Z}_t^n)$ be given as in  \eqref{eq:tildeZ}. If $\rho\le 0$ we have
$$ \Pop\Big(\tilde{Z}_t^n \ge \tilde{Z}_t^{n+1}, \forall t \ge 0\Big)=1.$$ In particular $\Pop\Big(\tilde{Z}_t^n \le {Z}_t, \forall t \ge 0\Big)=1$, where $(Z_t)$ is given in \eqref{eq:Z}. If $\rho\ge 0$ we have
 $$ \Pop\Big(\tilde{Z}_t^n \le \tilde{Z}_t^{n+1}, \forall t \ge 0\Big)=1.$$ 
Finally the statements for $(\tilde{Z}_t^n)$  carry over to $(\tilde{\nu}_t)$.
\end{lemma}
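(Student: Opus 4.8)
The plan is to rewrite the coupled system~\eqref{eq:tildeZ} as a single scalar stochastic Volterra equation and then run a comparison argument, the index $n$ entering only through the driving kernel. Inserting $\tilde Y_t^{x}=\int_0^t e^{-(t-s)x}\tilde Z^n_s\,ds$ into $\tilde\nu^n_t=v_0+\sum_i q_i^n\tilde Y_t^{x_i^n}$ yields
\[
\tilde\nu^n_t=v_0+\int_0^t K^n(t-s)\,\tilde Z^n_s\,ds,\qquad K^n(\tau):=\int_0^\infty e^{-\tau x}\,\mu^n(dx)\ge 0,
\]
so that each $\tilde Z^n$ solves $d\tilde Z^n_t=\bigl(\kappa(\theta-\tilde Z^n_t)+b_\rho\sqrt{\tilde Z^n_t}\,\sqrt{\tilde\nu^n_t}\bigr)dt+\sigma\sqrt{\tilde Z^n_t}\,dB_t^Z$, where $b_\rho:=\tfrac{\lambda\gamma\sigma\rho}{1-\gamma}$, driven by the one Brownian motion $B^Z$ and carrying the \emph{same} $\tfrac12$-H\"older diffusion coefficient $\sigma\sqrt{\cdot}$ for every $n$. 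Since $x\mapsto e^{-\tau x}$ is nonnegative, convex and, by~\eqref{eq:Gamma}, $\mu$-integrable for $\tau>0$, Lemma~\ref{lem:conv1} delivers the pointwise monotonicity $K^n\le K^{n+1}$ and $K^n(\tau)\uparrow\tau^{\alpha-1}/\Gamma(\alpha)=:K(\tau)$, the Riemann--Liouville kernel of~\eqref{vola}. This is the only place the quantization assumptions (i)--(iii) are used.

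The building block is a comparison for frozen coefficients: if $a_1\le a_2$ are adapted and $X^i$ solves $dX_t^i=(\kappa(\theta-X^i_t)+a_i(t)\sqrt{X^i_t})dt+\sigma\sqrt{X^i_t}\,dB_t^Z$ with $X_0^1=X_0^2$, then $X^1\le X^2$. I would apply It\^o to $\chi_k(X^1_t-X^2_t)$ for the usual Yamada functions $\chi_k\uparrow(\cdot)^+$ ($0\le\chi_k'\le1$, $\chi_k''\ge0$ supported on $(0,\varepsilon_k)$, $x\chi_k''(x)\le2/k$). The common diffusion annihilates the second-order term, which is bounded by $\tfrac12\sigma^2\chi_k''(X^1_s-X^2_s)|X^1_s-X^2_s|\le\sigma^2/k$. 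The decisive point is that $\chi_k'$ lives on $\{0<X^1_s-X^2_s<\varepsilon_k\}$, where $\sqrt{X^1_s}-\sqrt{X^2_s}\le\sqrt{\varepsilon_k}$, so that
\[
a_1(s)\sqrt{X^1_s}-a_2(s)\sqrt{X^2_s}=a_1(s)\bigl(\sqrt{X^1_s}-\sqrt{X^2_s}\bigr)+\bigl(a_1(s)-a_2(s)\bigr)\sqrt{X^2_s}\le|a_1(s)|\sqrt{\varepsilon_k},
\]
the second summand being $\le0$ because $a_1\le a_2$. Letting $\varepsilon_k\to0$ only the damping $-\kappa(X^1-X^2)^+$ survives, whence $\Eop[(X^1_t-X^2_t)^+]=0$. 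Taking $a_2\equiv0$ gives the comparison with $(Z_t)$ of~\eqref{eq:Z} directly: for $\rho\le0$ the term $b_\rho\sqrt{\tilde Z^n_t\tilde\nu^n_t}\le0$ enters the $\tilde Z^n$-equation alone with the favourable sign, so $\tilde Z^n\le Z$.

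It remains to bootstrap this to the genuine, path-dependent ordering. For $\rho\ge0$ (so $b_\rho\ge0$) I would run the monotone Picard scheme $d\tilde Z^{n,(j+1)}_t=(\kappa(\theta-\tilde Z^{n,(j+1)}_t)+b_\rho\sqrt{\tilde\nu^{n,(j)}_t}\,\sqrt{\tilde Z^{n,(j+1)}_t})dt+\sigma\sqrt{\tilde Z^{n,(j+1)}_t}\,dB_t^Z$ with $\tilde\nu^{n,(j)}=v_0+\int_0^\cdot K^n\tilde Z^{n,(j)}$: if $\tilde\nu^{n,(j)}\le\tilde\nu^{n+1,(j)}$, the frozen comparison gives $\tilde Z^{n,(j+1)}\le\tilde Z^{n+1,(j+1)}$, and since $K^n\le K^{n+1}$ pushes in the same direction, the ordering $\tilde\nu^{n,(j+1)}\le\tilde\nu^{n+1,(j+1)}$ is reproduced; passing to the Picard limit yields $\tilde Z^n\le\tilde Z^{n+1}$ and simultaneously $\tilde\nu^n\le\tilde\nu^{n+1}$. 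The main obstacle is the case $\rho\le0$: the feedback is now damping, $\tilde Z^n\ge\tilde Z^{n+1}$ is expected, but the levels are no longer ordered by inspection (in $\tilde\nu^{n+1}=v_0+\int_0^\cdot K^{n+1}\tilde Z^{n+1}$ the larger kernel $K^{n+1}$ and the smaller factor $\tilde Z^{n+1}$ compete), so the Picard monotonicity fails to reproduce itself. I expect to resolve this by a direct Yamada estimate on $\tilde Z^{n+1}-\tilde Z^n$ in which, after the cancellation of the preceding paragraph, the residual discrepancy $\sqrt{\tilde\nu^{n+1}_s}-\sqrt{\tilde\nu^n_s}$ is linearised via the lower bound $\tilde\nu\ge v_0>0$ and then, through the representation above, dominated by the convolution $\int_0^s K^n(s-u)\,|\tilde Z^n_u-\tilde Z^{n+1}_u|\,du$; combined with the $-\kappa$ damping and the moment bounds for the CIR factor this closes a linear Volterra--Gronwall inequality forcing $\Eop[(\tilde Z^{n+1}_t-\tilde Z^n_t)^+]=0$. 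In all cases, path continuity upgrades these almost-sure, pointwise-in-$t$ orderings so that they hold simultaneously for all $t$ with probability one, and $\tilde Z^n\le Z$ together with $K^n\le K$ transfers to $\tilde\nu^n\le\nu$ for the volatility of~\eqref{vola}.
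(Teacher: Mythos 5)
Your reduction to the scalar Volterra form, the kernel monotonicity $K^n\le K^{n+1}\uparrow K$ obtained from Lemma~\ref{lem:conv1} applied to $x\mapsto e^{-\tau x}$, and the frozen-coefficient comparison via Yamada functions with modulus $\sqrt{x}$ are precisely the ingredients of the paper's own (one-line) proof, which invokes Theorem 1.1 of \cite{Y79} with $\rho(x)=\sqrt{x}$ together with the monotonicity in $n$ of the drift. Your direct derivation of $\tilde Z^n\le Z$ for $\rho\le 0$ (taking $a_2\equiv 0$) is correct --- and this is the bound the paper actually uses afterwards, in the proof of Lemma~\ref{lem:weakconv} --- and your monotone Picard bootstrap for $\rho\ge 0$ is sound in outline, since there $b_\rho\ge0$ makes the drift functional quasimonotone (increasing in the past path), so the ordering in both the kernel and the input path reproduces itself through the iteration; the paper leaves this bootstrapping entirely implicit.

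The genuine gap is your treatment of the remaining case $\rho\le 0$, $\tilde Z^n$ versus $\tilde Z^{n+1}$, and the proposed repair does not work as sketched. Put $D:=\tilde Z^{n+1}-\tilde Z^n$. After the Yamada cancellation, the unfavourable part of the drift discrepancy on $\{D_s>0\}$ is $b_\rho\sqrt{\tilde Z^n_s}\bigl(\sqrt{\tilde\nu^{n+1}_s}-\sqrt{\tilde\nu^n_s}\bigr)$, and since the contribution $\int_0^s (K^{n+1}-K^n)(s-u)\,\tilde Z^{n+1}_u\,du$ carries the favourable sign, what is left is controlled only through $(\tilde\nu^n_s-\tilde\nu^{n+1}_s)^+\le\int_0^s K^n(s-u)\,D_u^-\,du$. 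Your Volterra--Gronwall inequality therefore has the shape
\begin{equation*}
\Eop\bigl[D_t^+\bigr]\;\le\; C\int_0^t \Eop\Bigl[\sqrt{\tilde Z^n_s}\int_0^s K^n(s-u)\,D_u^-\,du\Bigr]\,ds,
\end{equation*}
i.e.\ it bounds $D^+$ by the \emph{negative} part $D^-$, not by $D^+$ itself. Since the assertion to be proved is exactly $D\le 0$, the quantity $D^-=|D|$ is of order one, the right-hand side is not small, and the inequality does not close: one cannot conclude $\Eop[D_t^+]=0$. A first-crossing argument runs into the same wall you yourself identified: at a contact time the larger kernel $K^{n+1}$ and the smaller past path $\tilde Z^{n+1}$ compete inside $\tilde\nu^{n+1}-\tilde\nu^n$, so the drift ordering at contact is indeterminate. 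In addition, your linearisation $\sqrt{\tilde\nu^{n+1}}-\sqrt{\tilde\nu^n}\le(\tilde\nu^{n+1}-\tilde\nu^n)/(2\sqrt{v_0})$ requires $v_0>0$, whereas the model only assumes $v_0\ge0$. To be fair, you have isolated the genuinely delicate step: the comparison theorem the paper cites has a drift hypothesis that is pointwise in the current state, while the ordering $b_{n+1}(t,w)\le b_n(t,w)$ along a \emph{common} path $w$, which is what Lemma~\ref{lem:conv1} supplies, is not by itself sufficient when the path-dependent drift is anti-monotone in the past (the $\rho\le0$ regime); but your proposal does not establish this case either, so the first claimed ordering, and with it the transfer of the $n$-monotonicity to $(\tilde\nu_t)$ for $\rho\le0$, remains unproved in your argument.
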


\begin{proof}
The proof follows the same way as the proof of Theorem 1.1 in \cite{Y79}. Note here that $(\tilde{Z}_t^n)$ can never become zero and that the function $\rho$ in the proof can be chosen as $\rho(x)=\sqrt{x}$.  Due to Lemma \ref{lem:conv1} we have monotonicity with respect to $n$ in the drift term of $(\tilde{Z}_t^n)$.
\end{proof}

{\em Proof of Lemma \ref{lem:weakconv}:} We use Proposition 5.1 in \cite{KP91}. It is first possible to show that $(\tilde{Z}_t^n)$ is relatively compact. According to \cite{KD} Theorem 2.1 we have to show that
\begin{itemize}
\item[(i)] $\lim_{K\to\infty} \sup_n \Pop(|\tilde{Z}_t^n|>K)=0, \forall 0\le t\le T.$
\item[(ii)] $\lim_{h\downarrow 0} \limsup_{n\to\infty} \sup_{t\le T} \Eop[\min\{1, |\tilde{Z}_{t+h}^n-\tilde{Z}_{t}^n|\}]=0.$
\end{itemize}
But this is true since $(\tilde{Z}_t^n)$ can be bounded by $(Z_t)$ in case $\rho\le 0$ and by $(\tilde{Z}_t)$ in case $\rho\ge 0$.
Further  we write 
$$ d\tilde{Z}_t^n = (F^n_1(\tilde{Z}^n), F^n_2(\tilde{Z}^n))d \Big( \begin{array}{c}t\\B_t^Z
\end{array}\Big)$$ 
where 
\begin{eqnarray}
F^n_1(z)_t &=& \kappa(\theta-{z}_t) + \frac{\lambda\gamma\sigma\rho}{1-\gamma} \sqrt{{z}_t} \sqrt{v_0\! +\! \sum_1^n q_i^n \int_0^t e^{-(t-s)x_i}{z}_s^nds}\\
F_1(z)_t &=& \kappa(\theta-{z}_t) + \frac{\lambda\gamma\sigma\rho}{1-\gamma} \sqrt{{z}_t} \sqrt{v_0\! +\! \int_0^\infty \int_0^t e^{-(t-s)x}{z}_sds\mu(dx)} \\
F^n_2(z)_t &=& \sigma\sqrt{z_t}\\
F_2(z)_t &=& \sigma\sqrt{z_t}.
\end{eqnarray}
We next have to show that when $(z^n)\to (z)$ in Skorohod topology, then also $(z^n, F^n(z^n)) \to (z,F(z))$ in Skorohod topology. By since $z$ and thus also $F(z)$ are continuous, this boils down to uniform convergence on compact intervals.  It remains to show that $F^n_1(z^n) \to F_1(z)$ uniformly on compact intervals if $(z^n)\to (z)$. This can be done by looking at the expressions under the square roots:
\begin{eqnarray}
&& \Big| \sum_{i=1}^n q_i^n \int_0^t e^{-(t-s)x_i}{z}_s^nds- \int_0^\infty \int_0^t e^{-(t-s)x}{z}_sds\mu(dx) \Big| \\
&\le & \sum_{i=1}^n q_i^n \int_0^t e^{-(t-s)x_i}| {z}_s^n -z_s| ds+  \int_0^t  \Big|\sum_{i=1}^n q_i^ne^{-(t-s)x_i} -  \int_0^\infty  e^{-(t-s)x}\mu(dx) \Big| {z}_sds\\
&\le& \hspace{-0.35cm} \int_0^\infty \int_0^t e^{-(t-s)x}| {z}_s^n -z_s| ds\mu(dx) +  \int_0^t  \Big|\sum_{i=1}^n q_i^ne^{-(t-s)x_i} -  \int_0^\infty  e^{-(t-s)x}\mu(dx) \Big| {z}_sds
\end{eqnarray}
The last term converges to zero due to Lemma \ref{lem:conv1} and first term because $ |{z}_s^n -z_s| \to 0$ u.o.c.\ by assumption. 

Finally the statement follows from Proposition 5.1 in \cite{KP91}.  \hfill $\square$

{\em Proof of Lemma \ref{lem:conVpi}:} From Theorem \ref{theo:conv} and the definition of the stochastic It\^{o}-integral we obtain since
$$ \lim_{n\to \infty} \Eop \Big[\int_0^T \pi_s^2 (\sqrt{\nu_s}-\sqrt{\nu_s^n})^2ds \Big]=0$$
by monotone convergence that 
$$ \int_0^T \pi_s \sqrt{\nu_s^n}dB_s^S \stackrel{L^2}{\to } \int_0^T \pi_s \sqrt{\nu_s}dB_s^S$$
which implies $L^1$-convergence. Altogether we obtain
\begin{eqnarray*}
\nonumber &&\lim_{n\to \infty}  \Big(\int_0^T  \Big(r+\pi_s \nu_s^n(\lambda-\frac12 \pi_s)\Big)ds + \int_0^T \pi_s \sqrt{\nu_s^n}dB_s^S\Big)\\
&=& \int_0^T  \Big(r+\pi_s \nu_s(\lambda-\frac12 \pi_s)\Big)ds + \lim_{n\to \infty}  \int_0^T \pi_s \sqrt{\nu_s^n}dB_s^S.
\end{eqnarray*}
Since $L^1$ convergence implies by the Skorokhod representation theorem almost sure convergence on a suitable probability space, we get
$$  \lim_{n\to \infty} W_T^{\pi,n}= W_T^\pi.$$
The uniform convergence then implies the statement.  \hfill $\square$

{\em Proof of Lemma \ref{lem:odeconvergence} }
 Let us first write these two differential equations as
 \begin{eqnarray}
 \varphi_t^n(T-t) &=& \eta h^n(T-t)-\kappa\varphi^n(T-t)+\frac12\sigma^2 (\varphi^n)^2(T-t) \\
\varphi_t(T-t) &=& \eta h(T-t)-\kappa\varphi(T-t)+\frac12\sigma^2 \varphi^2(T-t) 
\end{eqnarray} 
with $h^n(T-t)= \int_0^{T-t} \int_0^\infty e^{-xs} \mu^n(dx)ds$ and $h(T-t)=\frac{(T-t)^\alpha}{\Gamma(\alpha+1)} $. Due to Lemma \ref{lem:conv1} and the previous calculation we know that $0\le h^n(t)\le h(t)$ and $\lim_{n\to\infty}h^n(t) = h(t)$. Now let $\varphi^0$ be a solution of 
$$ \varphi_t^0(T-t) = -\kappa\varphi^0(T-t)+\frac12\sigma^2 (\varphi^0)^2(T-t)$$
with $\varphi^0(0)=0$.Then by a classical comparison theorem (see e.g. Proposition 5.2.18 in \cite{KS}) we obtain that $\varphi^0\le \varphi^n\le \varphi$ pointwise for all $n$, i.e. $\varphi^n$ is bounded for fixed $t$. Now consider $\xi^n(t) := \varphi(t)-\varphi^n(t)$. It solves the differential equation
$$ \xi^n_t(t) = \eta\big( h(t)-h^n(t)\big)-\kappa\xi^n(t)+\frac12\sigma^2\xi^n(t)\big( \varphi_n(t)+\varphi(t)\big).$$
and is hence given by
$$ \xi^n(t) = \eta\exp\Big( \int_0^t \big(\frac12\sigma^2(\varphi^n(u)+\varphi(u))-\kappa\big)du\Big)\Big( \int_0^t e^{\int_0^u \big(-\frac12\sigma^2 (\varphi^n(s)+\varphi(s))+\kappa\big) ds}\big(h(u)-h^n(u)\big)du\Big).$$
Since $\varphi^n$ is bounded and $\lim_{n\to\infty}h^n(t) = h(t)$ we obtain by dominated convergence that $\xi^n(t) \to 0$ for $n\to\infty$ which implies the stated convergence.
  \hfill $\square$

{\em Proof of Theorem \ref{theo:limitcase} }
Since $\pi^*\equiv \frac{\lambda}{1-\gamma}$ is optimal for the $n$-th approximation by Theorem \ref{theo:valuefinite} we obtain
\begin{equation}\label{eq:ineq}
\Eop_{w_0,v_0,z_0}\Big[\frac1\gamma (W_T^{\pi,n})^\gamma\Big] \le \Eop_{w_0,v_0,z_0}\Big[\frac1\gamma (W_T^{*,n})^\gamma\Big]
\end{equation}
where we write $W_T^{*,n}$ instead of $W_T^{\pi^*,n}$. Let us now consider the expectation on the right-hand side. Inserting the optimal portfolio strategy $\pi^*$ and using the Feynman-Kac result in Theorem \ref{theo:FK} yields
\begin{eqnarray*}
 \Eop_{w_0,v_0,z_0}\Big[\frac1\gamma (W_T^{*,n})^\gamma\Big] 
= \frac1\gamma w_0^\gamma  \exp(\gamma rT)\Eop_{w_0,v_0,z_0}\Big[ \exp\Big( \frac{\lambda^2\gamma}{2(1-\gamma)}  \int_0^T  \nu_s^n ds\Big) \Big].
\end{eqnarray*}
By monotone convergence we obtain from Theorem \ref{theo:conv} that
$$\lim_{n\to \infty} \Eop_{w_0,v_0,z_0}\Big[ \exp\Big( \frac{\lambda^2\gamma}{2(1-\gamma)}  \int_0^T  \nu_s^n ds\Big) \Big] = \Eop_{w_0,v_0,z_0}\Big[ \exp\Big( \frac{\lambda^2\gamma}{2(1-\gamma)}  \int_0^T  \nu_s ds\Big) \Big] .$$
On the other hand we know that
$$ \exp(\gamma rT)\Eop_{w_0,v_0,z_0}\Big[ \exp\Big( \frac{\lambda^2\gamma}{2(1-\gamma)}  \int_0^T  \nu_s^n ds\Big) \Big] = \exp\Big(\phi^n(T)+\varphi^n(T)z_0 \Big)
$$
where $\varphi^n$ and $\phi^n$ are solutions of \eqref{riccati1} and \eqref{ode2}.  From Lemma \ref{lem:odeconvergence} we know that the right-hand side converges to $$\exp\Big(\phi(T)+\varphi(T)z_0 \Big)
$$
where $\varphi$ and $\phi$ are solutions of \eqref{riccati2} and \eqref{ode3}. Thus we get that
\begin{eqnarray*}
\lim_{n\to\infty }\Eop_{w_0,v_0,z_0}\Big[\frac1\gamma (W_T^{*,n})^\gamma\Big] &=& \frac1\gamma w_0^\gamma  \exp(\gamma rT)\Eop_{w_0,v_0,z_0}\Big[ \exp\Big( \frac{\lambda^2\gamma}{2(1-\gamma)}  \int_0^T  \nu_s ds\Big) \Big]\\ &=& \frac1\gamma w_0^\gamma \exp\Big(\phi(T)+\varphi(T)z_0 \Big).
\end{eqnarray*}    Now we have to consider the left-hand side of \eqref{eq:ineq}.

From the proof of Lemma  \ref{lem:conv1} we know that for any $\pi$
$$  \lim_{n\to \infty} W_T^{\pi,n}= W_T^\pi.$$
Now from Fatou's Lemma we finally obtain (in case $\gamma\ge 0$ we use $0$ as a lower bound, in case $\gamma<0$ we have assume uniform integrability)
$$ \liminf_{n\to\infty} \Eop_{w_0,v_0,z_0}\Big[\frac1\gamma (W_T^{\pi,n})^\gamma\Big] \ge \Eop_{w_0,v_0,z_0}\Big[\liminf_{n\to\infty} \frac1\gamma (W_T^{\pi,n})^\gamma\Big] = \Eop_{w_0,v_0,z_0}\Big[\frac1\gamma (W_T^{\pi})^\gamma\Big]  .$$
This implies now that for all admissible policies $(\pi_t)$
\begin{eqnarray*}
\Eop_{w_0,v_0,z_0}\Big[\frac1\gamma (W_T^{\pi})^\gamma\Big] &\le& 
\frac1\gamma w_0^\gamma  \exp(\gamma rT)\Eop_{w_0,v_0,z_0}\Big[ \exp\Big( \frac{\lambda^2\gamma}{2(1-\gamma)}  \int_0^T  \nu_sds\Big) \Big].
\end{eqnarray*}
What is left to show is that the upper bound can be obtained. 
Inserting the candidate for the optimal portfolio strategy $\pi^*\equiv \frac{\lambda}{1-\gamma}$ in the SDE of the wealth equation  yields
\begin{small}
\begin{eqnarray*}
 \Eop_{w_0,v_0,z_0}\Big[\frac1\gamma (W_T^{\pi^{\star}})^\gamma\Big] &=& 
\frac1\gamma w_0^\gamma \Eop_{w_0,v_0,z_0}\Big[\exp\Big(\int_0^T  \gamma\Big(r+\frac{\lambda^2(1-2\gamma)}{2(1-\gamma)^2} \nu_s\Big)ds + \int_0^T \frac{\lambda\gamma}{1-\gamma}\sqrt{\nu_s}dB_s^S\Big)\Big]\\
&=& \frac1\gamma w_0^\gamma \Eop_{w_0,v_0,z_0}\Big[\exp\Big(\gamma rT + \frac{\lambda^2\gamma}{2(1-\gamma)}  \int_0^T  \nu_s ds\Big) M_T\Big]
\end{eqnarray*}
\end{small}
with 
$$M_T = \exp\Big( -\int_0^T \frac12\frac{\lambda^2\gamma^2}{(1-\gamma)^2}\nu_sds+ \int_0^T \frac{\lambda\gamma}{1-\gamma}\sqrt{\nu_s}dB_s^S\Big).$$
Let $(\FC_t^Z)$ be the filtration which is generated by $(B_t^Z)$ only. Then since $(B_t^Z)$ and $(B_t^S)$ are uncorrelated we obtain by conditioning and since $\Eop[M_T]=1$ (this follows like in Example 4 in \cite{LS01}) that
\begin{small}
 \begin{eqnarray*}
 \Eop_{w_0,v_0,z_0}\Big[\frac1\gamma (W_T^{\pi^{\star}})^\gamma\Big] &=& 
\frac1\gamma w_0^\gamma  \Eop_{w_0,v_0,z_0}\Big[ \Eop_{w_0,v_0,z_0}\Big[\exp\Big(\gamma rT + \frac{\lambda^2\gamma}{2(1-\gamma)}  \int_0^T  \nu_s ds\Big) M_T\Big| \FC_T^Z\Big]\Big]\\
&=& \frac1\gamma w_0^\gamma  \exp(\gamma rT)\Eop_{w_0,v_0,z_0}\Big[ \exp\Big( \frac{\lambda^2\gamma}{2(1-\gamma)}  \int_0^T  \nu_s ds\Big) \Big] 
\end{eqnarray*}
\end{small}
which implies the statement.
 \hfill $\square$

{\em Proof of Theorem \ref{theo:limitcaseneg} }
We consider the approximation first. For an arbitrary admissible portfolio strategy $\pi$ we obtain
\begin{eqnarray}\label{eq:comp_rough}
  W_T^{\pi,n} &=& w_0 \exp\Big(\int_0^T  \Big(r+\pi_s a(\nu_s^n)(\lambda-\frac12 \pi_s)\Big)ds + \int_0^T \pi_s \sqrt{a(\nu_s^n)}dB_s^S\Big).
\end{eqnarray}
Note that $\pi$ is $(\FC_t)-$adapted and thus independent of $n$.
Since $\pi^*\equiv \frac{\lambda}{1-\gamma}$ is optimal for the $n$-th approximation by Theorem \ref{theo:valuefiniteneg} we obtain
\begin{equation}\label{eq:ineq_rough}
\Eop_{w_0,v_0,z_0}\Big[\frac1\gamma (W_T^{\pi,n})^\gamma\Big] \le \Eop_{w_0,v_0,z_0}\Big[\frac1\gamma (W_T^{*,n})^\gamma\Big]
\end{equation}
where we write $W_T^{*,n}$ instead of $W_T^{\pi^*,n}$.  Inserting the optimal portfolio strategy $\pi^*$ and using the Feynman-Kac result in Theorem \ref{theo:FKneg} yields for the rigth-hand side
\begin{eqnarray}
 \Eop_{w_0,v_0,z_0}\Big[\frac1\gamma (W_T^{*,n})^\gamma\Big] 
= \frac1\gamma w_0^\gamma  \exp(\gamma rT)\Eop_{w_0,v_0,z_0}\Big[ \exp\Big( \frac{\lambda^2\gamma}{2(1-\gamma)}  \int_0^T  a(\nu_s^n)ds\Big) \Big].
\end{eqnarray}
By monotone convergence we then get
\begin{equation}
\lim_{n\to \infty} \Eop_{w_0,v_0,z_0}\Big[ \exp\Big( \frac{\lambda^2\gamma}{2(1-\gamma)}  \int_0^T  a(\nu_s^n) ds\Big) \Big] = \Eop_{w_0,v_0,z_0}\Big[ \exp\Big( \frac{\lambda^2\gamma}{2(1-\gamma)}  \int_0^T  a(\nu_s) ds\Big) \Big] .
\end{equation}
From  the definition of the stochastic It\^{o}-integral we obtain since
\begin{equation}
 \lim_{n\to \infty} \Eop \Big[\int_0^T \pi_s^2 (\sqrt{a(\nu_s)}-\sqrt{a(\nu_s^n)})^2ds \Big]=0
\end{equation}
by monotone convergence that 
\begin{equation} \int_0^T \pi_s \sqrt{a(\nu_s^n)}dB_s^S \stackrel{L^2}{\to } \int_0^T \pi_s \sqrt{a(\nu_s)}dB_s^S\end{equation}
which implies $L^1$-convergence. Altogether we obtain
\begin{eqnarray}
\nonumber &&\lim_{n\to \infty}  \Big(\int_0^T  \Big(r+\pi_s a(\nu_s^n)(\lambda-\frac12 \pi_s)\Big)ds + \int_0^T \pi_s \sqrt{a(\nu_s^n)}dB_s^S\Big)\\
&=& \int_0^T  \Big(r+\pi_s a(\nu_s)(\lambda-\frac12 \pi_s)\Big)ds + \lim_{n\to \infty}  \int_0^T \pi_s \sqrt{a(\nu_s^n)}dB_s^S.
\end{eqnarray}
Since $L^1$ convergence implies by the Skorokhod representation theorem almost sure convergence on a suitable probability space, we get
\begin{equation}
 \lim_{n\to \infty} W_T^{\pi,n}= W_T^\pi.
\end{equation}
Now from Fatou's Lemma we finally obtain
\begin{equation}
 \liminf_{n\to\infty} \Eop_{w_0,v_0,z_0}\Big[\frac1\gamma (W_T^{\pi,n})^\gamma\Big] \ge \Eop_{w_0,v_0,z_0}\Big[\liminf_{n\to\infty} \frac1\gamma (W_T^{\pi,n})^\gamma\Big] = \Eop_{w_0,v_0,z_0}\Big[\frac1\gamma (W_T^{\pi})^\gamma\Big] .
\end{equation}
This implies now that for all admissible policies $(\pi_t)$
\begin{eqnarray}
\Eop_{w_0,v_0,z_0}\Big[\frac1\gamma (W_T^{\pi})^\gamma\Big] &\le& 
\frac1\gamma w_0^\gamma  \exp(\gamma rT)\Eop_{w_0,v_0,z_0}\Big[ \exp\Big( \frac{\lambda^2\gamma}{2(1-\gamma)}  \int_0^T  a(\nu_s)ds\Big) \Big].
\end{eqnarray}
Finally, the value on the right-hand side is exactly the value function of the portfolio strategy $\pi_t^*\equiv \frac{\lambda}{1-\gamma}$.
 \hfill $\square$

\section*{Acknowledgements}
S. Desmettre is grateful for support within the DFG-Research Training Group 1932 \textit{Stochastic Models for Innovations in the Engineering Sciences}. S. Desmettre is also supported by the Austrian Science Fund (FWF) project F5508-N26, which is part of the Special Research Program \textit{Quasi-Monte Carlo Methods: Theory and Applications”}. Moreover, the authors gratefully acknowledge support from NAWI Graz. 


\bibliographystyle{abbrv}

\end{document}